\newtheorem{theorem}{Theorem}
\theoremstyle{definition}
\newtheorem{remark}{Remark}
\newaliascnt{lemma}{theorem}  
\newtheorem{lemma}[lemma]{Lemma}  
\newaliascnt{definition}{theorem}  
\newtheorem{definition}[definition]{Definition}
\newaliascnt{proposition}{theorem}  
\theoremstyle{theorem}
\newtheorem{proposition}[proposition]{Proposition}
\newcommand{\X}{\mathcal{X}}
\newcommand{\C}{\mathbb{C}}
\newcommand{\p}{\mathbb{P}}
\newcommand{\Tak}{\operatorname{Tak}}
\newcommand{\Ok}{\operatorname{Ok}}
\newcommand{\pain}[1]{\text{P}_{\mathrm{#1}}}
\begin{document}
\title{Takasaki's rational fourth Painlev\'e-Calogero system and geometric regularisability of algebro-Painlev\'e equations}
\author{Galina Filipuk\footnote{Institute of Mathematics, University of Warsaw, ul. Banacha 2, 02-097, 
Warsaw, Poland.
Email: filipuk@mimuw.edu.pl}, 
Alexander Stokes\footnote{Graduate School of Mathematical Sciences, The University of Tokyo, 3-8-1 Komaba Meguro-ku Tokyo 153--8914, Japan.
Email: stokes@ms.u-tokyo.ac.jp}}
\maketitle

\begin{abstract}
We study a Hamiltonian system without the Painlev\'e property and show that it admits a kind of regularisation on a bundle of rational surfaces with certain divisors removed, generalising Okamoto's spaces of initial conditions for the Painlev\'e differential equations.
The system in question was obtained by Takasaki as part of the Painlev\'e-Calogero correspondence and possesses the algebro-Painlevé property, being related by an algebraic transformation to the fourth Painlev\'e equation. 
We provide an atlas for the bundle of surfaces in which the system has a global Hamiltonian structure, with all Hamiltonian functions being polynomial in coordinates just as in the case of Okamoto's spaces.
We compare the surface associated with the Takasaki system with that of the fourth Painlev\'e equation, showing that they are related by a combination of blowdowns and a branched double cover, under which we lift the birational B\"acklund transformation symmetries of the fourth Painlev\'e equation to algebraic ones of the Takasaki system, including a discrete Painlev\'e equation.
We also discuss and provide more examples in support of the idea that there is a connection between the algebro-Painlev\'e property and similar notions of regularisability, in an analogous way to how regular initial value problems for the Painlev\'e equations everywhere on Okamoto's spaces are related to the Painlev\'e property.
\end{abstract}

MSC classification: 34M55

Key words: Painlev\'e equations, space of initial conditions, rational surface, algebro-Painlev\'e property, Painlev\'e-Calogero correspondence.

\tableofcontents 

\section{Introduction}

It is well-known that solutions of a linear ordinary differential equation (ODE) can have only fixed singularities  in the complex plane, i.e. those whose locations are determined by the coefficients of the equation.  
For a nonlinear differential equation, solutions may have singularities whose locations depend on initial conditions and are, therefore, called movable.  
The Painlev\'e property for differential equations requires that all solutions are single-valued about all movable singularities, which is related to the ability to meaningfully define special functions in terms of the general solutions to such equations. 
This idea was pursued in the early 1900's by P. Painlev\'e and his student B. Gambier, as well as R. Fuchs, and they considered a large class of nonlinear second-order ODEs and classified those with this property into 50 classes. 
All equations could be solved in terms of known functions or in terms of solutions of what are now known as the six Painlev\'e equations, which we will denote by $\pain{J}$, $\mathrm{J} = \mathrm{I}, \dots, \mathrm{VI}$. 
These equations have remarkable properties and have found applications in many areas of mathematics and mathematical physics (see, for example, \cite{GLS} for further references). 

In a seminal paper \cite{Okamoto french}, K. Okamoto showed that each Painlev\'e equation is regularised on a complex manifold providing an extended phase space for an equivalent first-order system, constructed through a combination of compactification, blowups and removal of certain curves.
This associates a special kind of rational surface to the differential system, the geometry of which explains many of the properties of the equation.
Okamoto's paper was built on in a series of papers by Takano and collaborators \cite{ST97, MMT99, M97} proving that the manifolds for $\pain{II}$-$\pain{VI}$ essentially determine the equations (the case of $\pain{I}$ was shown later independently by Chiba \cite{chiba} and Iwasaki-Okada \cite{orbifold}), giving rise to the idea that geometry tells one everything about the Painlev\'e equations.
This idea, with its origins in Okamoto's work, has gone on to inform much of the recent study of Painlev\'e equations,
especially their discrete versions \cite{SAKAI2001}, via the associated geometric objects (see the important survey \cite{KNY} and references within).

The regularisation on Okamoto's space relies heavily on the Painlev\'e property, but it is natural to ask whether equations with weaker restrictions on singularities can still be associated to rational surfaces.
The main consideration to be made in addressing such a question is of what the appropriate relaxation of the requirement of regularisation is that that would allow such an association of surface to equation, and how it should correspond to the nature of singularities of solutions.
There have been some explorations in this direction, e.g. by Filipuk and Kecker \cite{FK} for a class of equations with the quasi-Painlev\'e property, 
but in this paper we consider a stronger requirement that solutions are not just locally but globally finitely branched in the sense of the algebro-Painlev\'e property, i.e. that all solutions are algebraic over the field of meromorphic functions on the universal cover of $\C \backslash F$, where $F$ is the finite set of fixed singularities of the equation.
We introduce a notion of global regularisability of a system of differential equations that reflects this special singularity structure in a way analogous to how regularisation of the Painlev\'e equations on Okamoto's spaces reflects the Painlev\'e property.

The first part of this paper is concerned with the regularisation of a non-autonomous rational Hamiltonian system obtained by Takasaki \cite{Takasaki} as part of his extension of the Painlev\'e-Calogero correspondence beyond the celebrated elliptic form of the sixth Painlev\'e equation due to Manin \cite{MANIN} (which built on the works of R. Fuchs \cite{FUCHS05} and Painlev\'e \cite{PAINLEVE} in which similar forms had already appeared). 
Though Takasaki's system does not possess the Painlev\'e property, we show that it is still possible to construct a family of rational surfaces which provide a manifold on which the system is everywhere either regular or regularisable by certain algebraic transformations.
This regularisability reflects the fact that the system is mapped by a rational (but not birational) transformation to the fourth Painlev\'e equation, which we describe on the level of the associated surfaces.
In particular this realises Okamoto's space for the fourth Painlev\'e equation as a quotient, under which we describe the lifts of symmetries of the Okamoto surface which provide B\"acklund transformations of $\pain{IV}$.
We also provide a global Hamiltonian structure of Takasaki's system on this manifold and compare this to that for $\pain{IV}$ on Okamoto's space due to Matano, Matumiya and Takano \cite{MMT99} (see also \cite{ST97}).

The study of Takasaki's system leads us to the main idea of the second part of the paper, namely that this is a special case of a kind of global regularisability on bundles of rational surfaces that is closely related to the algebro-Painlev\'e property.
We introduce such a notion of regularisability and demonstrate it in a suite of examples of algebro-Painlev\'e equations including an example obtained by Halburd and Kecker in \cite{HK} by Nevanlinna-theoretic methods, and show this regularisability can also function as a kind of algebro-Painlev\'e test. 

\subsection{Okamoto's space for the fourth Painlev\'e equation $\pain{IV}$} \label{section1.1}

For each of Painlev\'e equations Okamoto provided an equivalent first-order non-autonomous Hamiltonian system \cite{OkI}, and constructed an extended phase space on which the system is everywhere regular. 
We recall this construction in the case of the fourth Painlev\'e equation, which forms the motivation for  the present study and will be used in the statement of some results (for a detailed account including exposition of the relevant background material we refer the reader to  \cite{KNY}).
The fourth Painlev\'{e} equation in its usual scalar form is given by
\begin{equation} \label{P4scalar}
\text{P}_{\mathrm{IV}}:\quad \frac{d^2 \lambda}{dt^2} = \frac{1}{2\lambda}\left(\frac{d \lambda}{dt}\right)^2 + \frac{3}{2}\lambda^3 + 4t\lambda^2 + 2(t^2 - \alpha)\lambda + \frac{\beta}{\lambda},
\end{equation}
where $\alpha$ and $\beta$ are arbitrary complex parameters. 
The Okamoto Hamiltonian form of the equation \eqref{P4scalar} is given by
\begin{equation} \label{hamOkP4}
\begin{gathered}
\begin{aligned}
\frac{df}{dt} &= \frac{\partial H^{\Ok}}{\partial g} = 4 f g - f^2 - 2 t f - 2 a_1, \\
 \frac{dg}{dt} &= -\frac{\partial H^{\Ok}}{\partial f} = 2 f g -2 g^2 + 2 t g + a_2,
\end{aligned}
\\
H^{\Ok}(f,g,t) = 2 f g^2 - \left( f^2 + 2t f + 2 a_1 \right) g - a_2 f,
\end{gathered}
\end{equation}
where $a_1, a_2$ are complex parameters which we introduce here for convenience, corresponding to the root variables in the Sakai theory of rational surfaces associated with Painlev\'e equations \cite{SAKAI2001}. 
The reason that the first order system \eqref{hamOkP4} is regarded as a form of $\pain{IV}$ is that by eliminating $g$ we obtain the equation \eqref{P4scalar} for the function $f(t)$ with parameters 
\begin{equation}
\alpha = 1- a_1 - 2 a_2, \quad \beta = -2 a_1^2.
\end{equation}

As a non-autonomous system of two first-order ordinary differential equations, the phase space for \eqref{hamOkP4} is taken initially to be the product $\C^{2}_{f,g} \times \C_t$ (where subscripts indicate coordinates) as a trivial complex analytic fibre bundle over the independent variable space $\C_t$. To construct Okamoto's space we first compactify the fibres of the phase space to $\mathbb{P}^1 \times \mathbb{P}^1$ then perform a sequence of eight blowups of the fibre over $t$ in order to resolve the indeterminacies of the system, the details of which are deferred to \autoref{appendixA}.
We denote the points by $p_1,\dots, p_8$, with the exceptional divisor arising from the blowup of $p_i$ being denoted by $E_i$. 
Some of the points are infinitely near, i.e. there are cases when $p_{j+1}$ lies on the exceptional divisor $E_j$.
Through this procedure we obtain a rational surface $\X^{\Ok}_t$, which we give a schematic depiction of in \autoref{fig:surfaceokamoto}.

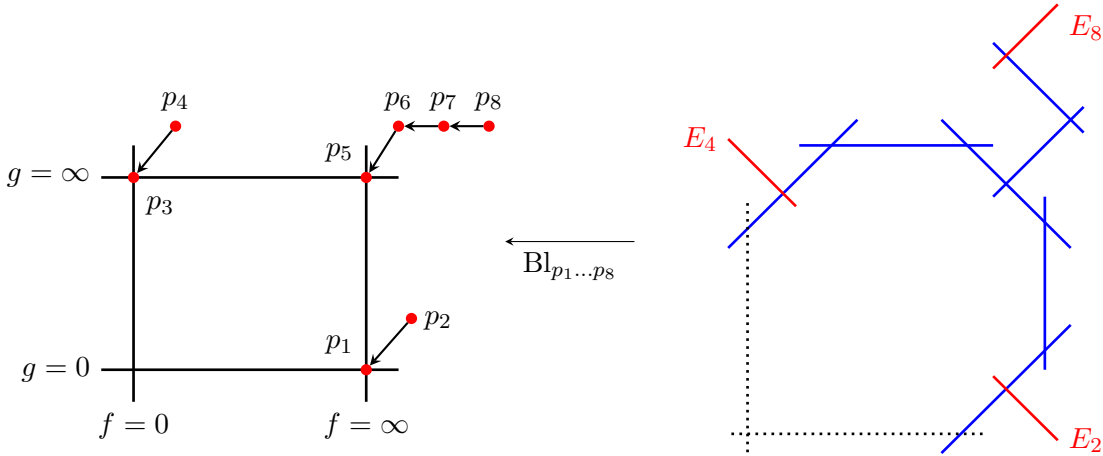
\begin{figure}[h]
	\begin{tikzpicture}[scale=.85,>=stealth,basept/.style={circle, draw=red!100, fill=red!100, thick, inner sep=0pt,minimum size=1.2mm}]
		\begin{scope}[xshift = -4cm]
			\draw [black, line width = 1pt] 	(4.1,2.5) 	-- (-0.5,2.5)	node [left]  {$g=\infty$} node[pos=0, right] {$$};
			\draw [black, line width = 1pt] 	(0,3) -- (0,-1)			node [below] {$f=0$}  node[pos=0, above, xshift=-7pt] {} ;
			\draw [black, line width = 1pt] 	(3.6,3) -- (3.6,-1)		node [below]  {$f=\infty$} node[pos=0, above, xshift=7pt] {};
			\draw [black, line width = 1pt] 	(4.1,-.5) 	-- (-0.5,-0.5)	node [left]  {$g=0$} node[pos=0, right] {$$};

			\node (p1) at (3.6,-.5) [basept,label={[xshift=-10pt, yshift = 0 pt] $p_{1}$}] {};
			\node (p2) at (4.3,0.3) [basept,label={[xshift=10pt, yshift = -10 pt] $p_{2}$}] {};
			\node (p3) at (0,2.5) [basept,label={[yshift=-20pt, xshift=+10pt] $p_{3}$}] {};
			\node (p4) at (0.65,3.3) [basept,label={[xshift=0pt, yshift = 0 pt] $p_{4}$}] {};
			\node (p5) at (3.6,2.5) [basept,label={[xshift=-10pt,yshift=0pt] $p_{5}$}] {};
			\node (p6) at (4.1,3.3) [basept,label={[xshift=0pt, yshift = 0 pt] $p_{6}$}] {};
			\node (p7) at (4.8,3.3) [basept,label={[xshift=0pt, yshift = 0 pt] $p_{7}$}] {};
			\node (p8) at (5.5,3.3) [basept,label={[xshift=0pt, yshift = 0 pt] $p_{8}$}] {};
			\draw [line width = 0.8pt, ->] (p2) -- (p1);
			\draw [line width = 0.8pt, ->] (p4) -- (p3);
			\draw [line width = 0.8pt, ->] (p6) -- (p5);
			\draw [line width = 0.8pt, ->] (p7) -- (p6);
			\draw [line width = 0.8pt, ->] (p8) -- (p7);
		\end{scope}
	
		\draw [->] (3.75,1.5)--(1.75,1.5) node[pos=0.5, below] {$\text{Bl}_{p_1\dots p_8}$};
	
		\begin{scope}[xshift = 6.5cm, yshift= .5cm]
			\draw [blue, line width = 1pt] 	(2.8,2.5) 	-- (-0.2,2.5)	node [pos = .5, below]  {
			} node[pos=0, right] {};
			\draw [blue, line width = 1pt] 	(3.6,1.7) -- (3.6,-1)		node [pos = .5, left]  {
			} node[pos=0, above, xshift=7pt] {};

			\draw [blue, line width = 1pt] 	(-1.3,0.9) 	-- (.7, 2.9)	 node [left] {} node[pos = 0, left] {
			};
			\draw [red, line width = 1pt] 	(-.25,1.55) 	-- (-1.3,2.6)	 node [left] {} node[pos=1, left] {$E_4$};
			
			\draw [blue, line width = 1pt] 	(4,-.3) node[left]{}	-- (2,-2.3)	 node [below left] {} node[below] {
			};

			\draw [red, line width = 1pt] 	(2.8,-1.1) 	-- (3.8,-2.1)	 node [left] {} node[pos=1, right] {$E_2$};
			
			\draw [blue, line width = 1pt]	(4,.9) -- (2,2.9) node[ pos=0, right] {
			};
			\draw [blue, line width = 1pt]	(2.8,1.7) -- (4.2,3.1) node [right] {
			} ;
			\draw [blue, line width = 1pt]	(4.2,2.7) -- (2.8,4.1) node [above] {
			};
			\draw [red, line width = 1pt]	(2.8,3.7) -- (3.8,4.7)  node [below right] {$E_8$};

			\draw [black, dotted, line width = 1pt]  (-1,1.6) -- (-1,-2.3);
			\draw [black, dotted, line width = 1pt]  (-1.25,-2) 	-- (2.7,-2);

%

		\end{scope}
	\end{tikzpicture}
	\caption{Okamoto surface $\X^{\Ok}_t$ for the fourth Painlev\'e equation}
	\label{fig:surfaceokamoto}
\end{figure}
Next it is necessary to remove the support $D^{\Ok}$ of the inaccessible divisors, on which the vector field defining the system is vertical with respect to the bundle structure over $\C_t$, which are the irreducible components of the unique effective anticanonical divisor of $\X^{\Ok}_t$ and are indicated in blue in \autoref{fig:surfaceokamoto}.
After this we denote the resulting complex analytic fibre bundle by
\begin{equation}
\begin{gathered}
\pi : E^{\Ok} \rightarrow \C_t, \\
\pi^{-1}(t) = E^{\Ok}_t = \X^{\Ok}_t - D^{\Ok}.
\end{gathered}
\end{equation}
To work with the differential system on $E^{\Ok}$ we use an atlas due to Matano, Matumiya and Takano \cite{MMT99}, the construction of which we also recall in \autoref{appendixA}. 
This consists of four charts: one coming from the original $f,g$ coordinates, as well as charts to cover the parts of $E_2, E_4, E_8$ away from the anticanonical divisor, indicated in red in \autoref{fig:surfaceokamoto}, on which the system is regular:
\begin{equation} \label{atlasOk}
E^{\Ok} = \C^3_{f,g,t} \cup \C^3_{x_1,y_1,t} \cup \C^3_{x_2,y_2,t} \cup \C^3_{x_3,y_3,t},
\end{equation}
with gluing defined by 
\begin{equation} \label{gluingOk}
\begin{gathered}
\frac{1}{f} = x_1, \quad g = x_1 \left( - a_2 + y_1 x_1\right), \\
f = x_2( a_1 + x_2 y_2), \quad \frac{1}{g} = x_2, \\
\frac{1}{f} = y_3, \quad \frac{1}{g} = \frac{y_3}{ \frac{1}{2} + y_3 \left( t + y_3 \left( a_1 + a_2 - 1 + x_3 y_3 \right) \right)} ,
\end{gathered}
\end{equation}
so the parts of $E_2$, $E_4$, and $E_8$ away from the inaccessible divisors are covered by charts $(x_1,y_1)$, $(x_2, y_2)$, and $(x_3,y_3)$ respectively.

The system \eqref{hamOkP4} extended to $E^{\Ok}$ via this atlas possesses a global holomorphic Hamiltonian structure.
As the system is non-autonomous and the gluing \eqref{gluingOk} includes transition functions which are $t$-dependent, such a structure is provided by a symplectic form $\omega^{\Ok}$ on the fibre over $t$ along with a collection of Hamiltonian functions, one in each chart. 
In particular the atlas is required to be such that the symplectic form written in coordinates is independent of $t$. That is, each transition function
\begin{equation*}
\varphi : \mathbb{C}^3 \ni (x,y,t) \mapsto \left(X(x,y,t),Y(x,y,t),t \right) \in \mathbb{C}^3,
\end{equation*}
with restriction to the fibre over $t$ denoted
\begin{equation*}
\varphi_t : \mathbb{C}^2 \ni (x,y) \mapsto \left(X(x,y;t),Y(x,y;t) \right) \in \mathbb{C}^2,
\end{equation*}
is required to satisfy 
\begin{equation} \label{symplecticcond}
F(x,y) dx \wedge dy = \varphi_t^* G(X,Y) dX \wedge dY,
\end{equation}
for some $t$-independent functions $F, G$, where $d$ denotes the exterior derivative on $\C^2$ and so $t$ is treated as a constant. 
With such an atlas, the following standard fact ensures that a Hamiltonian structure for the differential equation in one chart extends to the whole space $E^{\Ok}$.
\begin{lemma} \label{symplecticlemma}
With $\varphi$ as above satisfying the condition \eqref{symplecticcond}, then given $H(x,y,t)$ there exists $K(X,Y,t)$ (unique modulo functions of $t$) such that 
\begin{equation}
 F(x,y) dy \wedge dx - dH \wedge dt = \varphi^* \left( G(X,Y) dY \wedge dX - dK \wedge dt \right),
\end{equation}
where $d$ is the exterior derivative on $\C^3$ so $t$ is treated as a variable.
Further, the Hamiltonian system 
\begin{equation}
F(x,y) \frac{dx}{dt} = \frac{\partial H}{\partial y}, \qquad F(x,y) \frac{dy}{dt} = - \frac{\partial H}{\partial x}
\end{equation} 
is transformed under $\varphi$ to 
\begin{equation}
G(X,Y) \frac{dX}{dt} = \frac{\partial K}{\partial Y}, \qquad G(X,Y) \frac{dY}{dt} = - \frac{\partial K}{\partial X}.
\end{equation} 
\end{lemma}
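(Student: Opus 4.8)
Throughout, the delicate point is that the symplectic condition \eqref{symplecticcond} is imposed \emph{fibrewise} (with $t$ frozen) whereas the conclusion lives on $\C^3$ (with $t$ a genuine variable), so I would first fix notation for the two exterior derivatives. Write $d$ for the derivative on $\C^3$ and $d_{\mathrm v}$ for the fibrewise one, so that $d\phi = d_{\mathrm v}\phi + \partial_t\phi\,dt$ for a function $\phi$, and $d(\eta\wedge dt) = (d_{\mathrm v}\eta)\wedge dt$ for a $1$-form $\eta$ involving no $dt$. The plan is to read both sides as $2$-forms on $\C^3$ and isolate the single place where a choice of $K$ enters. Using $\varphi^* dX = d_{\mathrm v}X + \partial_t X\,dt$ (and similarly for $Y$) together with $\varphi^* dt = dt$, a direct expansion gives
\[
\varphi^*\!\big(G\,dX\wedge dY\big) = F\,dx\wedge dy \;-\; \beta\wedge dt,\qquad \beta := G\big(\partial_t X\, d_{\mathrm v}Y - \partial_t Y\, d_{\mathrm v}X\big),
\]
where the fibre part collapses to $F\,dx\wedge dy$ \emph{precisely} by the Jacobian form of \eqref{symplecticcond}, and likewise $\varphi^*(dK\wedge dt) = d_{\mathrm v}(K\circ\varphi)\wedge dt$. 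Since $dY\wedge dX = -dX\wedge dY$, the claimed identity therefore reduces, after the fibre terms cancel automatically, to the single fibrewise equation $d_{\mathrm v}(K\circ\varphi) = d_{\mathrm v}H + \beta$.

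The crux, which I expect to be the main obstacle, is the solvability of this equation: one must exhibit $K\circ\varphi$ as a fibrewise primitive of the right-hand side. Since $d_{\mathrm v}H$ is already exact, it suffices to show $d_{\mathrm v}\beta = 0$, and I would obtain this for free from closedness rather than by a brute-force computation. As $G$ depends only on $(X,Y)$ and $F$ only on $(x,y)$, both $G\,dX\wedge dY$ and $F\,dx\wedge dy$ are closed on $\C^3$, hence so is their difference $\varphi^*(G\,dX\wedge dY) - F\,dx\wedge dy = -\beta\wedge dt$; applying $d$ and using $d(\beta\wedge dt) = (d_{\mathrm v}\beta)\wedge dt$ forces $d_{\mathrm v}\beta = 0$. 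Thus $d_{\mathrm v}H + \beta$ is a closed fibrewise $1$-form, and since the coordinate domains are simply connected subsets of $\C^2$ the Poincar\'e lemma produces a function $\Phi(x,y,t)$, unique up to an additive function of $t$, with $d_{\mathrm v}\Phi = d_{\mathrm v}H + \beta$. Setting $K := \Phi\circ\varphi^{-1}$ (using that the transition map $\varphi$ is biholomorphic) gives the required Hamiltonian, and the ambiguity in $\Phi$ is exactly the stated ambiguity in $K$ modulo functions of $t$.

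For the second assertion I would argue geometrically. A short computation shows that the system $F\,\tfrac{dx}{dt} = \partial H/\partial y$, $F\,\tfrac{dy}{dt} = -\partial H/\partial x$ is equivalent to $\iota_V\Omega = 0$, where $\Omega := F\,dy\wedge dx - dH\wedge dt$ and $V := \partial_t + \tfrac{dx}{dt}\partial_x + \tfrac{dy}{dt}\partial_y$ is the vector field whose integral curves are the solution trajectories (the coefficient of $dt$ in $\iota_V\Omega$ vanishes automatically once the $dx$ and $dy$ coefficients do). Because $\varphi$ fixes $t$, its pushforward is $\varphi_* V = \partial_t + \tfrac{dX}{dt}\partial_X + \tfrac{dY}{dt}\partial_Y$, with $\tfrac{dX}{dt} = V(X)$ and $\tfrac{dY}{dt} = V(Y)$ the derivatives along the image curve. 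The first part gives $\Omega = \varphi^*\widetilde\Omega$ for $\widetilde\Omega := G\,dY\wedge dX - dK\wedge dt$, so naturality of interior multiplication under the diffeomorphism $\varphi$ yields $0 = \iota_V\Omega = \iota_V\varphi^*\widetilde\Omega = \varphi^*\big(\iota_{\varphi_*V}\widetilde\Omega\big)$, and injectivity of $\varphi^*$ forces $\iota_{\varphi_*V}\widetilde\Omega = 0$. Reading this condition off in the target coordinates is exactly the transformed system $G\,\tfrac{dX}{dt} = \partial K/\partial Y$, $G\,\tfrac{dY}{dt} = -\partial K/\partial X$.

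In summary, the only genuine content beyond bookkeeping is the closedness $d_{\mathrm v}\beta = 0$, which is what makes the correction term in $K$---the piece absorbing the explicit $t$-dependence $\partial_t X,\partial_t Y$ of the transition map---consistently defined; everything else is the careful separation of $d$ from $d_{\mathrm v}$ and an application of naturality of $\iota$.
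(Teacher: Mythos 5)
The paper never proves this lemma: it is introduced as ``the following standard fact'' and immediately applied, so there is no in-paper argument to compare yours against --- your proposal supplies a genuinely missing proof, and it is correct for the lemma as stated. The reduction is right: splitting $d$ into the fibrewise derivative $d_{\mathrm v}$ and the $dt$-component, the fibre parts cancel via $G\,d_{\mathrm v}X\wedge d_{\mathrm v}Y = F\,dx\wedge dy$ (which is exactly \eqref{symplecticcond}), leaving the single equation $d_{\mathrm v}(K\circ\varphi)=d_{\mathrm v}H+\beta$ with $\beta=G\left(\partial_t X\,d_{\mathrm v}Y-\partial_t Y\,d_{\mathrm v}X\right)$; the signs check out. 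Deriving $d_{\mathrm v}\beta=0$ from closedness on $\C^3$ of $F\,dx\wedge dy$ and of $\varphi^*(G\,dX\wedge dY)$ is an efficient way to get the compatibility condition without differentiating the transition map twice, and the second assertion follows correctly from the equivalence of the Hamiltonian system with $\iota_V\Omega=0$ (including the observation that the $dt$-coefficient of $\iota_V\Omega$ vanishes automatically once the other two do), naturality of the interior product under the $t$-preserving biholomorphism $\varphi$, and injectivity of $\varphi^*$.

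One caveat, which concerns the scope of the statement rather than a flaw in your argument: your Poincar\'e-lemma step needs simply connected fibres, which is consistent with the paper's (loose) formulation of $\varphi$ as a self-map of $\C^3$, but in the intended application the transition functions \eqref{gluingOk} are biholomorphisms only between chart overlaps such as $\{f\neq 0\}$, whose fibres are $\C^{*}\times\C$, where a closed $1$-form need not be exact. The failure is not hypothetical: the map $\varphi(x,y,t)=(x,\,y+t/x,\,t)$ on $\{x\neq 0\}$ satisfies \eqref{symplecticcond} with $F=G=1$, yet for $H=0$ the transformed system is $dX/dt=0$, $dY/dt=1/X$, whose only Hamiltonians are $K=-\log X$ plus functions of $t$; equivalently, here $d_{\mathrm v}H+\beta=-dx/x$ is closed but not exact. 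So when the lemma is invoked on the actual atlas, single-valuedness of $K$ is not automatic from closedness and must be certified by exactness of $d_{\mathrm v}H+\beta$ --- which the paper does implicitly by exhibiting the polynomial Hamiltonians $H^{\Ok}_0,\dots,H^{\Ok}_3$. It would strengthen your write-up to flag this explicitly, either by adding the hypothesis that the overlap fibres are simply connected or by noting that exactness is verified directly in the cases at hand.
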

Note that the Hamiltonians will not generally coincide under the transition functions - there may be correction terms arising from the $t$-dependence of the gluing. 
In the present case we take the symplectic form $\omega^{\Ok}$ on the fibre $E^{\Ok}_t$ given in coordinates by
\begin{equation} \label{canonicalcoordsOk}
\omega^{\Ok} = df \wedge dg = d x_1 \wedge d y_1 = d x_2 \wedge d y_2 = d x_3 \wedge d y_3,
\end{equation}
where the equalities are under the gluing \eqref{gluingOk} and $d$ here denotes the exterior derivative on the fibre $E^{\Ok}_t$, so $t$ is treated as a constant in the calculation.
By \autoref{symplecticlemma} the Hamiltonian function in the original $(f,g)$ coordinates provides a global Hamiltonian structure for the system on $E^{\Ok}$. 
In particular it turns out that the Hamiltonians in the other three charts are also polynomial in the coordinates:
\begin{equation}
\begin{aligned}
H^{\Ok}_0 &= H^{\Ok}_0(f,g,t) = 2 f g^2 - \left( f^2 + 2 t f + 2 a_1 \right) g - a_2 f, \\
H^{\Ok}_1 &= H^{\Ok}_1(x_1,y_1,t) =   2 \left( a_2 - x_1 y_1 \right) \left( t + y_1 (a_1 + a_2 - x_1 y_1) \right) - x_1, \\
H^{\Ok}_2 &= H^{\Ok}_2(x_2,y_2,t) =   2 y_2 - \left( a_1 +  x_2 y_2 \right) \left( 2 t + x_2 (a_1 + a_2 + x_2 y_2) \right) , \\
H^{\Ok}_3 &= H^{\Ok}_3(x_3,y_3,t) =  2 (a_2 - 1)(a_1+a_2 - 1) y_3 \\
&\qquad\qquad\qquad \qquad\qquad+ x_3 \left( 1 + 2 y_3 \left( t + y_3 \left( a_1+2 a_2 -2 + x_3 y_3 \right) \right)  \right).
\end{aligned}
\end{equation} 
It follows that the system on $E^{\Ok}$ is then everywhere regular, and each fibre $E^{\Ok}_t$ is called a \emph{space of initial conditions}. 
We refer to the total space $E^{\Ok}$ as the \emph{defining manifold}, following \cite{Takanodefiningmanifolds}, or as \emph{Okamoto's space} for $\pain{IV}$, while we refer to the compact surfaces $\X_t^{\Ok}$ (without inaccessible divisors removed) as the \emph{Okamoto surfaces}.


The defining manifold $E^{\Ok}$ is foliated by solutions of the extended system, in a way that is closely related to the Painlev\'e property of the equation. 
For each Painlev\'e equation $\pain{J}$,
 Okamoto provided an equivalent Hamiltonian system (with respect to the canonical symplectic form), which is polynomial in the two dependent variables, and analytic on the set $B_{\mathrm{J}} \subset \C$ given by the complement of the finite set of fixed singularities of $\pain{J}$ in the complex plane.
With phase space being the trivial complex analytic fibre bundle $( \C^2 \times B_{\mathrm{J}}, \pi_{\mathrm{J}}, B_{\mathrm{J}})$, where $\pi_{\mathrm{J}}$ is projection onto the second factor $B_{\mathrm{J}}$, solutions of the Hamiltonian system give a nonsingular foliation of $\C^2 \times B_{\mathrm{J}}$ into disjoint complex 1-dimensional leaves transverse to the fibres. 
However this foliation is not uniform, precisely because solutions may develop movable poles.
Along the same lines as outlined for $\pain{IV}$ above, Okamoto \cite{Okamoto french} constructed for each $\pain{J}$ a complex analytic fibre bundle $(E_{\mathrm{J}}, \pi_{\mathrm{J}} , B_{\mathrm{J}})$ that contains $( \C^2 \times B_{\mathrm{J}}, \pi_{\mathrm{J}}, B_{\mathrm{J}})$ as a fibre subspace and of which the extended Hamiltonian form induces a foliation into complex 1-dimensional solution curves such that
\begin{itemize}
\item Every leaf is transverse to the fibres and intersects the subspace $ \C^2 \times B_{\mathrm{J}}$,
\item The foliation is uniform: for each point $p_0 \in E_{\mathrm{J}}$, $\pi_{\mathrm{J}}(p_0) = t_0$, any path $\ell$ in $B_{\mathrm{J}}$ with starting point $t_0$ can be lifted to the leaf passing through $p_0$. 
In other words, any solution of the extended system of differential equations, with initial condition corresponding to $p_0$, can be holomorphically continued in $E_{\mathrm{J}}$ over $\ell$. 
\end{itemize}
The fact that Okamoto's Hamiltonian form of the fourth Painlev\'e equation induces a uniform foliation comes from the Painlev\'e property, namely that every solution of the Hamiltonian system for an initial condition in the initial phase space $\C^2 \times B_{\mathrm{J}}$ can be meromorphically continued along any path in $B_{\mathrm{J}}$.
However we emphasise that the regularisation on $E^{\Ok}$ alone is not sufficient to prove the Painlev\'e property and one requires, in addition to the regular initial value problems corresponding to singularities of solutions, certain auxiliary functions related to the global holomorphic Hamiltonian structure \cite{Takanodefiningmanifolds}, which are similar to those appearing in other proofs of the Painlev\'e property \cite{HL1, HL2, HL3, HL4, STEINMETZ, HUKUHARA, SHIMOMURA}.
Nevertheless the existence of a defining manifold on which an ODE system becomes everywhere regular is certainly intimately related to the Painlev\'e property, and this paper forms one part of a broader project extending this relation between regularisability and special singularity structures to wider classes of equations, e.g. to those with the quasi-Painlev\'e property \cite{FK}.

\subsection{Takasaki's rational Painlev\'e-Calogero system related to $\pain{IV}$}

We consider the second-order rational Painlev\'e-Calogero system related to $\pain{IV}$ obtained by Takasaki \cite{Takasaki}, which in this paper we call the Takasaki system, given explicitly by
\begin{equation} \label{takasakisystem}
\begin{gathered}
\begin{aligned}
\frac{dq}{dt} &= \frac{\partial H^{\Tak}}{\partial p} = p, \\
 \frac{dp}{dt} &= -\frac{\partial H^{\Tak}}{\partial q} =  \frac{8 \beta}{q^3} + (t^2 - \alpha) q + \frac{t}{2} q^3 + \frac{3}{64} q^5,
\end{aligned}
\\
\begin{aligned}
H^{\Tak}(q,p,t) &= \frac{p^2}{2} + V(q,p,t), \\
V(q,p,t) &= - \frac{1}{2} \left( \frac{q}{2} \right)^6 - 2 t \left( \frac{q}{2} \right)^4 - 2 (t^2 - \alpha) \left( \frac{q}{2} \right)^2 + \beta \left( \frac{q}{2} \right)^{-2},
\end{aligned}
\end{gathered}
\end{equation}
where 
$\alpha$, $\beta$ are again arbitrary complex parameters.

The potential $V$ is associated to the rank-one case of an extended Calogero system with rational potential \cite{INOZM, INOZ}, has an isomonodromic Lax pair \cite{BCR18}, and reduces in the $\beta=0$ case to an equation isolated by Halburd and Kecker \cite{HK} through Nevanlinna-theoretic methods, as we will see in \autoref{section4.1}. 
Takasaki obtained the system above as part of the Painlev\'e-Calogero correspondence, which began with the discovery of a so-called elliptic form of the sixth Painlev\'e equation by Manin \cite{MANIN}, building on the work of R. Fuchs \cite{FUCHS05} and Painlev\'e \cite{PAINLEVE}, which was recognised by Levin and Olshanetsky \cite{LO} as being related to a special case of one of Inozemtsev's generalisations \cite{INOZM, INOZ} of the Calogero system \cite{CALOGERO}. 
Takasaki extended the correspondence from $\pain{VI}$ to the other Painlev\'e equations - for a more detailed account of the historical progression of ideas leading to this we recommend the excellent paper \cite{TAKASAKIKYOTO}.

Eliminating $p$ from system \eqref{takasakisystem} one obtains the Takasaki scalar equation given by
\begin{equation} \label{takasakiscalar}
\frac{d^2 q}{dt^2}=\frac{3 q^5}{64}+\frac{t q^3}{2}+(t^2-\alpha)q+\frac{8\beta}{q^3},
\end{equation}
which is related to the fourth Painlev\'e equation as follows. If $(q,p)$ solve the Takasaki system (or $q$ solves the scalar equation \eqref{takasakiscalar}), then we obtain a solution $\lambda$ of the fourth Painlev\'e equation \eqref{P4scalar} with the same parameters $\alpha, \beta$ given by
\begin{equation}
\lambda = \left( \frac{q}{2} \right)^2.
\end{equation}
This gives a rational (but not birational) transformation that maps a solution $(q,p)$ of the Takasaki system to a solution $(f,g)$ of the Okamoto Hamiltonian form of the fourth Painlev\'e equation, which is given explicitly by \eqref{TaktoOktransformation} and will be studied in \autoref{section3}.

\subsection{Outline of the paper}

In \autoref{section2}, we introduce the notion of quadratic regularisability of a system of two ordinary differential equations and present a defining manifold $E^{\Tak}$ for the Takasaki system \eqref{takasakisystem} on which it is regular or regularisable in the sense of this definition. 
We then present a symplectic structure for the fibre over $t$, with respect to which we derive a global Hamiltonian structure for the Takasaki system on $E^{\Tak}$, in which the Hamiltonian in each chart is holomorphic.
The explicit details of the construction of the manifold are deferred to \autoref{appendixB}.

In \autoref{section3} we show how the algebraic transformation from the Takasaki system to the Okamoto Hamiltonian form of $\pain{IV}$ realises the space of initial conditions as a quotient. 
We also show how this map extended to the compact surfaces (without inaccessible divisors removed)
is a combination of blowdowns and a double cover, and clarify the relation under this map between the divisors of the rational 2-forms associated
with the Takasaki and Okamoto Hamiltonian systems.
We also examine symmetries of the Okamoto surface, which provide B\"acklund transformations of $\pain{IV}$, lifted to the Takasaki surface and in particular present a system of difference equations that is transformable algebraically but not birationally to a discrete Painlev\'e equation, which can be regarded as a discrete counterpart to the Takasaki system.

In \autoref{section4}, we define a class of regularising transformations that generalise the quadratic regularisability observed in the case of the Takasaki system. We discuss the ways in which this provides a notion of defining manifold for equations with the algebro-Painlev\'e property, with reference to a number of examples.

\section{Geometric regularisation of the Takasaki system} \label{section2}

The reason we choose the system \eqref{takasakisystem} among Takasaki's second-order Painlev\'e-Calogero systems as the object of study in this paper is that it is given in Hamiltonian form by a rational Hamiltonian function (others are elliptic, hyperbolic, or exponential).
Thus we can still recast it as a rational vector field on an appropriate phase space and attempt to resolve indeterminacies through blowups. 
The observations we make in doing so, the details of which will follow, lead us to introduce the following notion.
\begin{definition} \label{defquadreg}
Suppose a first-order system of ODEs is of the form
\begin{equation}
\frac{du}{dt} = R_1(t, u, v^2 ), \quad \frac{dv}{dt} = \frac{1}{v} R_2 (t,u, v^2),
\end{equation}
where $R_1(t,x,y)$ and $R_2(t,x,y)$ are rational functions of their arguments, both regular in $y$, with $R_2(t,x,y)$ nonzero  at $y=0$. 
Then we say that such a system is quadratic regularisable in the variable $v$. Introducing $h = v^2$, the resulting system is regular in $h$ at $h=0$:
\begin{equation}
u' = R_1( t,u, h ), \quad h' =  2R_2 (t,u, h).
\end{equation}
\end{definition}
After resolving the indeterminacies of the Takasaki system, we find that in the coordinates for the last exceptional divisors arising in the sequence of blowups, the system is precisely of the form above.
We also note that if after a sequence of blowups the system is quadratic regularisable in one chart $(u,v)$ for the last exceptional divisor arising in the sequence in the variable $v$, introduced in the canonical way according to the convention established in \autoref{appendixA}, then it is also quadratic regularisable in the other chart $(U,V)$, in the variable $V$, which follows from the relation
\begin{equation}
u v = V, \quad v = U V.
\end{equation}

\subsection{Defining manifold for the Takasaki system}

Similarly to the case of the Okamoto Hamiltonian form of $\pain{IV}$, we begin with the Takasaki system \eqref{takasakisystem} as a rational system on the trivial bundle over $\C_t$ with fibre $\C^2_{q,p}$, which we compactify to $\mathbb{P}^1 \times \mathbb{P}^1$.
We then perform a sequence of twenty blowups of the fibre over $t$, the details of which are provided in \autoref{appendixB}, denoting the centres of blowups by $q_i$ and exceptional divisors by $F_i$. 
Through this we obtain a rational surface $\X^{\Tak}$ (we sometimes add a subscript $t$ to emphasise the fibre), which is schematically represented in \autoref{fig:takasakisurface}.
Note we assume here that $\beta \neq 0$, since in the $\beta=0$ case some indeterminacies disappear and fewer blowups are required.

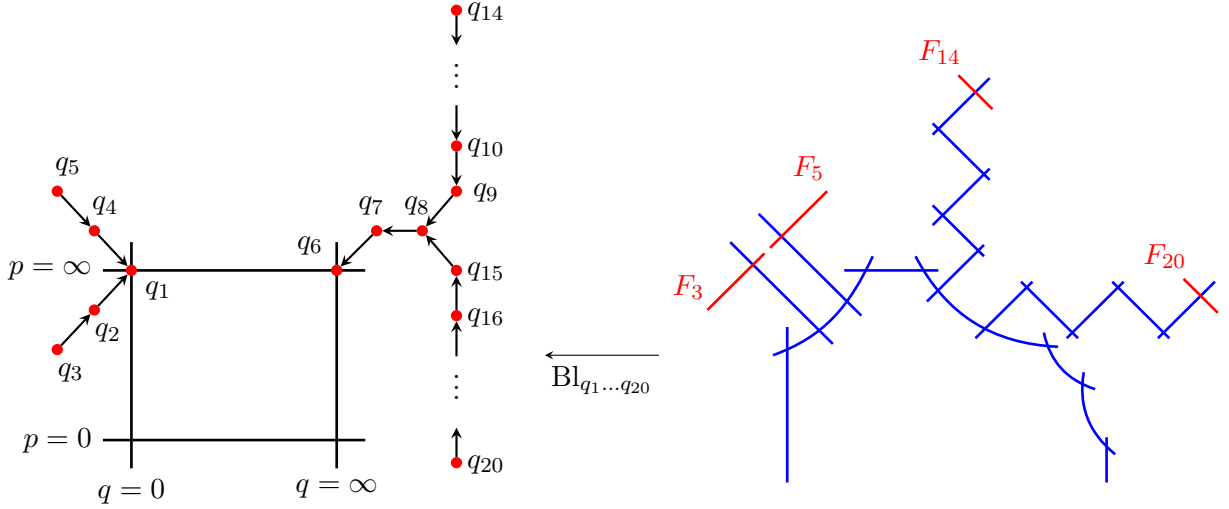
\begin{figure}[h] 
	\begin{tikzpicture}[scale=.75,>=stealth,basept/.style={circle, draw=red!100, fill=red!100, thick, inner sep=0pt,minimum size=1.2mm}]
	\draw [black, line width = 1pt] 	(4.1,2.5) 	-- (-0.5,2.5)	node [left]  {$p=\infty$} node[pos=0, right] {$$};
			\draw [black, line width = 1pt] 	(0,3) -- (0,-1)			node [below] {$q=0$}  node[pos=0, above, xshift=-7pt] {} ;
			\draw [black, line width = 1pt] 	(3.6,3) -- (3.6,-1)		node [below]  {$q=\infty$} node[pos=0, above, xshift=7pt] {};
			\draw [black, line width = 1pt] 	(4.1,-.5) 	-- (-0.5,-0.5)	node [left]  {$p=0$} node[pos=0, right] {$$};

			\node (q1) at (0,2.5) [basept,label={[yshift=-17pt, xshift=+10pt] $q_{1}$}] {};
			\node (q2) at (-0.65,1.8) [basept,label={[xshift=5pt, yshift = -18pt] $q_{2}$}] {};
			\node (q3) at (-1.3,1.1) [basept,label={[xshift=5pt, yshift = -18pt] $q_{3}$}] {};
			\node (q4) at (-0.65,3.2) [basept,label={[xshift=4pt, yshift = 0 pt] $q_{4}$}] {};
			\node (q5) at (-1.3,3.9) [basept,label={[xshift=4pt, yshift = 0 pt] $q_{5}$}] {};
			\node (q6) at (3.6,2.5) [basept,label={[xshift=-10pt,yshift=0pt] $q_{6}$}] {};
			\node (q7) at (4.3,3.2) [basept,label={[xshift=-2pt, yshift = 0 pt] $q_{7}$}] {};
			\node (q8) at (5.1,3.2) [basept,label={[xshift=-2pt, yshift = 0 pt] $q_{8}$}] {};
			\node (q9) at (5.7,3.9) [basept,label={[xshift=11pt, yshift = -10 pt] $q_{9}$}] {};
			\node (q10) at (5.7,4.7) [basept,label={[xshift=11pt, yshift = -10 pt] $q_{10}$}] {};
			\node (q11) at (5.7,5.6) [label={[xshift=0pt, yshift = -5 pt] $\vdots$}] {};
			\node (q12dots) at (5.7,6.3)  {};
			\node (q14) at (5.7,7.1) [basept,label={[xshift=11pt, yshift = -10 pt] $q_{14}$}] {};

			\node (q15) at (5.7,2.5) [basept,label={[xshift=11pt, yshift = -10 pt] $q_{15}$}] {};
			\node (q16) at (5.7,1.7) [basept,label={[xshift=11pt, yshift = -10 pt] $q_{16}$}] {};
			\node (q17) at (5.7,.8) [label={[xshift=0pt, yshift = -20 pt] $\vdots$}] {};
			\node (q18dots) at (5.7,-.1) {};
			\node (q20) at (5.7,-.9) [basept,label={[xshift=11pt, yshift = -10 pt] $q_{20}$}] {};
			\draw [line width = 0.8pt, ->] (q2) -- (q1);
			\draw [line width = 0.8pt, ->] (q3) -- (q2);
			\draw [line width = 0.8pt, ->] (q4) -- (q1);
			\draw [line width = 0.8pt, ->] (q5) -- (q4);
			\draw [line width = 0.8pt, ->] (q7) -- (q6);
			\draw [line width = 0.8pt, ->] (q8) -- (q7);
			\draw [line width = 0.8pt, ->] (q9) -- (q8);

			\draw [line width = 0.8pt, ->] (q10) -- (q9);
			\draw [line width = 0.8pt, ->] (q11) -- (q10);
			\draw [line width = 0.8pt, ->] (q14) -- (q12dots);

			\draw [line width = 0.8pt, ->] (q15) -- (q8);
			\draw [line width = 0.8pt, ->] (q16) -- (q15);
			\draw [line width = 0.8pt, ->] (q17) -- (q16);
			\draw [line width = 0.8pt, ->] (q20) -- (q18dots);
	
	\draw [->] (9.25,1)--(7.25,1) node[pos=0.5, below] {$\text{Bl}_{q_1\dots q_{20}}$};

	\begin{scope}[xshift=11.5cm]
			\draw [blue, line width = 1pt] 	(2.65,2.5) 	-- (1,2.5)	node [left]  {$$} node[pos=0, right] {$$};
			\draw [blue, line width = 1pt] 	(0,1.5) -- (0,-1.25)			node [below] {$$}  node[pos=0, above, xshift=-7pt] {} ;
			\draw [blue, line width = 1pt] 	(5.6,-.45) -- (5.6,-1.25)		node [below]  {$$} node[pos=0, above, xshift=7pt] {};

			\path [blue, line width = 1pt, out=20,in=-115]	(-.25,1) edge (1.5, 2.75) ;
			\draw [blue, line width = 1pt]	(.8,1.2) -- (-1, 3) ;
			\draw [red, line width = 1pt]		(-.4,2.8) -- (-1.4, 1.8)  node[pos=1, above, xshift=-7pt] {$F_{3}$} ;
			\draw [blue, line width = 1pt]	(1.3,1.7) -- (-.5, 3.5) ;
			\draw [red, line width = 1pt]		(-.3, 2.9) -- (.7, 3.9) node[pos=1, above, xshift=-7pt] {$F_{5}$} ;

			\path [blue, line width = 1pt, bend right]	(2.25,2.75) edge (4.75, 1.15) ;
			\path [blue, line width = 1pt, bend right]	(4.5,1.4) edge (5.4, .4) ;
			\path [blue, line width = 1pt, bend right]	(5.2,.7) edge (5.75, -.75) ;

			\draw [blue, line width = 1pt, bend right]	(2.45,1.95) -- (3.45, 2.95) ;
			\draw [blue, line width = 1pt]	(3.45,2.75) -- (2.55, 3.65) ;
			\draw [blue, line width = 1pt]	(2.55, 3.3)-- (3.55, 4.3) ;
			\draw [blue, line width = 1pt]	(3.55, 4.1)-- (2.55, 5.1) ;
			\draw [blue, line width = 1pt]	(2.55, 4.9) -- (3.55, 5.9) ;
			\draw [red, line width = 1pt]	(3, 5.95) -- (3.6, 5.35) node[pos=0, above, xshift=-7pt] {$F_{14}$} ;

			\draw [blue, line width = 1pt, bend right]	(3.3,1.3) -- (4.3, 2.3) ;
			\draw [blue, line width = 1pt]	(4.1, 2.3) -- (5.1, 1.3) ;
			\draw [blue, line width = 1pt]	(4.9, 1.3) -- (5.9, 2.3) ;
			\draw [blue, line width = 1pt]	(5.7, 2.3)  -- (6.7, 1.3) ;
			\draw [blue, line width = 1pt]	(6.5, 1.3)  -- (7.5, 2.3) ;
			\draw [red, line width = 1pt]	(6.95, 2.35)  -- (7.55, 1.75) node[pos=0, above, xshift=-7pt] {$F_{20}$} ;
		\end{scope}
	\end{tikzpicture}
		\caption{Surface $\X^{\Tak}$ for the Takasaki system, with inaccessible divisors in blue}
		\label{fig:takasakisurface}
\end{figure}

We next remove from the fibre certain divisors where the system is not regular or quadratic regularisable (indicated in blue in \autoref{fig:takasakisurface}), which we refer to as inaccessible by analogy with the Okamoto case and whose support we denote $D^{\Tak}$. 
After this we have the bundle $(E^{\Tak}, \pi, \C_t)$, the fibre of which is $E^{\Tak}_t = \X^{\Tak}_t  - D^{\Tak}$.
Through the blowups we obtain the following atlas for $E^{\Tak}$, provided by the initial variables $(q,p)$ away from $q=0$, as well as charts to cover affine neighbourhoods of the exceptional divisors $F_i$ for $i=3,5,14,20$ away from $D^{\Tak}$, as indicated in red in \autoref{fig:takasakisurface}.
\begin{equation} \label{atlasTak}
E^{\Tak} = \left( \C^3_{q,p,t} \backslash \{q=0\} \right) \cup \C^3_{u_3,v_3,t} \cup \C^3_{u_5,v_5,t} \cup \C^3_{u_{14},v_{14},t}\cup \C^3_{u_{20},v_{20},t},
\end{equation}
with gluing defined according to
\begin{equation} \label{takasakiatlas}
\begin{gathered}
q = v_3 ( 4 a_1 + u_3 v_3^2), \quad \frac{1}{p} = v_3, \\
q = v_5 ( -4 a_1 + u_5 v_5^2), \quad \frac{1}{p} = v_5, \\
\frac{1}{q} = v_{14}, \quad \frac{1}{p} = v_{14}^3\left(\frac{1}{8} + v_{14}^2\left(t + v_{14}^2 \left( 4( a_1+ 2 a_2 - 2) + u_{14} v_{14}^2\right) \right) \right), \\
\frac{1}{q} = v_{20}, \quad \frac{1}{p} = v_{20}^3\left(-\frac {1}{8} + v_{20}^2\left(-t + v_{20}^2 \left( -4 (a_1+ 2 a_2) + u_{20} v_{20}^2\right) \right) \right).
\end{gathered}
\end{equation}
For the construction of this atlas see \autoref{appendixB}, and note that we have introduced the parameters $a_1, a_2$ in the same way as in $\pain{IV}$, namely $\alpha = 1 - a_1 -2 a_2$, $\beta = -2 a_1^2$.
\begin{proposition}
The Takasaki system is everywhere either regular or quadratic regularisable on the bundle $E^{\Tak}$.
\end{proposition}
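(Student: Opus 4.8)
The plan is to verify the statement chart by chart, using that the atlas \eqref{atlasTak} covers all of $E^{\Tak}$: it therefore suffices to show that in each of the five charts the system \eqref{takasakisystem} is either regular or quadratic regularisable in the sense of \autoref{defquadreg}. In the $(q,p)$ chart this is immediate, since the right-hand sides of \eqref{takasakisystem} are rational in $q$ and $p$ with their only pole along $q = 0$, a locus which has been removed; the vector field is thus holomorphic and the system regular there, with nothing further to check.

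For each of the four exceptional charts I would proceed by the same computation. Differentiating the gluing relations \eqref{takasakiatlas} in $t$ and substituting $q' = p$ together with the expression for $p'$ from \eqref{takasakisystem} gives relations from which $u_i'$ and $v_i'$ can be solved. In each chart one of the two gluing relations involves only $v_i$, and differentiating it yields $v_i'$ directly: in the chart $(u_3, v_3)$ one has $v_3' = -v_3^2 p'$ from $1/p = v_3$, while in $(u_{14}, v_{14})$ one has $v_{14}' = -v_{14}^2 p$ from $1/q = v_{14}$. Substituting the gluing and $\beta = -2a_1^2$, the dominant contribution to $v_i'$ produces exactly the form $\tfrac{1}{v_i} R_2(t, u_i, v_i^2)$ required by the definition, with $R_2(t, u_3, 0) = \tfrac{1}{4 a_1}$ in the chart $F_3$. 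The nonvanishing of this leading coefficient is where the standing hypothesis $\beta \neq 0$, equivalently $a_1 \neq 0$, is used; in the charts $F_{14}$ and $F_{20}$ the corresponding values are $-8$ and $8$, so there $R_2(t, u_i, 0) \neq 0$ holds unconditionally.

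A structural observation removes the need to check by hand that $R_1$ and $R_2$ depend on $v_i$ only through $v_i^2$. The Takasaki system \eqref{takasakisystem} is equivariant under the involution $(q, p) \mapsto (-q, -p)$, and under the gluing \eqref{takasakiatlas} this involution acts in each exceptional chart as $(u_i, v_i) \mapsto (u_i, -v_i)$. Hence $u_i'$ is automatically even and $v_i'$ odd in $v_i$, which is precisely the dependence on $v_i^2$ demanded by \autoref{defquadreg}; this is the local manifestation of the branched double cover studied in \autoref{section3}.

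The main obstacle, and the real content of the proposition, is to show that the $u_i'$ equation is regular, that is, that the negative powers of $v_i$ which appear a priori when solving for $u_i'$ all cancel. In the chart $(u_3, v_3)$ this reduces to verifying that the numerator $1 - R_2 \, (4 a_1 + 3 u_3 v_3^2)$ vanishes to order $v_3^4$, which follows from a short Taylor expansion in $v_3^2$; the charts $(u_{14}, v_{14})$ and $(u_{20}, v_{20})$ require the analogous but lengthier expansion, in which the coefficients $\tfrac{1}{8}$, $t$ and $4(a_1 + 2 a_2 - 2)$ appearing in \eqref{takasakiatlas} are exactly those forced by the blowup sequence of \autoref{appendixB} so as to produce these cancellations. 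Since the gluings for $F_3$ and $F_5$ differ only by $a_1 \mapsto -a_1$ the two computations there coincide after this substitution, and those for $F_{14}$ and $F_{20}$ are entirely analogous, so that two model computations suffice to establish the claim in all four exceptional charts.
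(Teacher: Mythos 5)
Your proposal is correct and takes essentially the same route as the paper: both arguments verify regularity of \eqref{takasakisystem} in the $(q,p)$ chart away from $q=0$ and then check quadratic regularisability in the four exceptional charts by rewriting the system through the gluing \eqref{takasakiatlas} and confirming that the right-hand sides have the form demanded by \autoref{defquadreg}, with nonvanishing leading coefficient of $v_i v_i'$ (the coefficient $\tfrac{1}{4a_1}$ in the $F_3$ chart being exactly where $\beta \neq 0$ enters). The only real difference is organisational: where the paper gets the evenness of $u_i'$ and oddness of $v_i'$ in $v_i$ as an outcome of brute-force computation, you derive it a priori from equivariance of the system under $(q,p)\mapsto(-q,-p)$, which acts in each exceptional chart as $(u_i,v_i)\mapsto(u_i,-v_i)$ --- a tidy shortcut that reduces the computation to the pole-cancellation check, but not a different proof.
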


\begin{proof}
The main part of the result is the construction of the atlas in \autoref{appendixB}, after which it remains only to verify the assertion by direct calculation of the extension of the system \eqref{takasakisystem} to $E^{\Tak}$ using the relations \eqref{takasakiatlas} as changes of variables. 
For example in the chart $(u_{14},v_{14})$ we have 
\begin{equation}
\begin{aligned}
\frac{d u_{14}}{d t} &= 64(a_2 - 1)(a_1 + a_2 - 1) + 2 t u_{14} + 16(a_1 + 2 a_2 -2) u_{14} v_14^2 + 3 u_{14}^2 v_{14}^4,\\
\frac{d v_{14}}{d t} &= - \frac{1 + 8 t v_{14}^2 + 32(a_1 + 2 a_2 -2) v_{14}^2 + 8 u_{14} v_{14}^2}{v_{14} },
\end{aligned}
\end{equation}
so the system is quadratic regularisable in $v_{14}$ on the part of $F_{14}$ visible in the $(u_{14},v_{14})$ chart. 
The regularisability on the other exceptional divisors $F_{3}, F_{5}, F_{20}$ is verified similarly, and the system in the original variables is regular away from $q=0$. 
\end{proof}

\subsection{Global Hamiltonian structure}

The quadratic regularisability of the Takasaki system can also be seen in terms of a global Hamiltonian structure, in an analogous way to the global holomorphic Hamiltonian structure on Okamoto's space for $\pain{IV}$ as outlined \autoref{section1.1}. 
For this, we take the holomorphic symplectic form on the fibre $E^{\Tak}_t$ to be the extension of that with respect to which the Takasaki system \eqref{takasakisystem} is defined, namely $dq \wedge dp$. 
The appropriate atlas for $E^{\Tak}$ and collection of Hamiltonian functions are provided by the following.
\begin{theorem} \label{theoremTakasakiSymplectic}
The Takasaki system on $E^{\Tak}$ has a global Hamiltonian structure with respect to the symplectic form on the fibre $E_t^{\Tak}$ extended from $dq \wedge dp$ in the original variables, in which all Hamiltonian functions are polynomial in coordinates.
\end{theorem}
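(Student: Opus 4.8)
The plan is to reproduce the logic of the Okamoto construction of \autoref{section1.1} chart by chart, following \autoref{symplecticlemma}. The global object I would track is the $2$-form of that lemma, which in the original $(q,p)$ chart reads $dp \wedge dq - dH^{\Tak} \wedge dt$ with $H^{\Tak}$ the Hamiltonian of \eqref{takasakisystem} and whose restriction to each fibre recovers the extension of $dq \wedge dp$. The Hamiltonian in the $(q,p)$ chart is $H^{\Tak}$ itself, so everything reduces to producing, in each of the four charts of \eqref{atlasTak} covering the exceptional divisors $F_3,F_5,F_{14},F_{20}$, a Hamiltonian $H_i$ that is polynomial and compatible with the gluing \eqref{takasakiatlas}.

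First I would write the fibre form $dq \wedge dp$ in coordinates in each chart. A direct calculation shows it does not extend to a nondegenerate form: it acquires a simple zero along each of the exceptional divisors $F_3,F_5,F_{14},F_{20}$ — for example $dq \wedge dp = -\,v_3\, du_3 \wedge dv_3$ in the $(u_3,v_3)$ chart — which is exactly the geometric manifestation of quadratic regularisability and of the ramification of the branched double cover relating the Takasaki system to $\pain{IV}$. For the two charts $(u_3,v_3)$ and $(u_5,v_5)$, whose relations in \eqref{takasakiatlas} involve no $t$, this coordinate expression is itself $t$-independent, so the hypothesis \eqref{symplecticcond} holds and \autoref{symplecticlemma} delivers a Hamiltonian directly; it then remains only to expand and confirm it is polynomial.

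The main obstacle is the pair of charts $(u_{14},v_{14})$ and $(u_{20},v_{20})$ covering the divisors created by the $t$-dependent centres of blowup. Here $dq \wedge dp$ has a genuinely $t$-dependent coordinate expression — one checks that the $t$-dependence cannot be removed by any change of coordinate that stays regular along the divisor — so \autoref{symplecticlemma} does not apply as stated. I would instead work with the full $2$-form $dp \wedge dq - dH^{\Tak}\wedge dt$ directly, now treating $t$ as a variable so that the $t$-dependent transition functions in \eqref{takasakiatlas} contribute $dt$-terms, and determine $H_{14}$ (and likewise $H_{20}$) by requiring this form to be holomorphic in the chart, equivalently by matching its $du_{14}\wedge dt$ and $dv_{14}\wedge dt$ components; this amounts to a mild generalisation of \autoref{symplecticlemma} permitting a $t$-dependent fibre form. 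The point is that $H^{\Tak} = \tfrac{1}{2}p^2 + V$ pulled into these charts is very far from polynomial — the kinetic term alone has a high-order pole along the exceptional divisor — and the crux of the argument is to verify that these poles cancel exactly against the $dt$-contributions of the $t$-dependent gluing, leaving polynomial $H_{14}$ and $H_{20}$.

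Finally I would record the four Hamiltonians explicitly and check in each chart that the resulting Hamilton equations recover the extension of \eqref{takasakisystem} already computed in the proof of the preceding proposition, giving an independent confirmation of both polynomiality and global compatibility. I expect the pole cancellation over the $t$-dependent blowups to be the only genuinely delicate step, with the remaining verifications amounting to routine, if lengthy, expansions.
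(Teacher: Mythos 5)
There is a genuine gap: your plan keeps the blowup atlas \eqref{atlasTak} and asserts that in the $t$-independent charts $(u_3,v_3)$, $(u_5,v_5)$ it ``remains only to expand and confirm'' polynomiality, but the Hamiltonian in those charts is \emph{not} polynomial, and nothing within the chart can fix this. Concretely, with $q = v_3\left(4a_1+u_3v_3^2\right)$, $p = 1/v_3$ and $\beta = -2a_1^2$, the kinetic and $\beta$-terms of $H^{\Tak}$ combine to
\begin{equation*}
\frac{p^2}{2} + \beta\left(\frac{q}{2}\right)^{-2} = \frac{u_3\left(8a_1+u_3v_3^2\right)}{2\left(4a_1+u_3v_3^2\right)^2},
\end{equation*}
so the Hamiltonian obtained from \autoref{symplecticlemma} in the $(u_3,v_3)$ chart is this rational function plus a polynomial: it is holomorphic near $F_3$ (the pole locus $4a_1+u_3v_3^2=0$ lies on the proper transform of $\{q=0\}$, which belongs to $D^{\Tak}$), but it is not a polynomial. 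Since a Hamiltonian for a fixed vector field, fixed $2$-form and fixed chart is unique up to additive functions of $t$, no polynomial representative exists in that chart; the same failure occurs, with worse denominators, in $(u_{14},v_{14})$ and $(u_{20},v_{20})$. Your route would therefore establish at best a global \emph{holomorphic} Hamiltonian structure on $E^{\Tak}$, which is strictly weaker than the statement of \autoref{theoremTakasakiSymplectic}.

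The missing idea is that the theorem calls for a \emph{different} atlas, and constructing it is the actual content of the paper's proof: the charts $(x_i,y_i)$ of \eqref{atlasTakpoly} with gluing \eqref{takasakiatlassymplectic} are not the blowup charts of \autoref{appendixB}, but are chosen so that the $2$-form takes the uniform $t$-independent expression $\omega^{\Tak} = y_i\, dx_i \wedge dy_i$ in every chart, whereupon the four Hamiltonians $H^{\Tak}_i$ come out polynomial in $x_i$, $y_i^2$, $t$. This is the exact analogue of the Matano--Matumiya--Takano symplectic atlas for $E^{\Ok}$, where one likewise needs a coordinate change beyond the canonical blowup charts (cf.\ the construction in \autoref{appendixA}). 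Incidentally, your parenthetical claim that the $t$-dependence of $dq \wedge dp$ in the $(u_{14},v_{14})$ chart ``cannot be removed by any change of coordinate that stays regular along the divisor'' is false: the paper's chart $(x_4,y_4)$ covers $F_{14}$ away from $D^{\Tak}$, and its ($t$-dependent) transition with $(u_{14},v_{14})$ turns the form into $y_4\, dx_4 \wedge dy_4$. Once the correct atlas is in hand, no generalisation of \autoref{symplecticlemma} is needed, and the proof reduces to the direct verifications the paper carries out.
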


\begin{proof}
We take as an atlas for the bundle 
\begin{equation}\label{atlasTakpoly}
E^{\Tak} = \C^3_{x_1,y_1, t} \cup \C^3_{x_2,y_2, t} \cup \C^3_{x_3,y_3, t} \cup \C^3_{x_4,y_4, t},
\end{equation}
with gluing defined by 
\begin{equation} \label{takasakiatlassymplectic}
\begin{gathered}
\frac{1}{x_1} = \frac{y_2^2}{8 a_1 + x_2 y_2^2} ,\quad y_1 = y_2, \\
x_1 =  - y_3^2 ( 8 a_2 + x_3 y_3^2) ,\quad \frac{1}{y_1} = y_3, \\
\frac{1}{x_1} = \frac{4 y_4^2}{1 + 8 t y^2 +32(a_1 + a_2 -1) y_4^4 - 4 x_4 y_4^6}, \quad \frac{1}{y_1} = y_4,
\end{gathered}
\end{equation}
where notation has been recycled from the $\pain{IV}$ case and the charts $(x_i,y_i)$ are not to be confused with those from the symplectic atlas for $E^{\Ok}$.
The original variables $q,p$ are related to these coordinates by 
\begin{equation}
q = y_1, \quad p = x_1 y_1 - \frac{4 a_1}{y_1} - t y_1 - \frac{y_1^3}{8},
\end{equation}
and the transition functions $(x_i,y_i) \mapsto (x_j,y_j)$ defined by \eqref{takasakiatlassymplectic} can be verified to be biholomorphisms on the overlaps of coordinate patches by direct calculation. 
In particular the parts of the exceptional divisors $F_3, F_5, F_{14}$, and $F_{20}$ away from $D^{\Tak}$ are given by $y_2=0$, $y_1=0$, $y_4=0$, and $y_3=0$ respectively.
We take the symplectic form $\omega^{\Tak} = dq \wedge dp$ extended to the fibre $E_t^{\Tak}$, which is given in each chart by 
\begin{equation} \label{omegatakasakixy}
\omega^{\Tak} = y_i dx_i \wedge y_i,
\end{equation}
with respect to which the system is Hamiltonian: 
\begin{equation} \label{omegaxyi}
\frac{dx_i}{dt} = \frac{1}{y_i} \frac{\partial H^{\Tak}_i}{\partial y_i}, \quad \frac{dy_i}{dt} = -\frac{1}{y_i} \frac{\partial H^{\Tak}_i}{\partial x_i},
\end{equation}
with Hamiltonians $H^{\Tak}_i$ being polynomial in $x_i, y_i^2, t$, explicitly given (modulo functions of $t$) by 
\begin{equation}
\begin{aligned}
H^{\Tak}_1(x_1, y_1,t) &= 4 a_1 x_1 + \left( a_2 + t x_1 - \frac{x_1^2}{2} \right) y_1^2 + \frac{1}{8} x_1 y_1^4,\\
H^{\Tak}_2(x_2, y_2, t) &= - 4 a_1 x_2 + \left(a_1 + a_2 + t x_2 - \frac{x_2^2}{2} \right) y_2^2 + \frac{1}{8} x_2 y_2^4, \\
H^{\Tak}_3(x_3,y_3,t) &=  - \frac{1}{8} x_3 - \left( 32 a_2(a_1+a_2) + t x_3 \right) y_3^2 - 4 (a_1+2a_2) x_3 y_3^4 - \frac{1}{2} x_3^2 y_3^6, \\
 H^{\Tak}_4(x_4,y_4,t) &=  \frac{1}{8} x_4 -  \left( 32 (a_1+a_2-1)(a_2-1) -  t x_4 \right) y_4^2 + 4(a_1+2a_2-2)x_4 y_4^4 - \frac{1}{2} x_4^2 y_4^6.
\end{aligned}
\end{equation}
\end{proof}
\begin{remark}
In particular the fact that each Hamiltonian is polynomial in coordinates with only even powers of $y_i$ ensures that the system in the chart $(x_i, y_i)$ is quadratic regularisable in the variable $y_i$, and regular in $x_i$ everywhere.
This is analogous to how a polynomial Hamiltonian structure in an atlas providing canonical coordinates for the symplectic form guarantees regular initial value problems, as is the case for Okamoto's spaces for the Painlev\'e equations.
\end{remark}
\section{The map from Takasaki to Okamoto surfaces} \label{section3}

The following algebraic transformation was presented by Takasaki \cite{Takasaki}, and can be reproduced using the map from the scalar equation \eqref{takasakiscalar} to $\pain{IV}$.
If $(q,p)$ solves the system \eqref{takasakisystem} with $\alpha = 1 - a_1 -2 a_2$, $\beta = -2 a_1^2$, then the transformation 
\begin{equation} \label{TaktoOktransformation}
\left\{
\begin{aligned}
f(q,p)&= \left(\frac{q}{2} \right)^2, \\
 g(q,p) &= \frac{t}{2} + \frac{2 a_1 }{q^2} + \frac{p}{2 q} + \frac{q}{16},
\end{aligned}
\right.
 \end{equation}
gives a solution $(f,g)$ to the Okamoto Hamiltonian form \eqref{hamOkP4} of $\pain{IV}$. Further, this transformation is, up to scaling, canonical:
\begin{equation} \label{dqdp=dfdg}
dq \wedge dp = 4 df \wedge dg,
\end{equation}
where the equality is under the above transformation and $d$ is the exterior derivative on $\C^2$ so $t$ is treated as constant.
The above transformation defines a rational, but not birational, map
\begin{equation} \label{mapphi}
\varphi : \X^{\Tak} \rightarrow \X^{\Ok},
\end{equation}
which we study in this section.

\subsection{The rational map between open surfaces}  

We first consider the restriction of the map \eqref{mapphi} to the fibres of the defining manifolds $E^{\Tak}$, $E^{\Ok}$, so the surfaces with inaccessible divisors removed:
\begin{equation} \label{mapopensurfaces}
\varphi : 
\X^{\Tak} - D^{\Tak} \rightarrow \X^{\Ok} - D^{\Ok} 
.
\end{equation}
\begin{proposition}
The rational map \eqref{mapopensurfaces} is a morphism (i.e. has no indeterminacies) and has empty critical locus (i.e. does not blow down any curves and all fibres are finite). 
Further, its ramification locus consists of the parts away from $D^{\Ok}$ of the exceptional divisors $E_2, E_4, E_8$ and the proper transform of $\left\{f = 0\right\}$.
\end{proposition}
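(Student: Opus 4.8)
The plan is to reduce all three assertions---morphism, empty critical locus, and ramification---to explicit computations in the atlases, together with the single global consequence of \eqref{dqdp=dfdg}. On the original chart $\C^3_{q,p,t}\setminus\{q=0\}$ the situation is transparent: since $f=(q/2)^2$ and $g$ in \eqref{TaktoOktransformation} are holomorphic and take values in the $(f,g)$-chart of $E^{\Ok}$ for $q\neq 0$, the map is a morphism there, and \eqref{dqdp=dfdg} shows its Jacobian is the nonzero constant $\tfrac14$, so $\varphi$ is an unramified local biholomorphism on this chart. Moreover $\varphi$ is generically $2$-to-$1$, the deck involution being $\sigma:(q,p)\mapsto(-q,\,-p-\tfrac{q^2}{4})$, which fixes both $f$ and $g$ and has no fixed point with $q\neq 0$. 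Hence all indeterminacy, contraction, and ramification are confined to the exceptional divisors $F_3,F_5,F_{14},F_{20}$, and the entire proof is a local analysis near these four curves.

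First I would show $\varphi$ is a morphism by working in the charts \eqref{takasakiatlas} (or \eqref{takasakiatlassymplectic}) covering the parts of $F_3,F_5,F_{14},F_{20}$ away from $D^{\Tak}$. In each such chart I substitute the gluing into \eqref{TaktoOktransformation}, determine which Okamoto chart of \eqref{gluingOk} the image approaches, and re-express $\varphi$ in those coordinates; the claim is that the polar parts of $f$ and $g$ cancel so that the result is holomorphic with image in $E^{\Ok}=\X^{\Ok}-D^{\Ok}$. For example on $F_5$ one finds $f=0$ and $g$ finite and varying along the divisor (the poles of $\tfrac{2a_1}{q^2}$ and $\tfrac{p}{2q}$ cancelling as $q\to0$), so $\varphi$ is holomorphic there and maps $F_5$ onto the proper transform of $\{f=0\}$; the cases of $F_3,F_{14},F_{20}$ are handled in the same way but require the Okamoto charts covering the accessible parts of the exceptional divisors. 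These same local expressions show that each of $F_3,F_5,F_{14},F_{20}$ maps onto a curve rather than to a point, so no curve is contracted; combined with the generically $2$-to-$1$ structure this gives finite fibres everywhere, i.e. empty critical locus.

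For the ramification locus I would pass to the $2$-form identity. Since $\varphi$ is a morphism, $\varphi^*\omega^{\Ok}$ is a holomorphic $2$-form on $E^{\Tak}_t$ which, by \eqref{dqdp=dfdg}, coincides with $\tfrac14\omega^{\Tak}$ on the dense original chart, and hence everywhere on $E^{\Tak}_t$. The form $\omega^{\Ok}$ is nowhere vanishing on $E^{\Ok}_t$, since it equals $dx_i\wedge dy_i$ in each Okamoto chart by \eqref{canonicalcoordsOk}; therefore, for the finite map $\varphi$, the ramification formula identifies the ramification divisor with the zero divisor of $\varphi^*\omega^{\Ok}=\tfrac14\omega^{\Tak}$. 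By \autoref{theoremTakasakiSymplectic} and \eqref{omegatakasakixy}, $\omega^{\Tak}$ equals $y_i\,dx_i\wedge dy_i$ up to sign in each chart, so it vanishes to first order precisely along $F_3,F_5,F_{14},F_{20}$ (away from $D^{\Tak}$) and nowhere else. The images of these four curves, computed in the previous step, are the proper transform of $\{f=0\}$ together with the parts away from $D^{\Ok}$ of $E_2,E_4,E_8$, which is exactly the asserted locus.

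The main obstacle is the chart-matching in the second step, where the twenty blowups producing $\X^{\Tak}$ must be reconciled with the eight producing $\X^{\Ok}$. Near each $F_i$ one has to select the correct target Okamoto chart---tracking the image through the infinitely near points of the cascades over $(f=\infty,g=0)$, $(f=0,g=\infty)$ and $(f=\infty,g=\infty)$, where $E_2,E_4,E_8$ respectively arise---and verify both the cancellation of poles that makes $\varphi$ holomorphic and the fact that the image lands in the accessible part of the relevant exceptional divisor rather than on $D^{\Ok}$. This bookkeeping is what pins down the precise correspondence between $\{F_3,F_5,F_{14},F_{20}\}$ and $\{\{f=0\},E_2,E_4,E_8\}$ and rules out contraction; by contrast, once $\varphi$ is known to be a finite morphism, the determination of the ramification divisor via $\varphi^*\omega^{\Ok}=\tfrac14\omega^{\Tak}$ is essentially immediate.
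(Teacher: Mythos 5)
Your proposal is correct, and it differs from the paper's proof in an interesting way for part of the argument. The chart-by-chart verification that $\varphi$ extends holomorphically across the parts of $F_3, F_5, F_{14}, F_{20}$ away from $D^{\Tak}$ and maps each onto a curve (rather than contracting it) is exactly the computation the paper performs; like you, the paper only records representative cases, e.g. $\left.(u_5,v_5)\right|_{v_5=0} \mapsto (f,g) = \bigl(0, \tfrac{u_5}{32 a_1^2}+\tfrac{t}{2}\bigr)$ for $F_5$ and the analogous formula sending $F_3$ into $E_4$. Where you genuinely diverge is in how the degree structure and the ramification locus are established. The paper does both at once by counting preimages: every $(f,g)$ with $f\neq 0$ has two distinct preimages in the $(q,p)$-chart, while each point of the four image curves has a single preimage, so the map has degree $2$ and branches exactly there. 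You instead (i) exhibit the deck involution $(q,p)\mapsto\bigl(-q,\,-p-\tfrac{q^2}{4}\bigr)$, which indeed fixes $f$ and $g$ and, together with the constant Jacobian $\tfrac14$ forced by \eqref{dqdp=dfdg}, makes the $2{:}1$ unramified structure over $\{f\neq 0\}$ transparent; and (ii) identify the ramification divisor with the zero divisor of $\varphi^*\omega^{\Ok}=\tfrac14\,\omega^{\Tak}$, using that $\omega^{\Ok}$ is nowhere vanishing on $E^{\Ok}_t$ by \eqref{canonicalcoordsOk} and that $\omega^{\Tak}=y_i\,dx_i\wedge dy_i$ in the atlas of \autoref{theoremTakasakiSymplectic} vanishes simply and precisely along the four exceptional curves. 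This is sound and arguably cleaner: it explains structurally why the branch locus must be the image of the zero divisor of $\omega^{\Tak}$, yields the ramification indices for free, and replaces preimage bookkeeping along the four curves by a one-line divisor computation; the trade-off is that it leans on the symplectic atlas of \autoref{theoremTakasakiSymplectic} and on extending the identity $\varphi^*\omega^{\Ok}=\tfrac14\,\omega^{\Tak}$ from the dense chart to all of $E^{\Tak}_t$, neither of which the paper's self-contained count requires. Both routes still rest on the same unavoidable chart-matching computation (holomorphy of $\varphi$ near the four curves and identification of their images), so your closing paragraph locates the real work correctly.
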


\begin{proof}
The proof is a direct calculation in coordinates, in which we use $\bar{u}_i, \bar{v}_i$ to denote charts for the Okamoto surface $\X^{\Ok}$ as introduced in \autoref{appendixA} to distinguish them from $u_i, v_i$ for the Takasaki surface $\X^{\Tak}$ as introduced in \autoref{appendixB}. 
It is straightforward to first verify that $\varphi$ has no indeterminacies on $\X^{\Tak}$, after which the absence of any critical locus as well as the ramification are checked through the following calculations:
\begin{itemize}
\item The part of $\X^{\Tak} - D^{\Tak}$ visible in the $(q,p)$-chart (so on which in particular $q \neq 0$) is mapped 
to the $(f,g)$-chart with $f \neq 0$, 
and every point $(f,g)$ with $f\neq 0$ has two distinct preimages in the $(q,p)$-chart away from $q=0$.
\item The part of the exceptional divisor $F_5$ on $\X^{\Tak} - D^{\Tak}$ is mapped 
to the part of the proper transform of $\{f=0\}$ on $\X^{\Ok}$ away from $D^{\Ok}$:
\begin{equation*}
\left. (u_5,v_5)\right |_{v_5 = 0} \mapsto (f,g) = \left( 0, \frac{u_5}{32 a_1^2}+ \frac{t}{2} \right),
\end{equation*}
and every point on this proper transform has preimage under $\varphi$ being a single point of $F_5$.
\item The part of the exceptional divisor $F_3$ on $\X^{\Tak} - D^{\Tak}$ is mapped 
to the part of $E_4$ on $\X^{\Ok}$ away from $D^{\Ok}$:
\begin{equation*}
\left. (u_3,v_3)\right |_{v_3 = 0} \mapsto (\bar{u}_4, \bar{v}_4) = \left( \frac{u_3}{32 a_1} +  \frac{t a_1}{2} , 0 \right),
\end{equation*}
and every point on $E_4$ away from $D^{\Ok}$ has preimage being a single point of $F_5$.
\end{itemize}
Similarly $F_{14}$ and $F_{20}$ are mapped 
to $E_8$ and $E_2$, respectively.
\end{proof}

\subsection{The rational map between compact surfaces}

We now investigate the rational map $\varphi$ between the compact surfaces and the relation of the rational 2-forms providing the symplectic structures for the Takasaki and Okamoto systems.
In the case of $\X^{\Ok}$, as with surfaces associated with the other Painlev\'e equations, the configuration of the inaccessible divisors forming $D^{\Ok}$ plays a defining role and they form the irreducible components of an effective anticanonical divisor of canonical type \cite{SAKAI2001}.
In particular they are all of self-intersection $-2$, with intersection configuration encoded in a Dynkin diagram of an affine root system, and give the pole divisor of the symplectic form used to define the Hamiltonian structure of $\pain{IV}$. 
In this section we establish the role played by the components of $D^{\Tak}$ in the symplectic structure with respect to which the Takasaki Hamiltonian system is defined.

In order to investigate the map $\varphi$ we first note that it is possible to perform a kind of minimisation of the Takasaki surface $\X^{\Tak}$ which does not affect the fibres $E^{\Tak}_t$ of the defining manifold, since it is possible to contract inaccessible exceptional curves contained in $D^{\Tak}$.
This is done through a sequence of four blowdowns, the details of which are provided in \autoref{appendixB}. 
This gives a surface $\X^{\Tak}_{m}$, 
which we call the minimal surface for the Takasaki system. 
We give a schematic representation of this in \autoref{mintakasakisurface},
where the components of the image of $D^{\Tak}$ under the blowdowns are labeled $C_i$, $i=1,\dots,15$ as in \autoref{appendixB} and indicated in a combination of blue and magenta for reasons that will be explained below.
\begin{figure}[h]
\centering
	\begin{tikzpicture}[scale=1,>=stealth,basept/.style={circle, draw=red!100, fill=red!100, thick, inner sep=0pt,minimum size=1.2mm},
	elt/.style={circle,draw=black!100, fill=blue!75, thick, inner sep=0pt,minimum size=2mm}, 
				mag/.style={circle, draw=black!100, fill=magenta!100, thick, inner sep=0pt,minimum size=2mm}  ]
			\draw [blue, line width = 1pt] 	(-.5,2.5) 	-- (4.5,2.5)	node [left]  {$$} node[pos=0.5, xshift=3pt, yshift=8pt] {$C_4$};
			\draw [blue, line width = 1pt] 	(0,3) -- (0,-2.5)			node [below] {$$}  node[pos=.6, above, xshift=-8pt] {$C_3$} ;
			\draw [blue, line width = 1pt] 	(4,3) -- (4,-2.5)		node [below]  {$$} node[pos=.6, above, xshift=8pt] {$C_5$};

			\draw [magenta, line width = 1pt] 	(.4,1) 	-- (-2,1)	node [left]  {$$} node[pos=0, right] {$C_2$};
			\draw [red, line width = 1pt] 	(-1.6,1.7) 	-- (-1.6,.3)	node [left]  {$$} node[pos=0, left] {$F_5$};

			\draw [magenta, line width = 1pt] 	(.4,-1) 	-- (-2,-1)	node [left]  {$$} node[pos=0, right] {$C_1$};
			\draw [red, line width = 1pt] 	(-1.6,-.3) 	-- (-1.6,-1.7)	node [left]  {$$} node[pos=1, left] {$F_3$};
			\draw [magenta, line width = 1pt] 	(3.6,1) 	-- (6,1)	node [left]  {$$} node[pos=0, left,xshift=2pt ] {$C_6$};
			\draw [blue, line width = 1pt] 	(5.6,0.6) 	-- (5.6,2.6) 	node [left]  {$$} node[pos=0.5, left] {$C_7$};
			\draw [magenta, line width = 1pt] 	(5.2,2.2) 	-- (7.2,2.2) 	node [left]  {$$} node[pos=0.5, above] {$C_8$};
			\draw [blue, line width = 1pt] 	(6.8,0.6) 	-- (6.8,2.6)	node [left]  {$$} node[pos=0.5, right] {$C_9$};
			\draw [magenta, line width = 1pt] 	(6.4,1) 	-- (8.4,1) 	node [left]  {$$} node[pos=1, below right] {$C_{10}$};
			\draw [red, line width = 1pt] 	(8,1.7) 	-- (8,.3) 	node [left]  {$$} node[pos=0, right] {$F_{14}$};
			\draw [magenta, line width = 1pt] 	(3.6,-1) 	-- (6,-1)	node [left]  {$$} node[pos=0, left, xshift=4pt ]{$C_{11}$};
			\draw [blue, line width = 1pt] 	(5.6,-.6) 	-- (5.6,-2.6) 	node [left]  {$$} node[pos=0.5, left] {$C_{12}$};
			\draw [magenta, line width = 1pt] 	(5.2,-2.2) 	-- (7.2,-2.2) 	node [left]  {$$} node[pos=0.5, below] {$C_{13}$};
			\draw [blue, line width = 1pt] 	(6.8,-0.6) 	-- (6.8,-2.6)	node [left]  {$$} node[pos=0.5, right] {$C_{14}$};
			\draw [magenta, line width = 1pt] 	(6.4,-1) 	-- (8.4,-1) 	node [left]  {$$} node[pos=1, above right] {$C_{15}$};
			\draw [red, line width = 1pt] 	(8,-1.7) 	-- (8,-.3) 	node [left]  {$$} node[pos=0, right] {$F_{20}$};

		\end{tikzpicture}
		\caption{Surface for the Takasaki Hamiltonian system after minimisation}
\label{mintakasakisurface}
\end{figure}
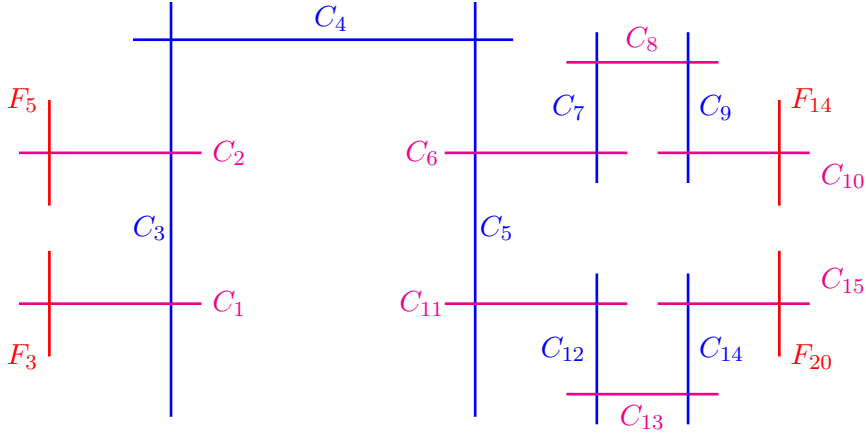

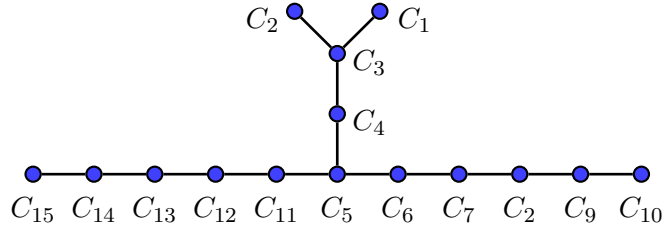
\begin{figure}[h] 
\centering
	\begin{tikzpicture}[scale=.8, 
				elt/.style={circle,draw=black!100, fill=blue!75, thick, inner sep=0pt,minimum size=2mm},
				blk/.style={circle,draw=black!100, fill=black!75, thick, inner sep=0pt,minimum size=2mm}, 
				mag/.style={circle, draw=black!100, fill=magenta!100, thick, inner sep=0pt,minimum size=2mm}, 
				red/.style={circle, draw=black!100, fill=red!100, thick, inner sep=0pt,minimum size=2mm} ]
				
		\path 	(-4,0) 	node 	(d1) [elt, label={[xshift=-0pt, yshift = -25 pt] $C_{14}$} ] {}
		        		(-2,0) 	node 	(d2) [elt, label={[xshift=-0pt, yshift = -25 pt] $C_{12}$} ] {}
		      		( 0,0) 	node  	(d3) [elt, label={[xshift=0pt, yshift = -25 pt] $C_5$} ] {}
		       		( 2,0) 	node  	(d4) [elt, label={[xshift=0pt, yshift = -25 pt] $C_7$} ] {}
		        		( 4,0) 	node 	(d5) [elt, label={[xshift=0pt, yshift = -25 pt] $C_9$} ] {}
		        		( 0,1) 	node 	(d6) [elt, label={[xshift=12pt, yshift = -15 pt] $C_4$} ] {}
		        		( 0,2) 	node 	(d0) [elt, label={[xshift=12pt, yshift = -15 pt] $C_3$} ] {};
		\path 	(.7,2.7)     node         (m1) [elt, label={[xshift=13pt, yshift = -15 pt] $C_1$} ] {}
				(-.7, 2.7)	node         (m2) [elt, label={[xshift=-12pt, yshift = -15 pt] $C_2$} ] {};
		\path 	(1,0) 		node		(m3)  [elt, label={[xshift=0pt, yshift = -25 pt] $C_6$} ] {}
				(3,0) 		node		(m4)  [elt, label={[xshift=0pt, yshift = -25 pt] $C_2$} ] {}
				(5,0) 		node		(m8)  [elt, label={[xshift=0pt, yshift = -25 pt] $C_{10}$} ] {};
		\path 	(-1,0) 	node		(m5)  [elt, label={[xshift=0pt, yshift = -25 pt] $C_{11}$} ] {}
				(-3,0) 	node		(m6)  [elt, label={[xshift=0pt, yshift = -25 pt] $C_{13}$} ] {}
				(-5,0) 	node		(m7)  [elt, label={[xshift=0pt, yshift = -25 pt] $C_{15}$} ] {};

		\draw [black,line width=1pt ] (m7) -- (d1) -- (m6) -- (d2) -- (m5) -- (d3);
		\draw [black,line width=1pt ] (d3) -- (m3)  -- (d4) -- (m4) -- (d5) -- (m8) ;
		\draw [black,line width=1pt ] (d3) -- (d6) -- (d0) ;		
		\draw [black,line width=1pt ] (d0) -- (m1) ;		
		\draw [black,line width=1pt ] (d0) -- (m2) ;		
	\end{tikzpicture}
\caption{Intersection configuration of inaccessible divisors for the Takasaki system}
\label{dynkintakasaki}
\end{figure}

In particular the self-intersection numbers of these curves are given by
\begin{equation}
(C_4)^2 = -4, \qquad (C_i)^2 = -2 ~~\text{    otherwise,}
\end{equation}
and their intersection configuration is given in \autoref{dynkintakasaki}.
It is natural to ask how this configuration of the curves $C_i$ transforms into that of the inaccessible divisors on $\X^{\Ok}$, which is encoded in the Dynkin diagram of type $E_6^{(1)}$.
From now on we take $\varphi$ as a map from the minimal surface $\X^{\Tak}_{m}$ to $\X^{\Ok}$. The fact that the transformation \eqref{TaktoOktransformation} is canonical in the sense of equation \eqref{dqdp=dfdg} means that the rational 2-forms $\omega^{\Tak}$ and $\omega^{\Ok}$ extended from $dq \wedge dp$ and $4 df\wedge dg$ respectively are related by
\begin{equation}
\varphi^* \omega^{\Ok} = \omega^{\Tak}.
\end{equation} 
While the pole divisor of $\omega^{\Ok}$ provides the effective anticanonical divisor of $\X^{\Ok}$, it turns out that the anticanonical divisor class of $\X^{\Tak}_m$ is not effective.
\begin{proposition}
The divisor of $\omega^{\Tak}$ on $\X^{\Tak}_m$ is given by
\begin{equation}
\begin{aligned}
- \operatorname{div} \omega^{\Tak} &=  C_3 + 2 C_4 + 5 C_5 + 4 C_6 + 3 C_7 + 2 C_8 \\
&\qquad + C_9+ 4C_{11} + 3 C_{12} + 2 C_{13} + C_{14} \\
&\qquad \qquad  - F_3 - F_5 - F_{14} - F_{20}.
\end{aligned}  
\end{equation}
\end{proposition}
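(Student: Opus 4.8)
The plan is to determine $\operatorname{div}\omega^{\Tak}$ by combining the explicit local description of $\omega^{\Tak}$ from the symplectic atlas with adjunction on $\X^{\Tak}_m$, using that $\operatorname{div}\omega^{\Tak}$ is a canonical divisor since $\omega^{\Tak}$ is a globally defined meromorphic $2$-form.

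First I would pin down the zeros and confine the poles. On the fibre $E^{\Tak}_t = \X^{\Tak}_m \setminus \bigcup_j C_j$, the atlas of \autoref{theoremTakasakiSymplectic} writes $\omega^{\Tak} = y_i\, dx_i \wedge dy_i$ in each chart $(x_i,y_i)$, which is holomorphic and vanishes to order exactly one along $\{y_i = 0\}$. Since $\{y_2=0\}, \{y_1=0\}, \{y_4=0\}, \{y_3=0\}$ are the accessible parts of $F_3, F_5, F_{14}, F_{20}$, and these four charts together with the affine chart $\C^2_{q,p}$ (on which $\omega^{\Tak} = dq\wedge dp$ is holomorphic and nowhere vanishing) cover $E^{\Tak}_t$, the form has on $E^{\Tak}_t$ only simple zeros along $F_3, F_5, F_{14}, F_{20}$ and no poles. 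Hence
\begin{equation*}
\operatorname{div}\omega^{\Tak} = F_3 + F_5 + F_{14} + F_{20} - \sum_{j=1}^{15} m_j\, C_j
\end{equation*}
for nonnegative integers $m_j$ to be found.

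To determine the $m_j$ I would use adjunction. Each $C_j$ and each $F_i$ is a smooth rational curve, so $\operatorname{div}\omega^{\Tak} \cdot \Gamma = -2 - \Gamma^2$ for every such $\Gamma$. Inserting the self-intersections $(C_4)^2=-4$, $(C_j)^2=-2$ otherwise and $(F_i)^2=-1$ (the $F_i$ being the terminal $(-1)$-curves of \autoref{appendixB}, unaffected by the minimising blowdowns), together with the incidences from \autoref{dynkintakasaki} and the fact that each $F_i$ meets a single $C_j$, namely $F_3$–$C_1$, $F_5$–$C_2$, $F_{14}$–$C_{10}$, $F_{20}$–$C_{15}$, the intersections with the $C_k$ yield the linear system
\begin{equation*}
\sum_{j} m_j\, (C_j\cdot C_k) = b_k, \qquad b_k = \sum_i (F_i\cdot C_k) + 2 + (C_k)^2,
\end{equation*}
so that $b_k=1$ for $k\in\{1,2,10,15\}$, $b_4=-2$, and $b_k=0$ otherwise. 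A direct substitution confirms that the coefficients in the statement solve this system.

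The main obstacle is uniqueness, because the intersection form of the $C_j$ alone is only negative semidefinite. Leaf-to-root elimination along the two five-node arms and the branch through the $(-4)$-curve $C_4$ makes the final pivot at $C_5$ vanish, so the form is degenerate with one-dimensional kernel spanned by the fundamental cycle $Z = \sum_j z_j C_j$ with $(z_j) = (1,1,2,2,6,5,4,3,2,1,5,4,3,2,1)$, which satisfies $Z\cdot C_k=0$ for all $k$; thus the $C_k$-equations fix $(m_j)$ only modulo $Z$. I would remove this one free parameter using the curves $F_i$: since $Z\cdot F_3 = z_1\,(C_1\cdot F_3) = 1 \neq 0$, the relation $\operatorname{div}\omega^{\Tak}\cdot F_3 = -2-(F_3)^2 = -1$ forces $m_1=0$ and hence determines the scale, and the companion relations for $F_5, F_{14}, F_{20}$ give $m_2=m_{10}=m_{15}=0$ consistently, yielding the claimed coefficients uniquely. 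As an independent verification that sidesteps the degeneracy entirely, one can instead propagate $\operatorname{div}(dq\wedge dp) = -2\{q=\infty\} - 2\{p=\infty\}$ on $\p^1\times\p^1$ through the twenty blowups of \autoref{appendixB}, using that each blowup at a centre $c$ attaches the new exceptional curve with coefficient $1 + \sum_{C\ni c}\nu_C$, where $\nu_C$ is the coefficient of $C$ in the divisor computed so far, and then through the four blowdowns to $\X^{\Tak}_m$; the delicate part there is the accounting at the infinitely near centres, where several curves pass through $c$ and the contributions accumulate.
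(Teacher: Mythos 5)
Your proposal is correct, but it takes a genuinely different route from the paper's own proof. The paper proceeds by brute force: it rewrites $\omega^{\Tak}$ in local charts covering each curve and reads off pole/zero orders one curve at a time (e.g.\ $\omega^{\Tak} = \tfrac{dv_9\wedge du_9}{v_9^4(8+u_9 v_9)^2}$ in the chart $(u_9,v_9)$ yields the term $4C_6$, and $\omega^{\Tak} = v_3\, dv_3\wedge du_3$ yields the $-F_3$ term). You instead use only two global inputs already established earlier in the paper: the polynomial symplectic atlas of \autoref{theoremTakasakiSymplectic}, which pins down $\operatorname{div}\omega^{\Tak}$ on the open part $E^{\Tak}_t$ as exactly $F_3+F_5+F_{14}+F_{20}$, and the intersection data of \autoref{dynkintakasaki}, which via adjunction ($K\cdot\Gamma=-2-\Gamma^2$ for smooth rational $\Gamma$, applied to the canonical divisor $\operatorname{div}\omega^{\Tak}$) forces the coefficients along the $C_j$. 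The degeneracy of the $C$-block of the intersection form is the delicate point and you handle it correctly: your elimination sketch checks out (each five-node arm contributes $5/6$ and the branch through $C_1,C_2,C_3,C_4$ contributes $1/3$ to the $C_5$ pivot, giving $-2+\tfrac56+\tfrac56+\tfrac13=0$), so the kernel is spanned by your cycle $Z$, and since $Z\cdot F_i\neq 0$ the four equations $K\cdot F_i=-1$ remove the residual ambiguity. Two minor caveats: your claim that the $m_j$ are \emph{nonnegative} is not justified at that stage, but it is also never used, since the nineteen linear equations determine the coefficients uniquely over $\mathbb{R}$ regardless of sign; and the one-dimensionality of the kernel could alternatively be quoted from Zariski's lemma once $Z$ is exhibited, rather than argued by elimination. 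As for what each approach buys: the paper's computation is self-contained and implicitly re-verifies the intersection configuration, whereas yours replaces a dozen further chart computations by combinatorics and explains structurally why no other coefficient vector is possible; note also that your proposed "independent verification" by propagating $\operatorname{div}(dq\wedge dp)$ through the twenty blowups and four blowdowns is, in substance, the paper's method, merely organised via the blowup formula for canonical divisors instead of chart-by-chart computation.
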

\begin{proof}
This is obtained by a standard computation rewriting $\omega^{\Tak}$ in charts to cover the exceptional divisors. 
For example the curve $C_6$ corresponds to $F_9 - F_{10}$ and is given in the chart $(u_9, v_9)$ by $v_9=0$. The 2-form in this chart is computed directly to be
\begin{equation}
\omega^{\Tak} = \frac{dv_9 \wedge du_9}{v_9^4 (8 + u_9 v_9)^2},
\end{equation}
so we find the term $4 C_6$ in $- \operatorname{div} \omega^{\Tak}$. 
On the other hand the last exceptional divisors $F_3, F_5, F_{14}, F_{20}$ appear with negative coefficients in $- \operatorname{div} \omega^{\Tak}$ because the symplectic form has zeroes along them, for example in the chart $(u_3, v_3)$on $F_3$ we have
\begin{equation}
\omega^{\Tak} = v_3 dv_3 \wedge du_3,
\end{equation}
which is to be expected given the form of $\omega^{\Tak}$ in the atlas provided in \autoref{theoremTakasakiSymplectic}.
\end{proof}
The unique effective anticanonical divisor of $\X^{\Ok}$ is given by
\begin{equation}
 - \operatorname{div} \omega^{\Ok} = D_0 + D_1 + 2 D_2 + 3 D_3 + 2  D_4 + D_5 + 2 D_6 
 , 
\end{equation}
where the irreducible components $D_i$ are the same as those giving $D^{\Ok}$ as in \autoref{appendixA}.
Direct calculation in charts shows that $\varphi$ has no indeterminacies on $\X^{\Tak}_m$, but that it blows down some of the curves $C_i$, specifically those indicated in magenta on \autoref{mintakasakisurface}.
We apply 8 extra blowups to the images of these curves on $\X^{\Ok}$, to obtain what we call the extended Okamoto surface $\tilde{\X}^{\Ok}$, for which the projection is denoted
\begin{equation*}
\rho : \tilde{\X}^{\Ok} \longrightarrow \X^{\Ok}.
\end{equation*}
We use the notation $z_1, \dots, z_8$ for the extra points on $\X^{\Ok}$ to be blown up, with the corresponding exceptional divisors being $L_1, \dots, L_8$. 
The locations of the extra points are indicated in \autoref{fig:extendedOkamotosurface}, and given explicitly in coordinates in \autoref{appendixA}.

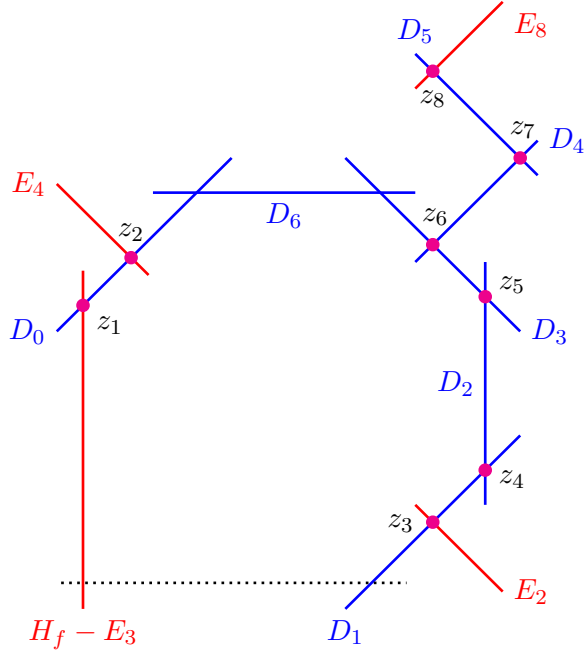
\begin{figure}[h]
\centering
	\begin{tikzpicture}[scale=1.15,>=stealth, basept/.style={circle, draw=red!100, fill=red!100, thick, inner sep=0pt,minimum size=1.2mm}, extrabasept/.style={circle, draw=magenta!100, fill=magenta!100, thick, inner sep=0pt,minimum size=1.5mm}]
			\draw [blue, line width = 1pt] 	(2.8,2.5) 	-- (-0.2,2.5)	node [pos = .5, below]  {$D_6$} node[pos=0, right] {};
			\draw [blue, line width = 1pt] 	(3.6,1.7) -- (3.6,-1.1)		node [pos = .5, left]  {$D_2$} node[pos=0, above, xshift=7pt] {};

			\draw [blue, line width = 1pt] 	(-1.3,0.9) 	-- (.7, 2.9)	 node [left] {} node[pos = 0, left] {$D_0$};
			\draw [red, line width = 1pt] 	(-.25,1.55) 	-- (-1.3,2.6)	 node [left] {} node[pos=1, left] {$E_4$};
			
			\draw [blue, line width = 1pt] 	(4,-.3) node[left]{}	-- (2,-2.3)	 node [below left] {} node[below] {$D_1$};

			\draw [red, line width = 1pt] 	(2.8,-1.1) 	-- (3.8,-2.1)	 node [left] {} node[pos=1, right] {$E_2$};
			
			\draw [blue, line width = 1pt]	(4,.9) -- (2,2.9) node[ pos=0, right] {$D_3$};
			\draw [blue, line width = 1pt]	(2.8,1.7) -- (4.2,3.1) node [right] {$D_4$} ;
			\draw [blue, line width = 1pt]	(4.2,2.7) -- (2.8,4.1) node [above] {$D_5$};
			\draw [red, line width = 1pt]	(2.8,3.7) -- (3.8,4.7)  node [below right] {$E_8$};

			\draw [red, line width = 1pt]  (-1,1.6) -- (-1,-2.3) node[below] {$H_f - E_3$};
			\draw [black, dotted, line width = 1pt]  (-1.25,-2) 	-- (2.7,-2);
			\node (z1) at (-1, 1.2) [extrabasept,label={[yshift=-17pt, xshift=+10pt] $z_{1}$}] {};
			\node (z2) at (-.45, 1.75) [extrabasept,label={[yshift=0pt, xshift=0pt] $z_{2}$}] {};
			\node (z5) at (3, 1.9) [extrabasept,label={[yshift=1pt, xshift=1pt] $z_{6}$}] {};
			\node (z6) at (4, 2.9) [extrabasept,label={[yshift=1pt, xshift=1pt] $z_{7}$}] {};
			\node (z7) at (3, 3.9) [extrabasept,label={[yshift=-20pt, xshift=0pt] $z_{8}$}] {};
			\node (z8) at (3.6, 1.3) [extrabasept,label={[yshift=-5pt, xshift=+10pt] $z_{5}$}] {};
			\node (z9) at (3.6, -.7) [extrabasept,label={[yshift=-13pt, xshift=+10pt] $z_{4}$}] {};
			\node (z10) at (3, -1.3) [extrabasept,label={[yshift=-10pt, xshift=-12pt] $z_{3}$}] {};

	\end{tikzpicture}
	\caption{Extra blowups for the extended Okamoto surface}
	\label{fig:extendedOkamotosurface}
\end{figure}

Denote the proper transform under $\rho$ of $D_i$ by $\tilde{D}_i \in \operatorname{Div}(\tilde{\X}^{\Ok})$, so 
\begin{equation}
\begin{aligned}
\tilde{D}_0 &= E_3 - E_4 - L_1 -L_2,\\
\tilde{D}_1 &= E_1 - E_2 - L_3 -L_4, \\
\tilde{D}_2 &= H_f - E_1 - E_5 - L_4 - L_5, \\
\tilde{D}_3 &= E_5 - E_6 - L_5 - L_6,
\end{aligned}
\qquad 
\begin{aligned}
\tilde{D}_4 &= E_6 - E_7 - L_6 - L_7, \\
\tilde{D}_5 &= E_7 - E_8 - L_7 - L_8, \\
\tilde{D}_6 &= H_g - E_3 - E_5. 
\end{aligned}
\end{equation}
The pole divisor of the rational 2-form $\tilde{\omega}^{\Ok} = \rho^* \omega^{\Ok}$ is given in terms of these and the extra exceptional divisors $L_i$ by 
\begin{equation}
- \operatorname{div}  \tilde{\omega}^{\Ok} = \tilde{D}_0 + \tilde{D}_1+ 2 \tilde{D}_2 + 3 \tilde{D}_3 + 2 \tilde{D}_4 + \tilde{D}_5 +2 \tilde{D}_6 + 2 L_4 + 4 L_5 + 4 L_6 + 2 L_7 
.
\end{equation}

\begin{theorem} \label{theorem7}
After minimisation of the Takasaki surface, the map given by the transformation \eqref{TaktoOktransformation} decomposes as 
\begin{equation*}
\rho \circ \tilde{\varphi} : \X_m^{\Tak} \longrightarrow \X^{\Ok},
\end{equation*}
where the factor
\begin{equation*}
\tilde{\varphi} : \X^{\Tak}_m \longrightarrow \tilde{\X}^{\Ok},
\end{equation*}
is a rational morphism and has no critical locus.
%
%
Then $\tilde{\varphi} : \X_m^{\Tak} \rightarrow \tilde{\X}^{\Ok}$ maps the curves $C_i$ according to
\begin{equation}
\tilde{\varphi} : 
\left\{
\begin{aligned}
C_1 &\mapsto L_2, 			& C_2 &\mapsto L_1, 		& C_3 & \mapsto \tilde{D}_0, 	& C_4 & \mapsto  \tilde{D}_6, 	& C_5 & \mapsto \tilde{D}_3,	 \\
C_6 & \mapsto L_6, 			& C_7 & \mapsto \tilde{D}_4, 	& C_8 & \mapsto L_7, 		& C_9 & \mapsto \tilde{D}_5, 	& C_{10} & \mapsto L_8,		\\
C_{11} & \mapsto L_5,		& C_{12} & \mapsto \tilde{D}_2, & C_{13} & \mapsto L_4,		& C_{14} & \mapsto \tilde{D}_1,	& C_{15} & \mapsto L_3,
\end{aligned}
\right.
\end{equation} 
and the last exceptional divisors $F_3, F_5, F_{14}, F_{20}$ by
\begin{equation}
\tilde{\varphi} ~:~ F_{3} \mapsto H_f - E_3 - L_1, \quad  F_5 \mapsto E_4 - L_2, 	\quad F_{14}  \mapsto E_8 - L_8,	\quad F_{20}  \mapsto E_2- L_3.
\end{equation}

\end{theorem}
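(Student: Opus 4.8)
The plan is to reduce the entire statement to explicit computations in coordinate charts, taking \eqref{TaktoOktransformation} as the defining formula for the map and building on what is already established for the open surfaces. The Proposition on the restriction \eqref{mapopensurfaces} already shows that $\varphi$ is a morphism with empty critical locus away from $D^{\Tak}$, ramified exactly along the parts of $E_2, E_4, E_8$ and the proper transform of $\{f=0\}$; in particular it fixes the images of the interiors of $F_3, F_5, F_{14}, F_{20}$. As recorded just before the statement, $\varphi$ is moreover already a morphism $\X^{\Tak}_m \to \X^{\Ok}$ contracting precisely the magenta curves, and by construction $\rho$ is the blowup of $\X^{\Ok}$ at the eight image points $z_1, \dots, z_8$ of those curves. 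Hence the lift $\tilde{\varphi} = \rho^{-1} \circ \varphi$ is at least a rational map, and the two things to prove are (i) that $\tilde{\varphi}$ is an everywhere-defined morphism with finite fibres, and (ii) the explicit table of image divisors.

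For (i) I would work chart by chart via the universal property of blowing up. Near a contracted magenta curve $C$ I would choose local coordinates $(s,w)$ on $\X^{\Tak}_m$ with $C = \{w = 0\}$, together with local coordinates $(a,b)$ on $\X^{\Ok}$ centred at the corresponding point $z_k$, and check from \eqref{TaktoOktransformation} that $a$ and $b$ share a common factor $w$ to first order, say $a = w\,\alpha(s,w)$ and $b = w\,\beta(s,w)$ with $\alpha, \beta$ not simultaneously zero and with $\alpha/\beta$ nonconstant along $C$. This makes the pulled-back ideal sheaf of $z_k$ locally principal, generated by $w$, so the lift extends across $C$ as a morphism carrying $C$ onto the exceptional divisor $L_k$ with ratio $[\alpha:\beta]$ as coordinate, and in particular $C$ is no longer contracted. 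Since $\tilde{\varphi}$ is a proper morphism of smooth projective surfaces, having finite fibres is equivalent to contracting no curve; the only curves contracted by $\varphi$ were the magenta ones, and these now surject onto the $L_k$, so $\tilde{\varphi}$ has empty critical locus.

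For (ii) I would parametrise each curve in a chart and push it forward. As a clean representative, the Proposition already gives $\left.(u_5,v_5)\right|_{v_5=0} \mapsto (f,g) = (0, u_5/(32 a_1^2) + t/2)$, so $F_5$ maps onto the proper transform of $\{f=0\}$, and passing to $\tilde{\X}^{\Ok}$ one subtracts the extra exceptional divisor through which this image runs; the remaining $F_j$ are handled identically, with the leading behaviour of \eqref{TaktoOktransformation} locating the image on $E_2$, $E_4$ or $E_8$. The magenta curves are exactly those whose image collapses to a point, each such point being one of the $z_k$, while the blue curves $C_3, C_4, C_5, C_7, C_9, C_{12}, C_{14}$ map onto the proper transforms $\tilde{D}_i$. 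To promote these set-theoretic images to the divisor-class equalities in the statement one needs, in addition, the local multiplicity of $\tilde{\varphi}$ along each curve, recording which of $E_i, L_k, H_f, H_g$ the image meets. A convenient global cross-check is the relation $\tilde{\varphi}^* \tilde{\omega}^{\Ok} = \omega^{\Tak}$, which follows from $\varphi^* \omega^{\Ok} = \omega^{\Tak}$ and $\rho^* \omega^{\Ok} = \tilde{\omega}^{\Ok}$: comparing $\operatorname{div}\omega^{\Tak}$ from the preceding Proposition with $\tilde{\varphi}^*\operatorname{div}\tilde{\omega}^{\Ok}$ plus the ramification divisor along $F_3, F_5, F_{14}, F_{20}$ pins down the action of $\tilde{\varphi}^*$ on divisor classes and so confirms the coefficients in the image table.

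The main obstacle is organisational but genuinely laborious: the same local analysis must be carried out for all fifteen curves $C_i$, the four divisors $F_j$, and near each of their intersection points, while keeping track of proper transforms through the Takasaki blowups, the four minimising blowdowns of \autoref{appendixB}, and the eight Okamoto plus eight extra blowups; the degree-two ramified nature of $\varphi$ is what makes the distinction between ``the image is this curve'' and ``the image divisor class is this'' delicate, since a single curve may cover its image doubly. I would control this by exploiting the manifest symmetry between the two arms $C_6, \dots, C_{10}, F_{14}$ and $C_{11}, \dots, C_{15}, F_{20}$, and between the pair $F_3, F_5$ near $q = 0$, so that only a few representative charts need to be computed in full and all remaining cases follow by an analogous calculation.
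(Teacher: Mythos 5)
Your proposal is correct and takes essentially the same approach as the paper: the paper's proof is precisely a chart-by-chart computation of $\tilde{\varphi}$ (no indeterminacies, empty critical locus, and identification of each image curve by substituting local equations such as $v_{15}=0$ into the transformation \eqref{TaktoOktransformation} rewritten in charts like $(u_{15},v_{15})$ and $(r_5,s_5)$), and your universal-property-of-blowup framing, $2$-form cross-check and symmetry bookkeeping are just a more structured packaging of those same local computations. One remark: your conclusion that $F_5$ maps onto the proper transform $H_f - E_3 - L_1$ of $\{f=0\}$ (and correspondingly $F_3$ onto $E_4 - L_2$) is the assignment actually supported by the paper's Proposition on \eqref{mapopensurfaces}, by \autoref{dynkinsforphi}, and by direct computation, so the displayed formula in \autoref{theorem7}, which swaps $F_3$ and $F_5$, appears to contain a typo in the statement rather than indicating an error in your argument.
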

\begin{proof}
First, we verify that there are no indeterminacies of $\tilde{\varphi}$ on $\X^{\Tak}_m$ by direct calculation in charts. 
The fact that the critical locus is empty is verified by computing the map in charts for $\X^{\Tak}_m$ as well as the new coordinates $(r_i, s_i)$ and $(R_i, S_i)$ covering the exceptional divisors $L_i$ on $\tilde{\X}^{\Ok}$. 
For example the fact that $C_{11}$, corresponding to $F_{15} - F_{16}$, is mapped to $L_5$ can be verified by rewriting the transformation \eqref{TaktoOktransformation} in charts $(u_{15},v_{15})$ and $(r_5,s_5)$, which gives
\begin{equation}
\begin{aligned}
r_5 &= \frac{ \left( - 64 v_{15} \left( t + 4 a_1 v_{15}^2 \right) + u_{15} \left(1 + 8 t v_{15}^2 + 32 a_1 v_{15}^4 \right) \right)^2}{64 (8 - u_{15} v_{15})^2},\\
s_5 &= \frac{ 16 v_{15} \left( 8 - u_{15}v_{15} \right)}{64 v_{15} \left( t + 4 a_1 v_{15}^2 \right) - u_{15} \left( 1 + 8 t v_{15}^2 + 32 a_1 v_{15}^4 \right)},
\end{aligned}
\end{equation}
into which substitution of the local equation $v_{15}=0$ of $C_{11}$ leads to 
\begin{equation}
\left. (u_{15},v_{15})\right |_{v_{15} = 0} \mapsto (r_5, s_5) = \left( \frac{u_{15}^2}{4096}, 0 \right),
\end{equation}
which given that $s_5=0$ is a local equation of $L_5$ allows us to deduce that $C_{11}$ is mapped surjectively onto $L_5$.
The rest of the calculations are similar, albeit with more complicated rational functions.
\end{proof}

\begin{proposition}
The action of $\tilde{\varphi}$ by pullback on components of the image $\tilde{\varphi}(D^{\Tak})$ is given by:
\begin{equation}
\tilde{\varphi}^* : 
\left\{
\begin{aligned}
\tilde{D}_0 	&	\mapsto 2 C_3,   
& \tilde{D}_1 	&	 \mapsto 2C_{14},  	
&\tilde{D}_2 	&	\mapsto  2 C_{12}, 
&\tilde{D}_3 	&	\mapsto 2 C_5, \\
\tilde{D}_4 	&	\mapsto 2 C_7, 
&\tilde{D}_5 	&	\mapsto 2 C_9, 
&\tilde{D}_6 	&	\mapsto C_4,
& &  \\
\vspace{+3em}
 L_1 		&	\mapsto C_2, 
& L_2 		&	\mapsto C_1,  
& L_3 		&	\mapsto C_{15},   
& L_4 		&	\mapsto C_{13},  \\
 L_5 		&	\mapsto C_{11}, 
& L_6		&	\mapsto C_6,  
& L_7 		&	\mapsto C_8, 
& L_8		&	\mapsto C_{10}.
\end{aligned}
\right.
\end{equation}
\end{proposition}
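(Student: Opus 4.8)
The plan is to use that, by \autoref{theorem7}, $\tilde\varphi:\X^{\Tak}_m\to\tilde\X^{\Ok}$ is a morphism with empty critical locus between smooth projective surfaces, hence finite, and of degree $2$ since $f=(q/2)^2$ in \eqref{TaktoOktransformation} makes the interior $(q,p)$-chart a double cover of the $(f,g)$-chart. For such a finite double cover one has $\tilde\varphi^*\Gamma=\sum_k e_k\Gamma_k$ with $\sum_k e_k\deg(\Gamma_k\to\Gamma)=2$. First I would identify the preimage of each component $\Gamma$ of $\tilde\varphi(D^{\Tak})$: as $\tilde\varphi$ sends the interior onto the interior, $\tilde\varphi^{-1}(\Gamma)$ is supported on the boundary curves $C_i$ and $F_j$, and the image data of \autoref{theorem7} is a bijection in which the $F_j$ go only to the (red) ramification curves; hence the full preimage of each $\tilde D_i$ or $L_j$ is the single curve $C_i$ recorded there, and $\tilde\varphi^*\Gamma=e_\Gamma C_i$ for one positive integer $e_\Gamma$. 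The proposition thus reduces to computing $e_\Gamma\in\{1,2\}$ in each case.

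To find $e_\Gamma$ I would apply the projection formula $(\tilde\varphi^*\Gamma)^2=2\,\Gamma^2$, giving $e_\Gamma^2\,(C_i)^2=2\,\Gamma^2$. The needed self-intersections come from the blow-up data: each $\tilde D_j$ is the $\rho$-proper transform of a $(-2)$-curve $D_j$, so $(\tilde D_j)^2=(D_j)^2-m_j=-2-m_j$, where $m_j$ is the number of extra centres $z_k$ lying on $D_j$; the displayed formulas for $\tilde D_0,\dots,\tilde D_6$ give $m_j=2$ for $j\le 5$ and $m_6=0$, so $(\tilde D_j)^2=-4$ for $j\le 5$ and $(\tilde D_6)^2=-2$, while each $L_k$ is a $(-1)$-curve. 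Together with $(C_i)^2=-2$ for $i\ne 4$ and $(C_4)^2=-4$, the equation $e_\Gamma^2(C_i)^2=2\Gamma^2$ has the unique positive solution $e_\Gamma=2$ for $\Gamma\in\{\tilde D_0,\dots,\tilde D_5\}$ and $e_\Gamma=1$ for $\Gamma\in\{\tilde D_6,L_1,\dots,L_8\}$, which is exactly the asserted list.

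Geometrically $e_\Gamma=2$ says $C_i$ maps isomorphically onto $\Gamma$ and $\tilde\varphi$ ramifies along it, whereas $e_\Gamma=1$ says $C_i$ is an unramified double cover of $\Gamma$; that the only coefficient-$1$ case among the $\tilde D$'s is $\tilde D_6$, whose preimage is the unique $(-4)$-curve $C_4$, is forced by the relation $(C_4)^2=2(\tilde D_6)^2$. The main obstacle I anticipate is the accurate computation of the $(\tilde D_j)^2$ on the extended surface, i.e. tracking which of $z_1,\dots,z_8$ lie on which $D_j$ and at which of their mutual intersection points, and correspondingly the clean separation of the ramified-isomorphism case from the \'etale-double-cover case. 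This can be carried out, and independently checked, by the local method of the proof of \autoref{theorem7}: writing a local equation of $\Gamma$ in the charts for $\X^{\Tak}_m$ and reading off the order of vanishing along $C_i$ yields $e_\Gamma$ directly.

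Finally I would record a global check through the ramification formula $K_{\X^{\Tak}_m}=\tilde\varphi^*K_{\tilde\X^{\Ok}}+R$. Since $\operatorname{div}\omega^{\Tak}$ and $\operatorname{div}\tilde\omega^{\Ok}$ represent the canonical classes and $\tilde\varphi^*\tilde\omega^{\Ok}=\omega^{\Tak}$, substituting the known divisors together with the pullbacks just obtained yields
\[
R=C_3+C_5+C_7+C_9+C_{12}+C_{14}+F_3+F_5+F_{14}+F_{20},
\]
a reduced divisor supported exactly on the six curves $C_i$ carrying coefficient $2$ in the pullback and on the four ramification curves $F_j$. This is precisely the reduced ramification divisor expected of a double cover, and so confirms all the stated coefficients simultaneously.
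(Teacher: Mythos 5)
Your proposal is correct, but it takes a genuinely different route from the paper's proof. The paper obtains each coefficient by direct local computation: it rewrites $\tilde{\varphi}$ in explicit charts, substitutes the local equation of the source curve, and reads off the order of vanishing of the local equation of the image (e.g. $\tilde{v}_3$ pulls back with the factor $v_1^2$, giving $\tilde{\varphi}^*\tilde{D}_0 = 2C_3$, whereas $s_2$ pulls back with a simple factor of $v_2$, giving $\tilde{\varphi}^*L_2 = C_1$). You instead take the qualitative content of \autoref{theorem7} as input --- $\tilde{\varphi}$ is a finite degree-$2$ morphism, blows down no curves, and induces a bijection between the fifteen $C_i$ and the fifteen components $\tilde{D}_j$, $L_k$ --- and determine every multiplicity by pure intersection theory: the preimage of each component $\Gamma$ is the single curve $C_i$ matched to it, so $\tilde{\varphi}^*\Gamma = e_\Gamma C_i$, and the projection formula $e_\Gamma^2\,(C_i)^2 = 2\,\Gamma^2$ fixes $e_\Gamma$ once one knows $(\tilde{D}_j)^2 = -4$ for $j\le 5$, $(\tilde{D}_6)^2 = -2$, $(L_k)^2 = -1$, $(C_4)^2=-4$ and $(C_i)^2=-2$ otherwise; these self-intersections are correctly extracted from the blowup data, and the resulting values of $e_\Gamma$ reproduce the stated list. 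Your closing consistency check via $K_{\X^{\Tak}_m} = \tilde{\varphi}^*K_{\tilde{\X}^{\Ok}} + R$, yielding the reduced divisor $R = C_3+C_5+C_7+C_9+C_{12}+C_{14}+F_3+F_5+F_{14}+F_{20}$, is also correct and agrees with the ramification locus identified earlier in the paper. As for what each approach buys: the paper's chart computation is self-contained and proves the image data and multiplicities simultaneously, while your argument gets the multiplicities essentially for free once \autoref{theorem7} is known and explains conceptually why coefficient $2$ occurs exactly where a $(-2)$-curve maps birationally onto a $(-4)$-curve and coefficient $1$ where a curve doubly covers its image; the price is that your argument genuinely depends on the surjectivity statements of \autoref{theorem7}, which the paper proves by the same kind of local computation, so it complements rather than replaces that work. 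One small imprecision to fix: in the $e_\Gamma=1$ cases the induced map $C_i \to \Gamma$ is a degree-$2$ map of rational curves, hence necessarily ramified at two points, so it is not an ``unramified double cover''; the correct statement, and the one your ramification check actually uses, is that $C_i$ does not occur in the ramification divisor $R$ of $\tilde{\varphi}$, i.e. $\tilde{\varphi}$ is unramified at the generic point of $C_i$.
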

\begin{proof}
This follows from direct calculation using local equations of the curves in charts, along the same lines as in the proof of \autoref{theorem7}. For example, to see that $\tilde{\varphi}^*(\tilde{D}_0)=2C_3$, we take the chart $(u_3,v_3)$ for $\tilde{\X}^{\Ok}$ as in \autoref{appendixA}, and relabel $u,v$ by $\tilde{u}, \tilde{v}$ to distinguish them from the charts for the minimised Takasaki surface $\X_{m}^{\Tak}$. 
In this chart $(\tilde{u}_3,\tilde{v}_3)$, the local equation of $\tilde{D}_0$ is $\tilde{v}_3 = 0$.
On $\X_{m}^{\Tak}$ we take the chart $(u_1,v_1)$ as defined in \autoref{appendixB}, in which $C_3$ (the image of $F_1 - F_2 - F_4$ under the minimisation $\rho$) has local equation $v_1 = 0$. 
Rewriting the mapping $\tilde{\varphi}$ in these charts using their definitions in \autoref{appendixA} and \autoref{appendixB}, we see that 
\begin{equation}
\begin{aligned}
\tilde{u}_3 &= \frac{1}{64} \left( 32 a_1+ u_1 \left( 8 + 8 t u_1 v_1^2 + u_1^3 v_1^4 \right) \right),\\
\tilde{v}_3 &= \frac{16 u_1^2 v_1^2}{32 a_1+ u_1 \left( 8 + 8 t u_1 v_1^2 + u_1^3 v_1^4 \right)},
\end{aligned}
\end{equation}
so the divisor $\tilde{D}_0$ with local equation $\tilde{v}_3=0$ is pulled back to twice the divisor defined by $v_1=0$, i.e. $2 C_3$. 

On the other hand, the cases in which divisors are pulled back without increase in multiplicity can be seen by similar calculations. 
For example to see that $\tilde{\varphi}^*(L_2)=C_1$, consider the chart $(r_2,s_2)$ for $\tilde{\X}^{\Ok}$ in which $L_2$ is given by $s_2=0$, and the chart $(u_2, v_2)$ for $\X_{m}^{\Tak}$, in which $C_1$ (which we recall is the image of $F_2-F_3$ under $\rho$) is given by $v_2=0$. 
Calculating by direct substitution into the defining equations \eqref{TaktoOktransformation} of the mapping we obtain a relation of the form
$s_2 = v_2 \left(\frac{u_2}{8} + v_2 P(u_2, v_2)\right)$, where $P$ is polynomial in $u_2$, $v_2$, so in this case the divisor $L_2$ with local equation $s_2=0$ is pulled back to $C_1$, counted with multiplicity one as opposed to two. 
The rest of the calculations are similar.
\end{proof}
So in particular we see that eight of the $-2$ curves $C_i$ are mapped to exceptional curves $L_j$ of the first kind, which can be understood through the formula
\begin{equation}
\tilde{\varphi}^* C \cdot \tilde{\varphi}^* C = ( \operatorname{deg} \tilde{\varphi} ) C \cdot C = 2 C \cdot C,
\end{equation}
so in particular
\begin{equation}
(\tilde{\varphi}^* L_i)^2 = 2 (L_i)^2 = -2.
\end{equation}
These exceptional curves $L_j$ are then contracted by $\rho$, leading to the configuration of $-2$ curves providing $D^{\Ok}$. 
We give a graphical depiction of the transformation of the configuration of curves $C_i$ to that of $D_j$ in \autoref{dynkinsforphi}, as well as the exceptional divisors $F_3, F_5, F_{14}, F_{20}$. 

Of the curves $C_i$ on $\X^{\Tak}_m$, it is $C_{3}, C_{4}, C_{5}, C_{7}, C_{9}, C_{12},$ and $C_{14}$ (indicated in blue on \autoref{mintakasakisurface}) that are mapped to components of the anticanonical divisor of $\X^{\Ok}$, while the rest of the $C_i$ (indicated in magenta on \autoref{mintakasakisurface}) are mapped first to $-1$ curves under $\tilde{\varphi}$ then are contracted to points by $\rho$.
The exceptional divisors $F_3$, $F_5$, $F_{14}$, $F_{20}$ which are not contained in $D^{\Tak}$ first become $-2$ curves under $\tilde{\varphi}$ then become the divisors $E_4$, $H_f-E_3$, $E_8$, $E_2$ respectively on $\X^{\Ok}$.

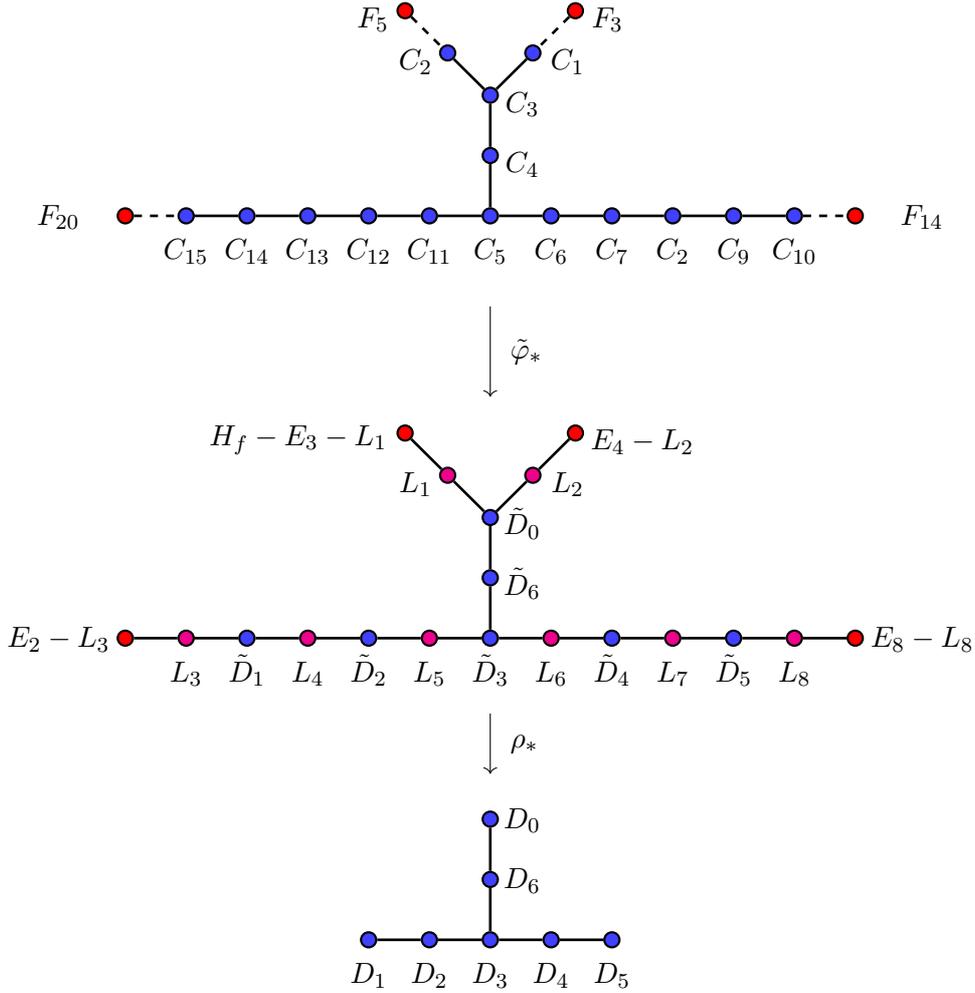
\begin{figure}[h] 
\centering
\begin{tikzpicture}[scale=.8, 
				elt/.style={circle,draw=black!100, fill=blue!75, thick, inner sep=0pt,minimum size=2mm},
				blk/.style={circle,draw=black!100, fill=black!75, thick, inner sep=0pt,minimum size=2mm}, 
				mag/.style={circle, draw=black!100, fill=magenta!100, thick, inner sep=0pt,minimum size=2mm}, 
				red/.style={circle, draw=black!100, fill=red!100, thick, inner sep=0pt,minimum size=2mm} ]
				
	\begin{scope}[xshift=0cm]	
		\path 	(-4,0) 	node 	(d1) [elt, label={[xshift=-0pt, yshift = -25 pt] $C_{14}$} ] {}
		        		(-2,0) 	node 	(d2) [elt, label={[xshift=-0pt, yshift = -25 pt] $C_{12}$} ] {}
		      		( 0,0) 	node  	(d3) [elt, label={[xshift=0pt, yshift = -25 pt] $C_5$} ] {}
		       		( 2,0) 	node  	(d4) [elt, label={[xshift=0pt, yshift = -25 pt] $C_7$} ] {}
		        		( 4,0) 	node 	(d5) [elt, label={[xshift=0pt, yshift = -25 pt] $C_9$} ] {}
		        		( 0,1) 	node 	(d6) [elt, label={[xshift=12pt, yshift = -15 pt] $C_4$} ] {}
		        		( 0,2) 	node 	(d0) [elt, label={[xshift=12pt, yshift = -15 pt] $C_3$} ] {};
		\path 	(.7,2.7)     node         (m1) [elt, label={[xshift=13pt, yshift = -15 pt] $C_1$} ] {}
				(-.7, 2.7)	node         (m2) [elt, label={[xshift=-12pt, yshift = -15 pt] $C_2$} ] {};
		\path 	(1,0) 		node		(m3)  [elt, label={[xshift=0pt, yshift = -25 pt] $C_6$} ] {}
				(3,0) 		node		(m4)  [elt, label={[xshift=0pt, yshift = -25 pt] $C_2$} ] {}
				(5,0) 		node		(m8)  [elt, label={[xshift=0pt, yshift = -25 pt] $C_{10}$} ] {};
		\path 	(-1,0) 	node		(m5)  [elt, label={[xshift=0pt, yshift = -25 pt] $C_{11}$} ] {}
				(-3,0) 	node		(m6)  [elt, label={[xshift=0pt, yshift = -25 pt] $C_{13}$} ] {}
				(-5,0) 	node		(m7)  [elt, label={[xshift=0pt, yshift = -25 pt] $C_{15}$} ] {};
		\path 	(1.4,3.4) 	node		(r1)  [red, label={[xshift=12pt, yshift = -15 pt] $F_3$} ] {}
				(-1.4,3.4) 	node		(r2)  [red, label={[xshift=-12pt, yshift = -15 pt] $F_5$} ] {};
		\path 	(6,0) 		node		(r3)  [red, label={[xshift=25pt, yshift = -12 pt] $F_{14}$} ] {}
				(-6,0) 	node		(r4)  [red, label={[xshift=-25pt, yshift = -12 pt] $F_{20}$} ] {};

		\draw [black,line width=1pt ] (m7) -- (d1) -- (m6) -- (d2) -- (m5) -- (d3);
		\draw [black,line width=1pt ] (d3) -- (m3)  -- (d4) -- (m4) -- (d5) -- (m8) ;
		\draw [dashed, line width=1pt ] (r4) -- (m7);
		\draw [dashed, line width=1pt ] (m8) -- (r3);
		\draw [black, dashed, line width=1pt ] (r1) -- (m1) ;		
		\draw [black, dashed, line width=1pt ] (r2) -- (m2) ;		
		\draw [black,line width=1pt ] (d3) -- (d6) -- (d0) ;		
		\draw [black,line width=1pt ] (d0) -- (m1) ;		
		\draw [black,line width=1pt ] (d0) -- (m2) ;		

	\end{scope}

	\draw [->] (0,-1.5)--(0,-3) node[pos=0.5, right] {~$\tilde{\varphi}_*$};

	\begin{scope}[yshift=-7cm]	
		\path 	(-4,0) 	node 	(d1) [elt, label={[xshift=-0pt, yshift = -25 pt] $\tilde{D}_1$} ] {}
		        		(-2,0) 	node 	(d2) [elt, label={[xshift=-0pt, yshift = -25 pt] $\tilde{D}_2$} ] {}
		      		( 0,0) 	node  	(d3) [elt, label={[xshift=0pt, yshift = -25 pt] $\tilde{D}_3$} ] {}
		       		( 2,0) 	node  	(d4) [elt, label={[xshift=0pt, yshift = -25 pt] $\tilde{D}_4$} ] {}
		        		( 4,0) 	node 	(d5) [elt, label={[xshift=0pt, yshift = -25 pt] $\tilde{D}_5$} ] {}
		        		( 0,1) 	node 	(d6) [elt, label={[xshift=12pt, yshift = -15 pt] $\tilde{D}_6$} ] {}
		        		( 0,2) 	node 	(d0) [elt, label={[xshift=12pt, yshift = -15 pt] $\tilde{D}_0$} ] {};
		\path 	(.7,2.7)     node         (m1) [mag, label={[xshift=13pt, yshift = -15 pt] $L_2$} ] {}
				(-.7, 2.7)	node         (m2) [mag, label={[xshift=-12pt, yshift = -15 pt] $L_1$} ] {};
		\path 	(1,0) 		node		(m3)  [mag, label={[xshift=0pt, yshift = -25 pt] $L_6$} ] {}
				(3,0) 		node		(m4)  [mag, label={[xshift=0pt, yshift = -25 pt] $L_7$} ] {}
				(5,0) 		node		(m8)  [mag, label={[xshift=0pt, yshift = -25 pt] $L_8$} ] {};
		\path 	(-1,0) 	node		(m5)  [mag, label={[xshift=0pt, yshift = -25 pt] $L_5$} ] {}
				(-3,0) 	node		(m6)  [mag, label={[xshift=0pt, yshift = -25 pt] $L_4$} ] {}
				(-5,0) 	node		(m7)  [mag, label={[xshift=0pt, yshift = -25 pt] $L_3$} ] {};
		\path 	(1.4,3.4) 	node		(r1)  [red, label={[xshift=25pt, yshift = -15 pt] $E_4-L_2$} ] {}
				(-1.4,3.4) 	node		(r2)  [red, label={[xshift=-40pt, yshift = -15 pt] $H_f-E_3-L_1$} ] {};
		\path 	(6,0) 		node		(r3)  [red, label={[xshift=25pt, yshift = -12 pt] $E_8-L_8$} ] {}
				(-6,0) 	node		(r4)  [red, label={[xshift=-25pt, yshift = -12 pt] $E_2-L_3$} ] {};

		\draw [black,line width=1pt ] (r4) -- (m7) -- (d1) -- (m6) -- (d2) -- (m5) -- (d3);
		\draw [black,line width=1pt ] (d3) -- (m3)  -- (d4) -- (m4) -- (d5) -- (m8) -- (r3);
		\draw [black,line width=1pt ] (r1) -- (m1) ;		
		\draw [black,line width=1pt ] (r2) -- (m2) ;		
		\draw [black,line width=1pt ] (d3) -- (d6) -- (d0) ;		
		\draw [black,line width=1pt ] (d0) -- (m1) ;		
		\draw [black,line width=1pt ] (d0) -- (m2) ;		
	\end{scope}
	
	\draw [->] (0,-8.25)--(0,-9.25) node[pos=0.5, right] {~$\rho_*$};

	\begin{scope}[yshift=-12cm]	
		\path 	(-2,0) 	node 	(d1) [elt, label={[xshift=-0pt, yshift = -25 pt] $D_{1}$} ] {}
		        (-1,0) node 	(d2) [elt, label={[xshift=-0pt, yshift = -25 pt] $D_{2}$} ] {}
		        ( 0,0) 	node  	(d3) [elt, label={[xshift=0pt, yshift = -25 pt] $D_{3}$} ] {}
		        ( 1,0) 	node  	(d4) [elt, label={[xshift=0pt, yshift = -25 pt] $D_{4}$} ] {}
		        ( 2,0) node 	(d5) [elt, label={[xshift=0pt, yshift = -25 pt] $D_{5}$} ] {}
		        ( 0,1) node 	(d6) [elt, label={[xshift=12pt, yshift = -12 pt] $D_{6}$} ] {}
		        ( 0,2) node 	(d0) [elt, label={[xshift=12pt, yshift = -12 pt] $D_{0}$} ] {};
		\draw [black,line width=1pt ] (d1) -- (d2) -- (d3) -- (d4)  -- (d5);
		\draw [black,line width=1pt ] (d3) -- (d6) -- (d0) ;		
	\end{scope}
	\end{tikzpicture}
	\caption{Behaviour of inaccessible divisors under the map $\varphi : \X^{\Tak}_m \rightarrow \X^{\Ok}$}
	\label{dynkinsforphi}
\end{figure}

\subsection{Symmetries}

In \cite{GLS} the following B\"acklund transformation for the fourth Painlev\'e equation is given. Let $\lambda=\lambda(t)$ be the solution of the fourth Painlev\'e equation with paramters $\alpha$ and  $\beta$, then a new solution $\lambda_1=\lambda_1(t)$ with parameters $\alpha_1$ and $\beta_1$ is given by
\begin{equation}
\lambda_1=\frac{\lambda'-\mu_1 \lambda^2-2\mu_1 t \lambda -\mu_2 b}{2\mu_1 \lambda}
\end{equation}
with 
\begin{equation}
\alpha_1=\frac{1}{4}(2\mu_1-2\alpha+3\mu_1\mu_2 b),\;\;\beta_1=-\frac{1}{2}(1+\mu_1\alpha+\mu_2 b/2)^2,
\end{equation}
where $\mu_1^2=\mu_2^2=1$ and $b^2=-2\beta$.
Using the change of variables we can easily verify the corresponding B\"acklund transformation for the Takasaki equation \eqref{takasakiscalar}. Let $q=q(t)$ be the solution with parameters $\alpha$ and $\beta$ then the new solution $q_1=q_1(t)$ with parameters $\alpha_1$ and $\beta_1$ as above is given by
\begin{equation}
q_1=\frac{\mu_3}{q}\sqrt{\frac{8q q'-\mu_1 q^4-8t \mu_1 q^2-16 \mu_2 b}{2\mu_1}}
\end{equation}
with additionally $\mu_3^2=1$.
Note that according to \cite[Th. 25.3]{GLS} we do not have a composition of B\"acklund transformations that would give $\beta=0$, so this formula does not provide a symmetry in the $\beta=0$ case. 

Alternatively we may consider the generators of the whole extended affine Weyl group $\widetilde{W}(A_2^{(1)})$ of symmetries of $\pain{IV}$ using their form as birational transformations of the variables $(f,g)$ from system \eqref{hamOkP4}. 
Lifting these under the transformation \eqref{TaktoOktransformation}, we find algebraic symmetries of the Takasaki system \eqref{takasakisystem}, in terms of both the original $(q,p)$ variables and $(x_1,y_1)$ from the atlas \eqref{takasakiatlas}. 
Note that for conciseness we consider only the symmetries of \eqref{hamOkP4} which fix the independent variable $t$, so elements of $\operatorname{Aut}(A_2^{(1)})$ of order two are excluded, though these can be lifted in a similar way.  
The symmetries of the Takasaki system take the form of a vector of algebraic relations 
$
\mathsf{F} (q,p,\tilde{q},\tilde{p}) = 0$  (resp. $\mathsf{G}(x_1,y_1,\tilde{x}_1,\tilde{y}_1) = 0$) ,
such that if $(q(t),p(t))$ (resp. $(x_1(t), y_1(t))$) solve the Takasaki system with parameters $a_i$, then $(\tilde{q}(t),\tilde{p}(t))$ (resp. $(\tilde{x}_1(t), \tilde{y}_1(t))$) solve the system with parameters $\tilde{a}_i$. 
We give the transformations of variables and parameters in \autoref{tab:symmetriestakasaki} corresponding to the generators $s_0, s_1, s_2$, and $\rho = \pi_2 \pi_1$ of the symmetry group written in \cite[Sec. 8.4.20]{KNY}, where $a_0 = 1 - a_1 - a_2$ is introduced for convenience.

\def\arraystretch{2}
\begin{table}[h]
  \centering 
$	\begin{array}{ c  | c  | c  | c | c }
 		 ~ & \mathsf{F} (q,p,\tilde{q},\tilde{p})=0 & \mathsf{G}(x_1,y_1,\tilde{x}_1,\tilde{y}_1)=0 & \tilde{a}_1 & \tilde{a}_2 \\
		 \hline
		 \multirow{3}{2em}{\centering $s_0$} 
		 & \tilde{q}^2 = q^2 \left( 1 - \frac{64 a_0}{\mathcal{Q}} \right) & \tilde{x}_1 = x_1 - \frac{8 a_0}{y_1^2 - 4 x_1 +8 t} & \multirow{3}{4em}{\centering $1-a_2$} & \multirow{3}{4em}{\centering $1 - a_1$}\\ 					
		&\tilde{q} \tilde{p} = q p + 4a_0 \left( 1 - \frac{4 q^2(q^2 + 4 t )}{\mathcal{Q}} + \frac{128 a_0 q^4}{\mathcal{Q}^2} \right) &  \tilde{y}_1^2 = y_1^2 - \frac{32 a_0}{y_1^2 - 4 x_1 +8 t}  &  & \\
		& \mathcal{Q}= q^4 + 8 t q^2 - 8 q p - 32 a_1& & &   \\
	\hline					
		 \multirow{2}{2em}{\centering $s_1$} & \tilde{q}^2=q^2 & \tilde{x}_1  = \frac{x_1 y_1^2 - 8 a_1}{y_1^2},  &  \multirow{2}{4em}{\centering $- a_1$} &  \multirow{2}{4em}{\centering $a_1 + a_2$} \\
		 					& \tilde{q}\tilde{p}=q p &  \tilde{y}_1^2 = y_1^2 & &  \\
	\hline
		 \multirow{2}{2em}{\centering $s_2$} & \tilde{q}^2 = q^2 \left( 1 + \frac{64 a_2}{q^4 + 8 t q^2 + 8 q p + 32 a_1} \right)& \tilde{x}_1 = x_1 &  \multirow{2}{4em}{\centering $a_1 + a_2$}&  \multirow{2}{4em}{\centering $- a_2$} \\
		 					& \frac{\tilde{q}^4 + 8 \tilde{q} \tilde{p} + 32 (a_1+a_2)}{\tilde{q}^2} =  \frac{q^4 + 8 q p + 32 a_1}{q^2} & \tilde{y}_1^2 = y_1^2 + \frac{8 a_2}{x_1} &  & \\
	\hline
		 \multirow{2}{2em}{\centering $\rho$} & \tilde{q}^2 = - \frac{q^4 + 8 t q^2 + 8 q p + 32a_1}{2 q^2} & \tilde{y}_1^2 = - 4 x_1 & \multirow{2}{4em}{\centering $1- a_1 -a_2$} & \multirow{2}{4em}{\centering $a_1$}  \\
		 					& \tilde{p} =  \frac{ \tilde{q}(q^4 - 8 q p - 32 a_1)}{8 q^2}-  \frac{32a_0 - \tilde{q}^4}{8\tilde{q}}  & \tilde{x}_1 = \frac{y_1^2 - 4 x_1 + 8 t }{4} &  & \\
		\hline 
  	\end{array}
$	
  \caption{Algebraic B\"acklund transformation symmetries of the Takasaki system}\label{tab:symmetriestakasaki}
\end{table}
\def\arraystretch{1}


Among the symmetries of the Takasaki system we may consider one corresponding to the discrete Painlev\'e equation appearing in \cite[Sec. 8.1.18]{KNY}, which we have rescaled such that it provides a B\"acklund transformation for \eqref{hamOkP4}.
The discrete system in question is 
\begin{equation}
f_{n+1} = 2 g_n - f_n - \frac{a_2}{g_n} - 2 t,  \qquad g_{n+1} = \frac{1}{2} f_{n+1} - g_n + \frac{\bar{a}_1}{f_{n+1}} + t, 
\end{equation}
where the parameters are now $n$-dependent, denoted $a_i = a_i(n)$, $\bar{a}_i = a_i(n+1)$, and evolve with the discrete dynamics according to 
\begin{equation} \label{dPparams}
\bar{a}_{1} = a_{1} - 1, \qquad \bar{a}_{2} = a_{2} + 1.
\end{equation}
Lifting this in the same way as above we obtain
\begin{equation}
\begin{gathered}
\bar{q}^2 = \frac{64 a_2 q^2}{q^4 + 8 t q^2 + 8 q p + 32 a_1} - \frac{q^4 + 8 t q^2 - 8 q p - 32 a_1}{2 q^2}   ,\\
 \frac{\bar{q}^4 + 32 \bar{a}_1}{ 8 \bar{q}^2} - \frac{\bar{p}}{\bar{q}} =  \frac{q^4 + 32 a_1 }{8 q^2} + \frac{p}{q},
\end{gathered}
\end{equation}
where $\bar{q} = q_{n+1}$, $q = q_{n}$, with parameter evolution given by \eqref{dPparams}.

\section{Regularisation of algebro-Painlev\'e systems on rational surfaces} \label{section4}

In this section we show that the mechanism by which the Takasaki system is associated to the surface $\X^{\Tak}$, namely quadratic regularisability in the sense of \autoref{defquadreg}, can be generalised to give a similar mechanism to associate equations with globally finite branching of solutions about movable singularities to rational surfaces. 
We introduce a notion of regularisability by algebraic transformations and illustrate this phenomenon in a number of examples with the so-called algebro-Painlev\'e property.
\begin{definition}
Consider an $n$-th ordinary differential equation 
\begin{equation}\label{nthorderODE}
\frac{d^n y}{dt^n} = F( y, \frac{dy}{dt}, \dots, \frac{d^{n-1} y}{dt^{n-1}} ; t ),
\end{equation}
where $F$ is rational in $y$ and its derivatives, and locally analytic in $t$. Let the fixed singularities of the equation be the discrete set $\mathcal{F} \subset \C$, and let $B =\C \backslash \mathcal{F}$. The equation \eqref{nthorderODE} is said to have the algebro-Painlev\'e property if all solutions are algebroid functions over $B$, by which we mean they are algebraic over the field of meromorphic functions on the universal cover of $B$. 
\end{definition}
The notion of the quadratic regularisability in \autoref{defquadreg} can be naturally extended as follows.
\begin{definition} \label{monomialregdef}
Suppose a first-order system of ODEs is of the form
\begin{equation*}
u' = R_1( t,u, v^n ), \quad v' = \frac{1}{v^{n-1}} R_2 (t,u, v^n),
\end{equation*}
with $R_1(t,x,y),$ $R_2(t,x,y)$ rational functions of their arguments, and $R_1(t,x,y),$ $R_2(t,x,y)$ being regular  in $y$  and nonzero at $y=0$. 
Then we say that such a system is $n$\emph{-th order monomial regularisable} in the variable $v$. Introducing $h = v^n$, the resulting system is regular in $h$ at $h=0$:
\begin{equation*}
u' = R_1(t, u, h ), \quad h'=  nR_2 (t,u, h).
\end{equation*}
\end{definition}

\subsection{Example 1} \label{section4.1}

In \cite{HK}, Halburd and Kecker considered a class of second-order equations, and isolated equations whose solutions are globally quadratic over the field of meromorphic functions. 
Under certain assumptions on the coefficients it is shown that if 
\begin{equation} \label{HKclass}
y''=\frac{3}{4}y^5+\sum_{k=0}^4 a_k(t)y^k
\end{equation}
and $y^2(t)+s_1(t)(t)+s_2(t)=0$, then $s_1$ is proportional to $a_4$ (which can be set to zero without loss of generality) and $s_2$ reduces either to a solution of a Riccati equation or the equation 
\begin{equation} \label{specialP4}
w''=\frac{w'^2}{2w}+\frac{3}{2}w^3+4(a t+b)w^2+2((a t+b)^2-c)w,
\end{equation}
which in case $a\neq 0$ is equivalent to a special case of the fourth Painlev\'e equation and in case $a=0$ can be solved in terms of elliptic functions.
We call the equation isolated in \cite{HK} from the class \eqref{HKclass} the Halburd-Kecker equation, which is given explicitly by
\begin{equation}\label{HK eq}
y''=\frac{3}{4}y^5-(2a t+2b) y^3+\left( (a t+b)^2 - 2c \right)y.
\end{equation}
By taking $s_2=-y^2$ one recovers the variant \eqref{specialP4} of the fourth Painlev\'e equation above.
Note that if one follows the approach of \cite{HK} but assuming $s_1\neq 0$, then $s_2$ must satisfy an equation of second order and second degree. 
The question of how to write such an equation in the form of an equivalent system of two first order differential equations with rational right-hand sides is open, so this case falls beyond the scope of the current paper. 

It can be easily seen that the Halburd-Kecker equation is equivalent to the $\beta=0$ case of the scalar Takasaki equation:

\begin{proposition}
The equation \eqref{HK eq} for $y(\tilde{t})$ with parameters $a$, $b$, $c$ is equivalent to equation \eqref{takasakiscalar} with $\beta=0$ for $q(t)$ via the change of variables
\begin{equation}
y(\tilde{t}) = A q(t), \quad B t = a \tilde{t} + b, \qquad \text{where} \qquad A^2 = - \frac{B}{4}, \quad B^2 = a, \quad \alpha = \frac{2 c}{a}.
\end{equation}
\end{proposition}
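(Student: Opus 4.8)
The plan is to carry out a direct substitution verifying that the Halburd--Kecker equation \eqref{HK eq}, rewritten in the new variables, coincides with the $\beta = 0$ case of the scalar Takasaki equation \eqref{takasakiscalar}. The entire content is a change-of-variables computation, so I would organise it around the chain rule followed by a term-by-term matching of coefficients, and note at the outset that since the prescribed change of variables is an invertible affine reparametrisation of the independent variable composed with a nonzero rescaling of the dependent variable (invertibility being guaranteed by $A \neq 0$, $B \neq 0$, which follow from $A^2 = -B/4$ and $B^2 = a \neq 0$), checking the implication in one direction already yields the asserted equivalence.

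First I would translate the derivative operators. Writing $' = d/d\tilde{t}$ and $\dot{} = d/dt$, the affine relation $Bt = a\tilde{t} + b$ gives $dt/d\tilde{t} = a/B = B$, where the last equality uses $B^2 = a$. Hence from $y = Aq$ one obtains $y' = AB\,\dot{q}$ and then $y'' = AB^2\,\ddot{q} = Aa\,\ddot{q}$, which reduces the left-hand side of \eqref{HK eq} to $Aa\,\ddot{q}$.

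Next I would substitute $y = Aq$ together with the consequences $a\tilde{t} + b = Bt$, so that $2a\tilde{t} + 2b = 2Bt$ and $(a\tilde{t} + b)^2 = B^2 t^2 = a t^2$, into the right-hand side of \eqref{HK eq}. Dividing the resulting identity through by $Aa$ (permissible since $a = B^2 \neq 0$) leaves an equation for $\ddot{q}$ whose three coefficients I would simplify using the defining relations $A^2 = -B/4$, $B^2 = a$ and $\alpha = 2c/a$. The quintic coefficient becomes $\tfrac{3}{4}\,A^4/a = \tfrac{3}{64}$ via $A^4 = B^2/16 = a/16$; the coefficient of $t q^3$ becomes $-2B A^2/a = \tfrac{1}{2}$ via $A^2 = -B/4$; and the linear coefficient becomes $t^2 - 2c/a = t^2 - \alpha$. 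These are precisely the coefficients appearing in \eqref{takasakiscalar} with $\beta = 0$, which completes the verification.

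There is no genuine obstacle here: the statement is an elementary if slightly fiddly calculation, and the only thing requiring care is the bookkeeping of the constants $A$, $B$ and the affine shift, together with keeping the two independent variables $\tilde{t}$ and $t$ distinct. The one conceptual point worth emphasising is that the three defining relations for $A$, $B$ and $\alpha$ are exactly what is forced by demanding that all three coefficients match simultaneously, so the equivalence is rigid rather than coincidental; in particular the sign constraint $A^2 = -B/4$ is precisely what produces the correct sign of the cubic term.
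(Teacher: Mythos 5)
Your proposal is correct, and it takes the same route the paper intends: the paper states this proposition as an easy direct verification, and your chain-rule substitution with the coefficient checks $\tfrac{3}{4}A^4/a = \tfrac{3}{64}$, $-2BA^2/a = \tfrac{1}{2}$, and $t^2 - 2c/a = t^2 - \alpha$ is exactly that calculation carried out explicitly. The remark on invertibility of the change of variables (from $a = B^2 \neq 0$, hence $A \neq 0$) properly justifies calling it an equivalence rather than a one-way reduction.
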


By putting $q= y$, $p = q'$, we may rewrite the Halburd-Kecker equation as an equivalent first-order system 
\begin{equation} \label{halburdkeckersystem}
\begin{aligned}
\frac{dq}{dt} &=  p, \\
 \frac{dp}{dt} &= \frac{3}{4} q^5 - 2(a t + b) q^3 + \left((at+b)^2 - 2 c\right) q .
\end{aligned}
\end{equation}
Considering this on the trivial bundle over $\C_t$ with fibre $\p^1 \times \p^1$ in the same way as for the Takasaki system in \autoref{appendixB}, we find initially only one point $(Q,P)=(1/q, 1/p) =(0,0)$ that requires blowing up, so the five blowups over $(q,P)=(0,0)$ are not required here. 
To resolve the indeterminacy of the vector field require essentially the same sequence of blowups over this point as in the Takasaki case, i.e. three blowups of points on the proper transform of $P=0$, followed by two cascades of six points each.
The configuration of points is the same as in \autoref{appendixB} but without the blowups of $q_1,\dots, q_5$, so there are 15 blowups performed in total but we maintain the enumeration of the remaining points from the Takasaki case, so there are initially three repeated blowups before the cascade splits:
\begin{equation}
q_6 : (Q,P) = (0,0) ~ \leftarrow ~ q_7 \leftarrow ~ q_8 ~ \leftarrow ~
\left\{
\begin{aligned}
q_9 &\leftarrow  q_{10} \leftarrow  \cdots  \leftarrow q_{14},  \\
q_{15} &\leftarrow  q_{10} \leftarrow  \cdots  \leftarrow q_{20}.  
\end{aligned}
\right.
\end{equation}
Through these blowups we obtain a surface $\X^{\operatorname{HK}}$, with the precise locations of points in coordinates of the points being the only difference from the Takasaki case. 
We present only the charts for the pair of final exceptional divisors, which together with the original variables $(q,p)$ provide an atlas for a manifold on which the Halburd-Kecker system \eqref{halburdkeckersystem} is regularisable.
\begin{proposition}
The Halburd-Kecker system \eqref{halburdkeckersystem} is everywhere either regular or quadratic regularisable on the bundle
\begin{equation} \label{atlasHK}
E^{\operatorname{HK}} = \C^3_{q,p,t} \cup \C^3_{u_{14},v_{14},t}\cup \C^3_{u_{20},v_{20},t},
\end{equation}
with gluing defined according to
\begin{equation} \label{HKatlas}
\begin{gathered}
\frac{1}{q} = v_{14}, \quad \frac{1}{p} = v_{14}^3\left( 2 + v_{14}^2 \left(4(at + b) + v_{14}^2 \left( 8(b^2 + c) + 4 a (2 a t^2+ 4 b t - 1) + u_{14} v_{14}^2 \right)  \right) \right),\\
\frac{1}{q} = v_{20}, \quad \frac{1}{p} = v_{20}^3\left(-2 + v_{20}^2 \left( - 4(at + b) + v_{20}^2 \left( - 8(b^2 + c)  - 4 a (2 a t^2+ 4 b t - 1)+ u_{20} v_{20}^2 \right)  \right) \right).
\end{gathered}
\end{equation}
\end{proposition}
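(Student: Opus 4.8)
The plan is to mirror the proof of the corresponding statement for the Takasaki system, since by construction the surface $\X^{\operatorname{HK}}$ arises from the same blowup cascade as $\X^{\Tak}$ with the five blowups $q_1, \dots, q_5$ over $(q,P)=(0,0)$ omitted. As there, the essential content of the proposition is not the existence of the surface but the verification, chart by chart, that the extension of the system \eqref{halburdkeckersystem} to $E^{\operatorname{HK}}$ is either regular or quadratic regularisable in the sense of \autoref{defquadreg}. I would therefore defer the construction of the centres and coordinates to the analogue of \autoref{appendixB} and concentrate on this verification.

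First I would dispose of the $(q,p)$ chart. Here the right-hand sides of \eqref{halburdkeckersystem} are manifestly polynomial in $q$ and $p$; in contrast to the Takasaki case, the absence of the $8\beta/q^3$ term (which is exactly what $\beta=0$ achieves) means there is no longer an indeterminacy along $q=0$, so the full chart $\C^3_{q,p,t}$ is retained and the system is regular there. This is precisely why the charts $(u_3,v_3)$ and $(u_5,v_5)$ appearing in the Takasaki atlas \eqref{atlasTak} are absent from \eqref{atlasHK}.

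Then I would treat the two final charts by direct substitution. Taking $(u_{14},v_{14})$ as representative, from $1/q = v_{14}$ one has $\dot v_{14} = -p\,v_{14}^2$, and substituting the second relation of \eqref{HKatlas} to eliminate $p$ gives
\begin{equation*}
\dot v_{14} = -\frac{1}{v_{14}\left( 2 + v_{14}^2\left( 4(at+b) + v_{14}^2\left( 8(b^2+c) + 4a(2at^2 + 4bt - 1) + u_{14} v_{14}^2 \right) \right) \right)},
\end{equation*}
which is exactly of the shape $\tfrac{1}{v_{14}} R_2(t,u_{14},v_{14}^2)$ with $R_2$ regular in its last argument and nonzero at $v_{14}=0$, where it equals $-\tfrac12$. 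The remaining task is to compute $\dot u_{14}$ by differentiating the defining relation for $u_{14}$ and substituting $\dot p$ from \eqref{halburdkeckersystem} together with the expression for $\dot v_{14}$ above, then checking that all negative powers of $v_{14}$ cancel so that $\dot u_{14}$ emerges as a polynomial in $u_{14}$, $v_{14}^2$ and $t$. The chart $(u_{20},v_{20})$ is identical up to the sign changes already recorded in \eqref{HKatlas}.

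The main obstacle is this last cancellation: the vanishing of the pole in $\dot u_{14}$ is not automatic but is forced by the choice of centres in the cascade, so the check really amounts to confirming that the coordinates displayed in \eqref{HKatlas} are the correct ones for the resolved surface. As in the Takasaki case this is a lengthy but entirely mechanical rational-function manipulation, differing from the computation behind the earlier proposition only in the explicit coefficients (which now involve $at+b$ and $b^2+c$ in place of the $a_1, a_2$ combinations), and no new phenomenon arises beyond those already encountered in \autoref{appendixB}.
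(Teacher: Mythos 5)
Your proposal is correct and follows essentially the same route as the paper: the paper likewise treats the Halburd--Kecker case as the Takasaki construction with the five blowups over $(q,P)=(0,0)$ omitted (since the $8\beta/q^3$ term is absent), defers the centres and coordinates to the analogue of \autoref{appendixB}, and reduces the proposition to a direct chart-by-chart verification of regularity in $(q,p)$ and quadratic regularisability in $(u_{14},v_{14})$ and $(u_{20},v_{20})$. Your explicit computation of $\dot v_{14} = -p\,v_{14}^2 = \tfrac{1}{v_{14}}R_2(t,u_{14},v_{14}^2)$ with $R_2(t,u_{14},0)=-\tfrac12$, together with the deferred mechanical cancellation check for $\dot u_{14}$, matches the level and substance of the paper's own verification (which in the Takasaki case displays exactly this kind of resulting system in the $(u_{14},v_{14})$ chart).
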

In particular the system rewritten in the variables $(u_{14}, v_{14})$, respectively $(u_{20},v_{20})$, satisfies \autoref{monomialregdef} with $n=2$.

\subsection{Example 2} \label{section4.2}
Let us consider the following equation 
\begin{equation}\label{cubic1}
3q^2q''+\frac{3}{2}q q'^2-2Bq^6+B t q^3+\frac{A}{2q^3}=0,
\end{equation}
where $A$ and $B$ are arbitrary constants which we assume to be nonzero.
The local expanson at $t_0$ is $q\sim (t-t_0)^{-2/3}$.  By a cubic change of variables $w=q^3$ it is reduced to the Painleve XXXIV equation
\begin{equation*}
w''=\frac{w'^2}{2w}+B w(2w-t)-\frac{A}{2w},
\end{equation*}
which possesses the Painlev\'e property, so the algebraic transformation ensures that equation \eqref{cubic1} has the algebro-Painlev\'e property.
By introducing $q'=p$ we can easily re-write equation \eqref{cubic1} in the form of a system of two first order differential equations for $q$ and $p$ with rational right-hand sides.
Extending this to $\p^1 \times \p^1$ in the usual way, we require 24 blowups in total. The points are grouped into a cascade of eight points over $(q,p) = (0, \infty)$ which splits in two after the second blowup according to 
\begin{equation}
p_1 : (q,p) = (0,\infty)  \leftarrow  p_2 \ \leftarrow 
\left\{
\begin{aligned}
p_3 &\leftarrow  p_4 \leftarrow  p_5,  \\
p_6 &\leftarrow p_7 \leftarrow  p_8,  
\end{aligned}
\right.
\end{equation}
and one long cascade $p_9 \leftarrow \cdots \leftarrow p_{24}$ of 16 points over $p_9 : (q,p)=(\infty,\infty)$.
After this we see that the resulting 3 systems in the final charts are all cubic regularisable, satisfying \autoref{monomialregdef} with $n=3$. 

\begin{proposition}
The system equivalent to equation \eqref{cubic1} via $q'=p$ is everywhere either regular or cubic regularisable on the bundle
\begin{equation} \label{atlasexample2}
\left( \C^3_{q,p,t}\backslash \left\{q=0 \right\} \right) \cup \C^3_{u_{5},v_{5},t} \cup \C^3_{u_{8},v_{8},t} \cup \C^3_{u_{24},v_{24},t} ,
\end{equation}
with gluing defined according to
\begin{equation} 
\begin{gathered}
q = v_{5}, \quad \frac{1}{p} = v_{5}^2 \left( \frac{3}{\sqrt{A}} + u_5 v_5^{3} \right), \\
q = v_{8}, \quad \frac{1}{p} = v_{8}^2 \left(- \frac{3}{\sqrt{A}} + u_8 v_8^{3} \right), \\
\frac{1}{q} = \frac{v_{24}^2}{3^{11}}  \left(3366 B - 432 B^4 v_{24}^6+ 64 B^5 v_{24}^9 + 177147 u_{24} v_{24}^{12} \right), \\
\frac{1}{p} = \frac{v_{24}^5}{3^{22}}  \left(3366 B - 432 B^4 v_{24}^6+ 64 B^5 v_{24}^9 + 177147 u_{24} v_{24}^{12} \right)^2.
\end{gathered}
\end{equation}
\end{proposition}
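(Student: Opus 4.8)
The plan is to proceed exactly as in the proof of the earlier proposition establishing that the Takasaki system is everywhere regular or quadratic regularisable on $E^{\Tak}$: the real content is the construction, by a cascade of blowups, of the surface on which the listed charts live, after which the assertion reduces to a direct substitution into the rewritten system. First I would put the equation in first-order form. Setting $q' = p$ and solving \eqref{cubic1} for $q'' = p'$ gives the rational system
\begin{equation*}
\frac{dq}{dt} = p, \qquad \frac{dp}{dt} = \frac{2B}{3} q^4 - \frac{Bt}{3} q - \frac{A}{6 q^5} - \frac{p^2}{2q},
\end{equation*}
whose right-hand side is rational with poles only along $q = 0$ and polynomial (in fact linear) in $t$. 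I would regard this as a rational vector field on the trivial bundle over $\C_t$ with fibre $\p^1 \times \p^1$, compactifying via $Q = 1/q$, $P = 1/p$ and following the coordinate conventions of \autoref{appendixB}.

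Next I would locate the indeterminacies by inspecting the field near the edges and corners of $\p^1 \times \p^1$, expecting the base points to organise precisely into the two announced cascades: the chain $p_1 : (q,p) = (0,\infty)$, $p_2$, splitting after the second blowup into the two length-three tails $p_3 \leftarrow p_4 \leftarrow p_5$ and $p_6 \leftarrow p_7 \leftarrow p_8$, together with the single length-sixteen cascade $p_9 : (q,p) = (\infty,\infty)$, $p_{10} \leftarrow \cdots \leftarrow p_{24}$. Both cascades capture movable singular behaviour that becomes single-valued only after the cubic change $w = q^3$ to Painlev\'e XXXIV: the $(\infty,\infty)$ cascade corresponds to the movable poles $q \sim (t - t_0)^{-2/3}$, while the split $(0,\infty)$ cascade corresponds to movable zeros of $q$, where $w = q^3$ meets the singularity of the $-A/(2w)$ term of the XXXIV equation. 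In both cases the cube of the coordinate along the final exceptional divisor is the single-valued uniformising parameter, which is the source of the value $n = 3$ in \autoref{monomialregdef}; in particular the two sheets of the zero locus, distinguished by the sign of the leading coefficient $p \sim \pm \frac{\sqrt{A}}{3}\, q^{-2}$, account for the opposite signs $\pm 3/\sqrt{A}$ in the gluing \eqref{atlasexample2}.

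I would then carry out the twenty-four blowups in sequence, introducing at each stage the affine charts $(u_i, v_i)$ for the exceptional divisor $F_i$ dictated by the conventions of \autoref{appendixB}, and retain as an atlas for the defining manifold the charts $(u_5, v_5)$, $(u_8, v_8)$, $(u_{24}, v_{24})$ covering the parts of the final exceptional divisors away from the inaccessible locus, together with $(q,p)$ restricted to $q \neq 0$. The gluing relations \eqref{atlasexample2} are obtained by composing the coordinate changes of the successive blowups. Finally, substituting \eqref{atlasexample2} into the system and simplifying, I would verify directly that each of the three systems in the final charts has the form required by \autoref{monomialregdef} with $n = 3$, namely $u_i' = R_1(t, u_i, v_i^3)$ and $v_i' = v_i^{-2} R_2(t, u_i, v_i^3)$ with $R_1, R_2$ regular in the last argument and $R_2$ nonzero at $v_i = 0$, while in the $(q,p)$ chart the field is manifestly regular wherever $q \neq 0$.

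The main obstacle is the bookkeeping in the long cascade over $(\infty,\infty)$: tracking the sixteen infinitely-near base points and their local coordinates through the successive blowups, and then simplifying the resulting composite substitution far enough to recognise the degree-twelve factor $3366 B - 432 B^4 v_{24}^6 + 64 B^5 v_{24}^9 + 177147 u_{24} v_{24}^{12}$ appearing in the gluing for the chart $(u_{24}, v_{24})$. This step is purely mechanical but lengthy and error-prone, and is where I would lean most heavily on symbolic computation. By comparison the split cascade over $(0,\infty)$ is short, and the only subtlety there is conceptual, namely correctly matching the two branches of the cube-root singularity to the opposite signs in the charts $(u_5, v_5)$ and $(u_8, v_8)$, after which the verification of cubic regularisability in those two charts is immediate.
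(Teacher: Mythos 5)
Your proposal is correct and follows essentially the same route as the paper: rewrite \eqref{cubic1} as a first-order system via $q'=p$, compactify the fibre to $\p^1\times\p^1$, perform the twenty-four blowups in exactly the configuration the paper describes (the eight-point cascade over $(q,p)=(0,\infty)$ splitting after the second blowup into two three-point tails, plus the sixteen-point cascade over $(\infty,\infty)$), and then verify by direct substitution that the three final charts satisfy \autoref{monomialregdef} with $n=3$ while the $(q,p)$-chart is regular away from $q=0$. Your preliminary local analysis ($q\sim c(t-t_0)^{1/3}$ with $c^3=\pm\sqrt{A}$ at movable zeros, giving $p\sim\pm\tfrac{\sqrt{A}}{3}q^{-2}$, and $q\sim(t-t_0)^{-2/3}$ at movable poles) is consistent with the stated gluing and correctly accounts for the opposite signs $\pm 3/\sqrt{A}$ in the charts $(u_5,v_5)$ and $(u_8,v_8)$.
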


\subsection{Example 3} \label{section4.3}
Though the examples so far have been related by $t$-independent algebraic transformations to second-order equations with the Painlev\'e property, we now present an example related by a $t$-dependent transformation to the fourth Painlev\'e equation.
The equation is
\begin{equation} \label{example3scalar}
\begin{aligned}
(q^3 - t^3) q'' &=  \left( \frac{2 t^3}{q} - \frac{1}{2} q^2\right) (q')^2 - 3t^2 q' + \frac{1}{2}q^{10} - \left(2 t^3 - \frac{4}{3}t \right) q^7 \\
&\quad + \left( 3 t^6 - 4 t^4 + \frac{2}{3}t^2 - \frac{2 \alpha}{3} \right) q^4 - t \left(2 t^8 - 4 t^6 + \frac{4}{3} t^4- \frac{4 \alpha}{3} t^2 - 2\right) q \\
&\quad \quad + \left(  \frac{1}{2} t^{12} - \frac{4}{3} t^{10} + \frac{2}{3} t^8 - \frac{2 \alpha}{3} t^6 - \frac{1}{2} t^4 + \frac{\beta}{3} \right) \frac{1}{q^2},
\end{aligned}
\end{equation}
whose local expanson at $t_0$ is $q\sim (t-t_0)^{-1/3}$. 
This was obtained by requiring that a cubic change of variables 
\begin{equation}
\lambda = q^3 - t^3
\end{equation}
transforms \eqref{example3scalar} to the fourth Painleve equation in the form \eqref{P4scalar}. 
We rewrite this as the first-order system
\begin{equation} \label{example3system}
\begin{aligned}
q' &= - \frac{q^4}{3} + \frac{2t (t^2-1) q}{3}  + \frac{3 t^2 + 2 t^4 - t^6 - 2 a_1}{3 q^2} + p \left( \frac{4 q}{3}  - \frac{4 t^3}{3 q^2} \right), \\
p' &= -2 p^2 + 2 p(q^3 - t^3 + t) + a_2 
\end{aligned}
\end{equation}
again making use of the parameters $a_0$, $a_1$, $a_2$ defined by $\alpha = 1- a_1-2a_2$, $\beta = - 2 a_1^2$, and $a_0 = 1 - a_1 - a_2$.
This case requires 24 blowups of $\p^1 \times \p^1$ to achieve regularisability, grouped as one cascade of six infinitely near points $p_1 \leftarrow \cdots \leftarrow p_6$, three double points, $p_7 \leftarrow p_8$, $p_9 \leftarrow p_{10}$, $p_{11} \leftarrow p_{12}$,  and finally a cascade of twelve points $p_{13} \leftarrow \cdots \leftarrow p_{24}$. 
\begin{proposition}
The system \eqref{example3system} is everywhere either regular or cubic regularisable on the bundle
\begin{equation} \label{atlasexample3}
 \C^3_{q,p,t}  \cup \C^3_{u_{6},v_{6},t} \cup \C^3_{u_{8},v_{8},t} \cup \C^3_{u_{10},v_{10},t} \cup \C^3_{u_{12},v_{12},t} \cup \C^3_{u_{24},v_{24},t} ,
\end{equation}
with gluing defined as follows, in which $\zeta$ is a primitive cube root of unity:
\begin{equation} 
\begin{gathered}
\frac{1}{q} = v_{6}, \quad p = v_{6}^3 \left( - a_2 + u_6 v_6^{3} \right), \\
q =  t + v_{8}, \quad \frac{1}{p} = v_{8} \left( \frac{3 t^2}{a_1} + u_8 v_8 \right), \\
q = \zeta t +  v_{10}, \quad \frac{1}{p} = v_{10} \left( \frac{3 \zeta^{-1} t^2}{a_1} + u_{10} v_{10} \right), \\
q = \zeta^{-1} t +  v_{12}, \quad \frac{1}{p} = v_{12} \left( \frac{3 \zeta t^2}{a_1} + u_{12} v_{12} \right), \\
\frac{1}{q} = v_{24}, \quad \frac{1}{p} = v_{24}^3 \left( 2+ 2 t (t^2 - 2) v_{24}^3 + 2 \left( t^2(t^2-2)^2 + 2 a_0 \right) v_{24}^6 + u_{24} v_{24}^9  \right).
\end{gathered}
\end{equation}
In this case, the system is genuinely regular on the parts of the final exceptional divisors in each of the three charts $(u_8,v_8)$, $(u_{10},v_{10})$, and $(u_{12},v_{12})$, while in the other charts it is cubic regularisable in $v_{6}$, respectively $v_{24}$.
\end{proposition}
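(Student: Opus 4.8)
The plan is to reproduce for the system \eqref{example3system} the two-step argument already used for the Takasaki system (the regular-or-quadratic-regularisable proposition and \autoref{theoremTakasakiSymplectic}): first resolve the indeterminacies of the rational vector field on the trivial $\p^1\times\p^1$-bundle over $\C_t$ by the sequence of twenty-four blowups indicated before the statement, thereby producing the charts and the gluing relations asserted, and then verify the regularity or cubic regularisability in each final chart by direct substitution into \eqref{example3system}. The entire blowup structure is predicted by the cubic cover $\lambda=q^3-t^3$, which by construction carries \eqref{example3system} to the Okamoto form \eqref{hamOkP4} of $\pain{IV}$ with $f=\lambda=q^3-t^3$. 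For fixed $t$ this is a degree-three cover of the $f$-line by the $q$-line, totally ramified precisely over $q=0$ (above $\lambda=-t^3$) and $q=\infty$ (above $\lambda=\infty$) and unramified elsewhere, so the base points of \eqref{example3system} are exactly the preimages of those of $\X^{\Ok}$ recalled in \autoref{appendixA}. Reading the base-point locations off the final charts in the statement, the locus $f=0$ has the three unramified preimages $q\in\{t,\zeta t,\zeta^{-1}t\}$, giving the three double points over $(q,p)=(\zeta^k t,\infty)$, while the single totally ramified preimage $q=\infty$ of $f=\infty$ carries the two long cascades over $(q,p)=(\infty,0)$ and $(q,p)=(\infty,\infty)$. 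Since the cover has degree three, the length-two $\pain{IV}$ cascade at $f=0$ lifts through its three unramified preimages to three double points, while the length-two and length-four cascades at $f=\infty$ lift through the totally ramified preimage $q=\infty$ to cascades of lengths six and twelve, for $24=3\times 8$ blowups in total.

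Carrying out these blowups in the standard convention of \autoref{appendixB} and composing the coordinate changes along each cascade yields the charts $(u_6,v_6)$, $(u_8,v_8)$, $(u_{10},v_{10})$, $(u_{12},v_{12})$, $(u_{24},v_{24})$ together with the gluing relations in the statement, which I would confirm by direct computation; note that the three centres over $q=\zeta^k t$ move with $t$, so the corresponding transition functions are genuinely $t$-dependent. The regular/cubic-regularisable dichotomy is then dictated by the ramification. The charts $(u_8,v_8)$, $(u_{10},v_{10})$, $(u_{12},v_{12})$ lie over the unramified preimages of $f=0$, so there the local uniformiser of $\lambda$ is $v$ itself and the system simply inherits the genuine regularity of $\X^{\Ok}$ at the base point over $f=0$; one checks directly that the right-hand sides are holomorphic at $v=0$. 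The charts $(u_6,v_6)$ and $(u_{24},v_{24})$ instead lie over the totally ramified point $q=\infty$, where the local uniformiser of $\lambda$ is $v^3$; this forces $u'$ to depend on $v$ only through $v^3$ and $v'$ to take the shape $v^{-2}R_2(t,u,v^3)$ with $R_2(t,u,0)\neq 0$, which is precisely the normal form of \autoref{monomialregdef} with $n=3$ and reflects the cube-root branching $q\sim(t-t_0)^{-1/3}$ of solutions at movable singularities.

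It remains to treat the original chart $\C^3_{q,p,t}$, which, \emph{unlike} the Takasaki chart, is not punctured along $q=0$. Away from $q=0$ the right-hand sides of \eqref{example3system} are manifestly holomorphic; at $q=0$, multiplying the $q$-equation by $q^2$ and keeping $p'=-2p^2+2p(q^3-t^3+t)+a_2$ as is, one sees that $p'$ and $q^2q'$ are polynomial and depend on $q$ only through $q^3$, so $(q,p)$ is already in the form of \autoref{monomialregdef} with $n=3$ and branching variable $v=q$, with $R_2(t,p,0)\neq0$ for generic $(t,p)$. This is the ramified preimage of the \emph{regular} value $\lambda=-t^3$ of $\pain{IV}$, which is exactly why $q=0$ is retained and cubic regularisable here rather than removed as a genuine singular locus as in the Takasaki case. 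The main obstacle is not conceptual but the sheer length and the $t$-dependence of the blowup computation, in particular traversing the twelve-point cascade over $(q,p)=(\infty,\infty)$ and handling the moving centres $q=\zeta^k t$. I would control this by leaning on the cover: because the ramification at $q=\infty$ makes $v^3$ the local uniformiser of $\lambda$, every right-hand side pulled back from $\X^{\Ok}$ can only depend on $v$ through $v^3$ in $u'$ and through $v^{-2}$ times a function of $v^3$ in $v'$, so the verification of the cubic-regularisable normal form in the two ramified charts reduces to a finite check of coefficients at $v=0$.
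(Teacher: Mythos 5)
Your proposal is correct, and it reaches the same endpoint as the paper but by a more structured route. The paper's own justification (here, as for the analogous Takasaki proposition) is pure direct computation: perform the stated $24$ blowups, write the system in the final charts, and observe the regular/cubic-regularisable dichotomy; the cubic substitution $\lambda=q^3-t^3$ is mentioned only as the origin of the equation, not used in the verification. You instead promote the cover $(q,p)\mapsto(f,g)=(q^3-t^3,p)$ to the organizing device: it correctly predicts the five ground-level base points (the three unramified preimages $q=\zeta^k t$ of $(f,g)=(0,\infty)$ and the two ramified ones over $q=\infty$), it explains \emph{why} the three double-point charts are genuinely regular while the $(u_6,v_6)$ and $(u_{24},v_{24})$ charts are only cubic regularisable, and it supplies the check that the paper's statement passes over in silence: the atlas contains all of $\C^3_{q,p,t}$ including $q=0$, and your observation that $p'$ and $q^2q'$ are polynomial in $q^3$ (so the original chart is itself cubic regularisable in $q$ at $q=0$, over the \emph{regular} value $f=-t^3$) is a necessary part of the proposition and is handled correctly. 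One caution on the heuristic you lean on: the inference ``totally ramified preimage $\Rightarrow$ cascade lengths multiply by the degree'' is not a general fact. It fails for the Takasaki system, where the cover has degree $2$ yet $20\neq 2\times 8$ blowups are needed and the configuration over $(q,p)=(0,\infty)$ has the branched shape $1+2+2$ rather than doubled length; the reason it works in Example 3 is that the cover is the identity in the second coordinate ($g=p$) and $F=Q^3\cdot(\text{unit})$, so the lift is ramified purely in the first coordinate and each downstairs infinitely-near point unwinds into exactly three upstairs ones. Since you explicitly defer to the direct computation of the blowups and transition functions for confirmation, this does not create a gap, but the tripling should be presented as a prediction verified by that computation rather than as a consequence of the degree alone.
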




%
%

\subsection{Example 4} \label{section4.4}

We now consider an example which is not transformable to any of the differential Painlev\'e equations but still possesses the algebro-Painlev\'e property. 
The equation is given by 
\begin{equation}
q'' = \frac{2 q^7}{3} - 2 t q^4 + 2 \left(t^2 + \frac{A}{3} \right) q - \frac{ 2(t^3 + A t +B)}{3 q^2} - \frac{2 (q')^2}{q},
\end{equation}
and is equivalent to the first-order system
\begin{equation} \label{example4system}
\begin{aligned}
q' &= - \frac{q^4}{3} + \frac{2t q }{3} + \frac{1 - A - t^2}{3 q^2} + \frac{2 p }{3 q^2}, \\
p' &= 2 p q^3 - 2 t p - B,
\end{aligned}
\end{equation}
which was obtained by letting $f = q^3 - t$,  $g=p$ in 
\begin{equation}\label{H2autosystem}
f' = 2 g - f^2 - A, \qquad g' = 2 f g - B,
\end{equation}
where $A$ and $B$ are arbitrary constants.
The system \eqref{H2autosystem} is an autonomous limit of the Okamoto Hamiltonian form of the second Painlev\'e equation, and in particular is Liouville integrable and may be solved in terms of elliptic functions.

For system \eqref{example4system} we again require 24 blowups, consisting of one cascade of six points $p_1 : (q,p)= (\infty, 0) \leftarrow p_2 \leftarrow \cdots \leftarrow p_6$, as well as another cascade of 18 points $p_7 : (q,p)= (\infty, \infty) \leftarrow p_{8} \leftarrow \cdots \leftarrow p_{24}$.

\begin{proposition}
The system \eqref{example4system} is everywhere either regular or cubic regularisable on the bundle
\begin{equation} \label{atlasexample4}
 \C^3_{q,p,t}  \cup \C^3_{u_{6},v_{6},t} \cup \C^3_{u_{24},v_{24},t} ,
\end{equation}
with gluing defined by
\begin{equation} 
\begin{gathered}
\frac{1}{q} = v_{6}, \quad p = v_{6}^3 \left(B + u_6 v_6^{3} \right), \\
\frac{1}{q} = v_{24}, \quad \frac{1}{p} = v_{24}^6 + 2 t v_{24}^9 + v_{24}^{12} \left( 3 t^2 - A + (4 t^3 - 4 A t + B) v_{24}^3 + u_{24} v_{24}^6 \right).
\end{gathered}
\end{equation}
\end{proposition}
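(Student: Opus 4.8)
The plan is to follow the template established for the Takasaki system and for Examples~1--3: the proof has two ingredients, namely the construction of the atlas \eqref{atlasexample4} by resolving the indeterminacies of the system \eqref{example4system} on the trivial $\p^1 \times \p^1$-bundle over $\C_t$, and then the verification by direct substitution that in each chart the extended system is either regular or of the cubic (i.e.\ $n=3$) monomial regularisable form of \autoref{monomialregdef}. A useful organising principle throughout is that \eqref{example4system} is the pullback of the everywhere-regular autonomous system \eqref{H2autosystem} under the cubic cover $f = q^3 - t$, $g = p$, whose ramification as a map of the $q$-line is concentrated exactly over $q = 0$ (where $f = -t$) and $q = \infty$; these two loci are precisely where cube-root branching of solutions occurs and hence where one expects cubic regularisability rather than regularity.

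First I would treat the finite chart $\C^3_{q,p,t}$ directly, where no blowups are needed. Keeping $p' = 2 p q^3 - 2 t p - B$ and collecting the first equation over the common denominator $3q^2$ gives
\begin{equation*}
p' = R_1(t,p,q^3), \qquad q' = \frac{1}{q^2}\, R_2(t,p,q^3),
\end{equation*}
where $R_1(t,x,y) = 2 x y - 2 t x - B$ and $R_2(t,x,y) = \tfrac{1}{3}\left( 2x + 1 - A - t^2 + 2 t y - y^2 \right)$ are polynomial, and in particular both depend on $q$ only through $q^3$. Since $R_2(t,x,0) = \tfrac{1}{3}(2x + 1 - A - t^2) \not\equiv 0$, this shows the system is regular for $q \neq 0$ and cubic regularisable in the variable $q$ along $q = 0$, matching \autoref{monomialregdef} with $n = 3$. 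This already disposes of the entire finite part of the phase space.

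It then remains to handle $q = \infty$, where the movable cube-root singularities of solutions accumulate. I would compactify to $\p^1 \times \p^1$ and locate the indeterminacy points of the vector field on the line $q = \infty$, namely $(q,p) = (\infty, 0)$ and $(q,p) = (\infty, \infty)$. Resolving these requires the two cascades described in the text: six infinitely near points $p_1 \leftarrow \cdots \leftarrow p_6$ over $(\infty, 0)$ and eighteen points $p_7 \leftarrow \cdots \leftarrow p_{24}$ over $(\infty, \infty)$. At each stage one introduces the canonical blowup coordinates used throughout the paper, computes the lifted vector field, and reads off the centre of the next blowup from the residual indeterminacy; carrying this out produces the charts $(u_6, v_6)$ and $(u_{24}, v_{24})$ together with the stated gluing to $(q,p)$. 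A final substitution of those relations into \eqref{example4system} should yield, in each of these two charts, an expression of the form $u' = R_1(t,u,v^3)$, $v' = v^{-2} R_2(t,u,v^3)$ with $R_2$ nonzero at $v = 0$, confirming cubic regularisability there.

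The main obstacle is the long eighteen-point cascade over $(\infty, \infty)$. Determining the successive infinitely near centres amounts to computing, order by order, the $t$-dependent coefficients $2t$, $3t^2 - A$, and $4t^3 - 4At + B$ that appear in the expansion of $1/p$ in powers of $v_{24}^3$ in the final gluing --- equivalently, the leading terms of the Laurent expansion of a solution about a movable pole of the underlying $\pain{II}$ system lifted through the cubic cover. The bookkeeping of eighteen successive blowups, and the size of the rational expressions that must be verified to be of monomial regularisable form in the $(u_{24}, v_{24})$ chart, is where the genuine computational effort lies; the six-point cascade and the verification in $(u_6, v_6)$ are entirely analogous but considerably shorter.
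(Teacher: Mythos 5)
Your proposal is correct and follows essentially the same route as the paper: compactify to $\p^1\times\p^1$, resolve the indeterminacies at $(q,p)=(\infty,0)$ and $(q,p)=(\infty,\infty)$ by the cascades of six and eighteen blowups respectively, and verify by direct substitution that in the charts $(u_6,v_6)$ and $(u_{24},v_{24})$ (and in the original $(q,p)$ chart along $q=0$, where no blowups are needed since the right-hand sides depend on $q$ only through $q^3$) the system has the cubic monomial regularisable form of \autoref{monomialregdef} with $n=3$. Your additional observations — the explicit finite-chart check and the interpretation via the cubic cover $f=q^3-t$, $g=p$ of the autonomous system \eqref{H2autosystem} — are consistent with, and merely elaborate on, the paper's construction.
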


 \subsection{Algebraic regularisability as an algebro-Painlev\'e test}

We now show that the requirement of algebraic regularisability in the sense of \autoref{monomialregdef} of a system of rational first-order differential equations after resolution of indeterminacies can serve as a kind of counterpart to the Painlev\'e test to isolate equations with algebroid solutions.
%
We first show that requiring quadratic regularisability after the same sequence of blowups recovers the Takasaki system, in the same spirit as the uniqueness results for global Hamiltonian structures of differential equations on Okamoto's spaces \cite{MMT99, ST97, M97, chiba, orbifold} which show in essence that the only globally regular Hamiltonian system that can exist on $E_{\mathrm{J}}$ is the extension of the Okamoto Hamiltonian form of $\pain{J}$.
\begin{theorem} \label{proprecoverTak}
If the system 
\begin{equation} \label{Takgen}
q'=p,\,p'=\frac{3q^5}{64}+\sum_{k=-3}^4 a_k(t)q^k
\end{equation}
is quadratic regularisable on a bundle of rational surfaces obtained by blowing up points in the same configuration as $\X^{\Tak}$, then it must coincide with the Takasaki system up to affine changes of independent variable.
\end{theorem}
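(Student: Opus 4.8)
The plan is to treat the coefficients $a_{-3}(t),\dots,a_4(t)$ as unknown analytic functions and to propagate the rational vector field through the prescribed sequence of twenty blowups, extracting at each stage the constraints forced by (i) the requirement that the indeterminacy locus sit at the prescribed infinitely near points reproducing the configuration of $\X^{\Tak}$, and (ii) quadratic regularisability in the sense of \autoref{defquadreg} in the final charts. This mirrors the uniqueness arguments for global Hamiltonian structures on Okamoto's spaces in \cite{MMT99, ST97, M97}, in which the prescribed surface leaves essentially no freedom in the system. Since the system is Hamiltonian with $H = \tfrac12 p^2 + V(q,t)$, where the potential $V$ is a Laurent series in $q$ determined by the $a_k$, it is convenient to organise the computation, as in \autoref{theoremTakasakiSymplectic}, around the requirement that $H$ extend to a function polynomial in the coordinates $(x_i, y_i^2, t)$ of the final charts; the two formulations are equivalent. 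The indeterminacies split into a cascade over $q = \infty$, governed by the positive-degree coefficients, and a cascade over $q = 0$, governed by the pole coefficients $a_{-3}, a_{-2}, a_{-1}$, which I would treat separately.

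Near $q=\infty$ I would set $Q = 1/q$ and follow the cascade $q_6, \dots, q_{20}$ of \autoref{appendixB}. The fixed leading term $\frac{3}{64}q^5$ determines the first few blowups; requiring that the base point at each subsequent stage lie at the prescribed location, and that the cascade split in two as in \autoref{fig:takasakisurface}, forces the even-degree coefficients $a_4, a_2, a_0$ to vanish, reflecting geometrically that $V$ must be even in $q$. Because the blowup centres are $t$-dependent and the bundle is non-autonomous, the transition functions to the final charts carry $\partial_t$-contributions, so that imposing the regularisable form \eqref{takasakiatlas} on the surviving odd-degree coefficients (equivalently, polynomiality of $H^{\Tak}_i$ in $x_i, y_i^2$) produces first-order differential constraints in $t$ for $a_3$ and $a_1$. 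Solving these yields $a_3 = \tfrac12(t + c_1)$ and $a_1 = (t + c_1)^2 + c_2$, which is exactly the Takasaki form $a_3 = t/2$, $a_1 = t^2 - \alpha$ after the affine shift $t \mapsto t + c_1$, with $c_2$ and the pole datum playing the role of the parameters $\alpha, \beta$.

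Near $q = 0$ the pole term $a_{-3}q^{-3}$, with $a_{-3} \neq 0$ (the case $\beta = 0$ being excluded by the hypothesis that the configuration includes the five extra blowups $q_1, \dots, q_5$), governs the resolution. Requiring quadratic regularisability on the exceptional divisors $F_3, F_5$ via the charts \eqref{takasakiatlas} forces $a_{-2} = a_{-1} = 0$ and $a_{-3}$ to be a nonzero constant, encoding that $V$ is even in $q$ with only a $q^{-2}$ pole and no logarithmic obstruction. Combined with the infinity analysis and the normalisation of the leading $q^5$ coefficient, this pins the system to the Takasaki form, the residual freedom being exactly the affine reparametrisation of $t$ that absorbs the integration constant $c_1$.

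The main obstacle I anticipate is the bookkeeping in the long, splitting cascade over $q = \infty$: fifteen blowups must be carried out with the coefficients kept general, and the non-autonomous transition functions tracked carefully, since it is precisely the interplay between the $\partial_t$-contributions of the moving blowup centres and the quadratic regularisability requirement that produces the specific polynomial $t$-dependence rather than merely algebraic relations among the $a_k$. A secondary point requiring care is verifying that no spurious solution of the resulting differential constraints is compatible with the global configuration, so that the Takasaki system is recovered uniquely up to the stated affine change of independent variable.
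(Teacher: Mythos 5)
Your overall strategy --- push generic coefficients through the prescribed twenty blowups and extract constraints, then solve the resulting ODEs for the $a_k$ --- is the same as the paper's, but you have misplaced where the key constraints come from, and as described your argument would fail to eliminate $a_0$, $a_2$, $a_4$. You claim that requiring the cascade over $q=\infty$ to reproduce the configuration of $\X^{\Tak}$ (including its splitting into two branches) forces $a_4=a_2=a_0=0$. It does not: the paper's proof shows that for \emph{arbitrary} coefficients with $a_{-3}\neq 0$, the resolution of indeterminacies automatically produces exactly the same point configuration as $\X^{\Tak}$ --- only the locations of the centres (given by polynomial functions $F_i$, $G_i$ of the $a_k$ and their derivatives) move, and the cascade splits after $q_8$ regardless. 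So the configuration hypothesis yields no conditions at all on the coefficients; every condition comes from quadratic regularisability. In the paper, $a_{-2}=a_0=0$ are forced by requiring $u_3'$ (and $u_5'$) to be regular and nonzero at $v_3=0$, and $a_2=a_4=0$ by requiring $u_3'$ to be \emph{even} in $v_3$; that is, all four vanishings are extracted in the charts over $q=0$, not at infinity. Since your near-$q=0$ analysis, as you describe it, only produces $a_{-2}=a_{-1}=0$ and $a_{-3}$ constant, the coefficients $a_0,a_2,a_4$ are never actually eliminated: the step you rely on (the configuration at infinity) cannot eliminate them, and the step that could (regularisability on $F_3$, $F_5$) is not asked to. The charts $(u_{14},v_{14})$, $(u_{20},v_{20})$ at the ends of the two cascades enter only after the specialisation \eqref{condssofar}, and there they give the two conditions \eqref{cond14} and \eqref{cond20}, which are second order in $a_3$ (not first order, as you state); only by combining both does one get $a_3''=0$ and $a_1'=8a_3a_3'$, hence $a_3$ linear and $a_1$ quadratic in $t$, with general slope --- your ansatz $a_3=\tfrac12(t+c_1)$ already presupposes the normalisation that the affine change of $t$ is supposed to provide.

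A secondary gap is your proposed reformulation of quadratic regularisability as the requirement that $H=\tfrac12 p^2+V$ extend polynomially in $(x_i,y_i^2,t)$, which you assert is equivalent to \autoref{defquadreg}. The paper establishes only one direction (the remark after \autoref{theoremTakasakiSymplectic}: such a polynomial Hamiltonian structure implies quadratic regularisability), and moreover the symplectic atlas there was constructed for the Takasaki system itself, so it is not available a priori for a system with unknown coefficients. Deriving the necessary conditions of the uniqueness theorem from the Hamiltonian formulation would therefore prove uniqueness only within a possibly smaller class than the class of quadratic regularisable systems, which is weaker than the statement being proved; you would need to prove the converse implication first, or (as the paper does) work directly with the vector field in the blowup charts $(u_i,v_i)$.
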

\begin{proof}
We   start by resolving indeterminacies of system \eqref{Takgen} maintaining genericity of assumptions on the unknown coefficients $a_k(t)$. We only assume that $a_{-3}\neq 0$.  
We first find a sequence of five points of indeterminacy to blow up (with the same configuration as in the Takasaki case as indicated on the left-hand side of \autoref{fig:takasakisurface}), the locations of which are as follows (below $i^2 =-1$):
\begin{equation*}
\begin{gathered}
q_1 : (q,P) = (0,0) \quad \leftarrow \quad
\left\{
\begin{aligned}
q_2 : (u_1,v_1) &= (-i\sqrt{a_{-3}(t)}, 0) ~&\leftarrow &\quad q_3 : (u_2,v_2) = (-a_{-2}(t),0) \\
q_4: (u_1,v_1) &= (i\sqrt{a_{-3}(t)}, 0) ~&\leftarrow &\quad q_5 : (u_4, v_4) = (-a_{-2}(t),0)
\end{aligned}
\right.
\end{gathered}
\end{equation*}
After blowing up $q_3$ and $q_5$, there are no further points of indeterminacy over $q_1$.

We next find the following sequence of points over $(Q,P)=(0,0)$, which splits after the blowup of $q_8$ just as in the Takasaki case:
\begin{equation*}
\begin{gathered}
q_6 : (Q,P) = (0,0) ~ \leftarrow ~ q_7 : (U_6,V_6)= (0,0) ~ \leftarrow ~ q_8 : (u_7,v_7) = (0,0) ~ \leftarrow ~
\left\{
\begin{aligned}
&q_9  \\
&q_{15}
\end{aligned}
\right.
\end{gathered}
\end{equation*}
The cascade over the point $p_9$ is given by 
\begin{equation*}
\begin{aligned}
& & &p_9 : (u_8,v_8) = (8,0) \\
&  & &  \uparrow  & \\
& &  &p_{10} : (u_9, v_9) =\left(F_1(a_4(t)),0\right) \\
&  & &  \uparrow  & \\
&  & & p_{11} : (u_{10},v_{10})  =\left( F_{2}(a_3(t),a_4(t)), 0\right)  \\
&  & &  \uparrow  & \\
&  & & p_{12} : (u_{11},v_{11}) = \left( F_{3}(a_2(t),a_3(t),a_4(t),a_4'(t)), 0\right)  \\
&  & &  \uparrow  & \\
&  & & p_{13} : (u_{12},v_{12}) = \left( F_{4}(a_1(t), a_2(t), a_3(t), a_3'(t), a_4(t),a_4'(t)), 0\right)  \\
&  & &  \uparrow  & \\
&  & & p_{14} : (u_{13},v_{13}) = \left( F_{5}\left(a_0(t), a_1(t), a_2(t), a_2'(t), a_3(t), a_3'(t), a_4(t),a_4'(t), a_4''(t)\right), 0\right), 
\end{aligned}
\end{equation*}
where the functions $F_9,\dots, F_{13}$ are known polynomial functions of their arguments which we omit for conciseness.
The cascade over the point $p_{15}$ is similar, given by
\begin{equation*}
\begin{aligned}
& & &p_{15} : (u_8,v_8) = (8,0) \\
&  & &  \uparrow  & \\
& &  &p_{16} : (u_{15}, v_{15} =\left(G_1(a_4(t)),0\right) \\
&  & &  \uparrow  & \\
&  & & p_{17} : (u_{16},v_{16})  =\left( G_{2}(a_3(t),a_4(t)), 0\right)  \\
&  & &  \uparrow  & \\
&  & & p_{18} : (u_{17},v_{17}) = \left( G_{3}(a_2(t),a_3(t),a_4(t),a_4'(t)), 0\right)  \\
&  & &  \uparrow  & \\
&  & & p_{19} : (u_{18},v_{18}) = \left( G_{4}(a_1(t), a_2(t), a_3(t), a_3'(t), a_4(t),a_4'(t)), 0\right)  \\
&  & &  \uparrow  & \\
&  & & p_{20} : (u_{19},v_{19}) = \left( G_{5}\left(a_0(t), a_1(t), a_2(t), a_2'(t), a_3(t), a_3'(t), a_4(t),a_4'(t), a_4''(t)\right), 0\right), 
\end{aligned}
\end{equation*}
where the functions $G_1,\dots, G_5$ are known polynomials in their arguments.
After the blowups of $q_1, \dots, q_{20}$, there are no indeterminacies remaining, so we see that without additional assumptions on the coefficients we have the same point configuration as $\X^{\Tak}$.

Next we need to impose the condition of quadratic-regularisability on the systems in coordinate charts covering the exceptional divisors from the blowups of points $q_3$, $q_5$, $q_{14}$ and $q_{20}$. 
In general, whenever we impose any conditions on the coefficients $a_k(t)$ we need to check that the points of indeterminacy do not disappear (and also new points do not appear) and our calculations are still valid (the locations of points might change but the configuration stays the same).  
We do so  every time by running the calculations from the beginning.
 
The system in coordinates $u_{3}$ and $v_{3}$ after the blowup of the point $q_{3}$ is of the form
\begin{equation}
u_{3}'=\frac{P^{(1)}_{3}(t,u_{3},v_{3})}{v_{3}^2 Q^{(1)}_{3}(t,u_{3},v_{3})}, \qquad v_{3}'=\frac{P^{(2)}_{3}(t,u_{3},v_{3})}{v_{3} Q^{(2)}_{3}(t,u_{3},v_{3})}
\end{equation}
with $P^{(j)}_{3}(t,u_{3},0)$ and $ Q^{(j)}_{3}(t,u_{3},0)$ ($j=1,\,2$) nonzero and independent of $u_{3}$. 
To force the cancellation of a factor of $v_3$ in the rational function giving $u_3'$, we must set $P^{(1)}_{3}(t,u_{3},0)=0$, leading to 
\begin{equation}\label{cond1}
a_3(t)^{3/2} (2 \sqrt{a_{-3}(t)} a_{-1}(t) + i a_{-3}'(t))=0.
\end{equation}
After imposing this condition we confirm that the system above becomes 
\begin{equation}
u_{3}'=\frac{\tilde{P}^{(1)}_{3}(t,u_{3},v_{3})}{v_{3} \tilde{Q}^{(1)}_{3}(t,u_{3},v_{3})}, \qquad v_{3}'=\frac{\tilde{P}^{(2)}_{3}(t,u_{3},v_{3})}{v_{3} \tilde{Q}^{(2)}_{3}(t,u_{3},v_{3})},
\end{equation}
so one power of $v_3$ in the denominator in the first equation has cancelled as required. 

Now, to make $u_3'$ regular and nonzero in $v_3$ at $0$, we must set $\tilde{P}^{(1)}_{3}(t,u_{3},0)=0$. 
Collecting the coefficients of $u_3$ in this expression we  find that $a_{-2}(t)=0$ and $a_{0}(t)=0$ (recall that $a_{-3}(t)\neq 0$). 
Another condition that we need to impose is that $u_3'$ is even in $v_3$, which leads to $a_2(t)=a_4(t)=0$. 
Combining this with the previously determined coefficients $a_0(t)$, $a_{-2}(t)$, which are   equal to zero, and $a_{-3}'(t)$ from (\ref{cond1}), we find that the system in $u_3$ and $v_3$ coordinates is quadratic-regularisable, that is, we also have that expression $v_3 v_3'$  is  regular and nonzero at $v_3=0$ and even in $v_3$. 

We could also have begun with the system in coordinates $u_{5}$ and $v_{5}$ after the last blowup of the point $p_{5}$, which with generic coefficient functions $a_k(t)$  is of the form
\begin{equation}
u_{5}'=\frac{P^{(1)}_{5}(t,u_{5},v_{5})}{v_{5}^2 Q^{(1)}_{5}(t,u_{5},v_{5})}, \qquad v_{5}'=\frac{P^{(2)}_{5}(t,u_{5},v_{5})}{v_{5} Q^{(2)}_{5}(t,u_{5},v_{5})}
\end{equation}
with $P^{(j)}_{5}(t,u_{5},0)$ and $ Q^{(j)}_{5}(t,u_{5},0)$ ($j=1,\,2$) nonzero and independent of $u_{5}$. 
As above, quadratic-regularisability requires us to set the  expression $P^{(1)}_{5}(t,u_{5},0)=0$, which leads to 
\begin{equation}\label{cond2}
a_3(t)^{3/2} (2 \sqrt{a_{-3}(t)} a_{-1}(t) - i a_{-3}'(t))=0.
\end{equation}
After imposing this condition the system above would simplify to 
\begin{equation}
u_{5}'=\frac{\tilde{P}^{(1)}_{5}(t,u_{5},v_{5})}{v_{5} \tilde{Q}^{(1)}_{5}(t,u_{5},v_{5})}, \qquad v_{5}'=\frac{\tilde{P}^{(2)}_{5}(t,u_{5},v_{5})}{v_{5} \tilde{Q}^{(2)}_{5}(t,u_{5},v_{5})},
\end{equation}
so again one power of $v_5$ in the denominator in the first equation has disappeared. 
Now, to make $u_5'$ regular and nonzero in $v_5$ at $0$, we must set $\tilde{P}^{(1)}_{5}(t,u_{5},0)=0$. Collecting the coefficients of $u_5$ in this expression we  find that $a_{-2}(t)=0$, $a_{0}(t)=0$. 
Thus, we obtain the same conditions as above for system in coordinates $u_3$, $v_3$.

Next we impose condition that $u_5'$ is even in $v_5$. This leads again to $a_2(t)=a_4(t)=0$. Under this assumptions together with previously determined coefficients $a_0(t)$, $a_{-2}(t)$, which are all equal to zero, and $a_{-3}'(t)$ from (\ref{cond2}), we find that the system in $u_5$ and $v_5$ coordinates is quadratic-regularisable, that is, we also have that the expression $v_5 v_5'$  is  regular and nonzero at $v_5=0$ and even in $v_5$. 

Adding and subtracting  (\ref{cond1}) and (\ref{cond2}) yields $a_{-3}'(t)=0$ and $a_{-1}(t)=0$, so for quadratic regularisability in the charts $(u_3,v_3)$ and $(u_5,v_5)$ we must have 
\begin{equation} \label{condssofar}
a_{-2}(t)=a_{-1}(t)=a_0(t)=a_2(t)=a_4(t)=a_{-3}'(t)=0.
\end{equation}
Without any specialisation of coefficient functions, the system in the coordinates $u_{14}$ and $v_{14}$ after the last blowup of the point $p_{14}$ is of the form
\begin{equation} 
u_{14}'=\frac{P^{(1)}_{14}(t,u_{14},v_{14})}{v_{14}^2 Q^{(1)}_{14}(t,u_{14},v_{14})},\qquad  v_{14}'=\frac{P^{(2)}_{14}(t,u_{14},v_{14})}{v_{14} Q^{(2)}_{14}(t,u_{14},v_{14})}
\end{equation}
with $P^{(j)}_{14}(t,u_{14},0)$ and $ Q^{(j)}_{14}(t,u_{14},0)$ ($j=1,\,2$) nonzero and independent on $u_{14}$. Moreover, if we set  expression $P^{(1)}_{14}(t,u_{14},0)=0$, which gives some (cumbersome) relation between coefficients $a_k(t)$ and their derivatives, we see that  the system above would simplify to 
\begin{equation}
u_{14}'=\frac{\tilde{P}^{(1)}_{14}(t,u_{14},v_{14})}{v_{14} \tilde{Q}^{(1)}_{14}(t,u_{14},v_{14})},\qquad v_{14}'=\frac{\tilde{P}^{(2)}_{14}(t,u_{14},v_{14})}{v_{14} \tilde{Q}^{(2)}_{14}(t,u_{14},v_{14})}.
\end{equation}

Now, it is convenient to recalculate everything with the coefficient functions specialised according to \eqref{condssofar}, so that the condition $P^{(1)}_{14}(t,u_{14},0)=0$ becomes
\begin{equation}\label{cond14}
a_{3}''(t)=\frac{1}{2}(a_1'(t)-8 a_3(t)a_3'(t)),
\end{equation}
imposing which we find that the system becomes   quadratic-regularisable in the chart ($u_{14}$, $v_{14}$).

Similarly, with generic coefficient functions the system in the original coordinates $u_{20}$ and $v_{20}$ after the last blowup of the point $p_{20}$ is of the form
\begin{equation}
u_{20}'=\frac{P^{(1)}_{20}(t,u_{20},v_{20})}{v_{20}^2 Q^{(1)}_{20}(t,u_{20},v_{20})}, \qquad v_{20}'=\frac{P^{(2)}_{20}(t,u_{20},v_{20})}{v_{20} Q^{(2)}_{20}(t,u_{20},v_{20})}
\end{equation}
with $P^{(j)}_{20}(t,u_{20},0)$ and $ Q^{(j)}_{20}(t,u_{20},0)$ ($j=1,\,2$) nonzero and independent on $u_{20}$. Moreover, if we set  expression $P^{(1)}_{20}(t,u_{20},0)=0$, which gives some (cumbersome) relation between coefficients $a_k(t)$ and their derivatives, we see that the system above would also simplify to 
\begin{equation}
u_{20}'=\frac{\tilde{P}^{(1)}_{20}(t,u_{20},v_{20})}{v_{20} \tilde{Q}^{(1)}_{20}(t,u_{20},v_{20})}, \qquad v_{20}'=\frac{\tilde{P}^{(2)}_{20}(t,u_{20},v_{20})}{v_{20} \tilde{Q}^{(2)}_{20}(t,u_{20},v_{20})}.
\end{equation}
Again recalculating the system in this chart with the assumptions \eqref{condssofar}, we see that the condition $P^{(1)}_{20}(t,u_{20},0)=0$ becomes  
\begin{equation}\label{cond20}
a_{3}''(t)=\frac{1}{2}(8 a_3(t)a_3'(t)-a_1'(t)),
\end{equation}
imposing which causes the system to become   quadratic-regularisable in the chart ($u_{20}$, $v_{20}$).

The final step is to solve the conditions \eqref{condssofar}, \eqref{cond14} and \eqref{cond20} and show that this must recover the Takasaki system without loss of generality.
We see that $a_{-3}(t)=c$, where $c$ is a constant. 
Next, from \eqref{cond14} and \eqref{cond20} we find $a_3''(t)=0$ and $a_1'(t)=8 a_3(t)a_3'(t)$, from which we can deduce that $a_3(t)$ is linear in $t$ and $a_1(t)$ is quadratic in $t$.
If we let 
\begin{equation}
a_3(t)=c_1+c_2 t,
\end{equation} 
then 
\begin{equation}
a_1(t)=c_3+8t c_1 c_2+4t^2 c_2^2.
\end{equation} 
We also have that conditions (\ref{cond14}) and (\ref{cond20}) are satisfied.  
By changes of independent variable $t \mapsto a t + b$ we can without loss of generality set particular values of the constants $c_i$, $i=1,2,3$, as $c_1=0,$ $c_2=1/2$, $c_3=-1+a_1+2a_2$ and $c=-16 a_1^2$, which gives $a_{-3}(t)=8 \beta$, $a_3(t)=t/2$, $a_1(t)=t^2-\alpha$, so we completely recover the Takasaki system. 
This proves the theorem.
\end{proof}

The following is proved along exactly the same lines as the above proposition. Note that in comparison with \cite{HK} we do not make any Nevanlinna type estimates and the whole procedure is computational.  
\begin{proposition} \label{proprecoverHK}
If the equation
\begin{equation*}
y''=\frac{3}{4}y^5+\sum_{k=0}^3a_k(t) y^k
\end{equation*}
is quadratic regularisable on a bundle of rational surfaces obtained by blowing up points in the same configuration as $\X^{\operatorname{HK}}$, then it must coincide with the Halburd-Kecker equation up to affine changes of independent variable.

\end{proposition}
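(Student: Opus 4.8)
The plan is to run the proof of \autoref{proprecoverTak} essentially verbatim, adapted to the simpler configuration $\X^{\operatorname{HK}}$, keeping the whole argument computational as the remark preceding the statement indicates. Writing the equation as the first-order system $q' = p$, $p' = \tfrac{3}{4} q^5 + \sum_{k=0}^3 a_k(t) q^k$ with $q = y$, $p = y'$, I would first resolve indeterminacies on the trivial $\p^1 \times \p^1$ bundle while keeping the coefficients $a_k(t)$ generic. Since the dominant balance $q'' \sim \tfrac{3}{4} q^5$ forces the leading behaviour $q \sim c(t-t_0)^{-1/2}$ with $c^4 = 1$, there is no indeterminacy over $q = 0$ to resolve, so the five blowups $q_1, \dots, q_5$ of the Takasaki case are absent; one checks that the remaining cascade over $q = \infty$ reproduces exactly the configuration of $\X^{\operatorname{HK}}$ from \autoref{section4.1}, with three shared blowups before splitting into the two six-point branches terminating in $F_{14}$ and $F_{20}$. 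As in the Takasaki proof, at each blowup the centre of the next one is a point whose coordinates are explicit polynomials in the $a_k(t)$ and their derivatives, and at every later specialisation of coefficients one must recheck that these points neither disappear nor collide.

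The essential difference from \autoref{proprecoverTak} is that, with no charts near $q = 0$, every regularisability condition must be extracted from the two terminal charts $(u_{14}, v_{14})$ and $(u_{20}, v_{20})$ alone. In each I would write the extended system, which with generic coefficients has the form
\begin{equation*}
u' = \frac{P^{(1)}(t,u,v)}{v^2 Q^{(1)}(t,u,v)}, \qquad v' = \frac{P^{(2)}(t,u,v)}{v \, Q^{(2)}(t,u,v)},
\end{equation*}
and then impose quadratic regularisability exactly as before: first set the leading term $P^{(1)}(t,u,0) = 0$ to cancel one power of $v$ in $u'$; then require $u'$ to be regular, nonzero and \emph{even} in $v$ at $v = 0$, and likewise require $vv'$ to be regular, nonzero and even. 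Each requirement produces algebraic relations among the $a_k(t)$ and their derivatives, and the two branches $F_{14}$, $F_{20}$ yield relations differing only by the sign flips inherited from the $\pm$ symmetry of the square-root branching.

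Adding and subtracting the paired conditions should then isolate the constraints $a_0(t) = a_2(t) = 0$ (vanishing of the even-power coefficients, reflecting evenness in $v$) together with $a_3''(t) = 0$ and a relation of the form $a_1'(t) = \kappa\, a_3(t) a_3'(t)$ for the numerical constant $\kappa$ fixed by the leading coefficient $\tfrac{3}{4}$. Solving these gives $a_3(t) = c_1 + c_2 t$ linear and $a_1(t)$ quadratic with its coefficients forced by the relation to $a_3$; an affine change $t \mapsto a t + b$ of the independent variable then normalises the remaining free constants to recover precisely the Halburd--Kecker equation \eqref{HK eq} with parameters $a$, $b$, $c$.

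The main obstacle is exactly this last reallocation of where the constraints come from: in the Takasaki case the conditions $a_0 = a_2 = a_4 = 0$ were read off from the charts near $q = 0$, whereas here all of them, together with the polynomial structure of $a_1$ and $a_3$, must be squeezed out of the two terminal charts at infinity. What has to be verified is that the two six-point branches of the cascade over $q = \infty$ are long enough to propagate the regularisability conditions down through all the subleading orders of the local expansion, so that the (admittedly cumbersome) equations $P^{(1)}(t,u,0)=0$ and the evenness requirements in the two final charts together pin down every one of the four coefficients $a_0, a_1, a_2, a_3$. This is a finite computational check, carried out in the same manner as in \autoref{proprecoverTak} but without the aid of the intermediate simplifications \eqref{condssofar}.
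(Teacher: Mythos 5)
Your proposal is correct and takes essentially the same approach as the paper: the paper's entire proof of this proposition is the assertion that it goes ``along exactly the same lines'' as \autoref{proprecoverTak}, i.e.\ precisely the computational procedure you describe of resolving indeterminacies with generic $a_k(t)$, confirming the point configuration of $\X^{\operatorname{HK}}$, and imposing quadratic regularisability in the terminal charts. Your identification of the one structural change --- that with the charts over $q=0$ absent, all constraints ($a_0=a_2=0$, $a_3''=0$, $a_1'=\kappa\, a_3 a_3'$) must be extracted from the two charts $(u_{14},v_{14})$, $(u_{20},v_{20})$ at infinity --- is the correct adaptation, and solving those constraints and normalising by an affine change of $t$ recovers the Halburd--Kecker equation exactly as in the Takasaki case.
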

\begin{remark}
If we assume at the beginning that 
\begin{equation*}
y''=\frac{3}{4}y^5+\sum_{k=0}^4a_k(t) y^k
\end{equation*}
and quadratic regularisability, the computations and conditions are a bit more involved and include $a_4$ but it is necessary to take $a_4=0$ for quadratic regularisability. 
\end{remark}


In a similar way, the definition of $n$-th order regularisablity can also be useful to eliminate whole classes of equations that have no algebroid solutions of certain types. 
Let us consider, for instance, equation
\begin{equation} \label{ex7}
y''=y^7+\sum_{k=0}^5 a_k(t)y^k
\end{equation}
or equation 
\begin{equation}\label{ex4}
y''=y^4+\sum_{k=0}^2 a_k(t)y^k
\end{equation}
and assume that the equivalent systems are cubic regularisable (that is, the function $w=y^3$ will be meromorphic). 
After resolving all indeterminancies through blowups, we verify that we cannot cubically regularise the systems for any choice of $a_k$'s. 
We remark that this can also be seen from local expansions of the solutions of the equations.

\section{Discussion} 

To summarise, the guiding idea for this paper is the interpretation of Okamoto's spaces of initial conditions for the Painlev\'e equations as a mechanism by which equations with a special singularity structure (the Painlev\'e property) can be associated to rational surfaces via an appropriate notion of regularisation (uniform foliation of the bundle of surfaces).
The main goal of this paper was to extend this construction to a wider class of equations with a weaker restriction on their singularity structures through a more general notion of regularisation. 

Note that the way we formulate this for equations with the algebro-Painlev\'e property allows the analogy to extend to the theory of global Hamiltonian structures of Painlev\'e equations on their defining manifolds.
In the Painlev\'e case, global polynomial Hamiltonian structures with respect to symplectic forms which are nowhere zero lead to regular differential systems on the defining manifolds.
In the case of the algebro-Painlev\'e systems we consider, this is generalised to a global polynomial Hamiltonian structure with respect to a symplectic form with certain zeroes allowed - where the symplectic form is non-zero the system is automatically regular, and where it has zeroes the the Hamiltonian functions being polynomial of a certain form ensures the system is regularisable.

We must remark that the equations from the second-order non-autonomous algebro-Painlev\'e class we consider are related to Painlev\'e equations or their solvable autonomous limits via algebraic transformations, so admittedly the geometric picture is unlikely to give new insights in terms of properties of the equations, since everything must map down to the Painlev\'e case where the geometric picture is well-understood. 
However the next step is to to perform similar constructions for examples which have the algebro-Painlev\'e property but which do not reduce to Painlev\'e. 
To this end, it will be interesting to find an algorithmic procedure to  find classes of equations with globally finite branching, that is, equations satisfying both a second-order differential equation $y''=F(t,y,y')$ and an algebraic relation with meromorphic coefficients $y^n+s_1(t)y^{n-1}+\ldots+s_{n-1}(t)y+s_n(t)=0$. 
One approach is to find a differential equation for $s_n$ (of second order and of degree higher than one) and find conditions on other coefficients of both the algebraic and differential equations necessary for $s_n$ to be a meromorphic function. 
The difficulty here lies in finding an equivalent system of first order differential equations for which the method of blowing up singularities works well. 

The construction presented here also has potential applications in proving the algebro-Painlev\'e property. 
In the case of the differential Painlev\'e equations, the regularisation on the defining manifold is not sufficient to prove the Painlev\'e property - having regular initial value problems everywhere is one part, but afterwards certain auxiliary functions are required to complete the proofs. 
The geometry of the defining manifold can to an extent give clues as to the construction of appropriate auxiliary functions as in \cite{Takanodefiningmanifolds}, so it is natural to ask if methods of proofs of the algebro-Painlev\'e property could be constructed purely from manifolds on which a system is algebraically regularisable.

Finally, the most interesting continuation of the ideas of this paper is a further generalisation to account for quasi-Painlev\'e equations, whose solutions exhibit locally but not globally finite branching about movable singularities, and in particular those with Hamiltonian structures.
A notion of regularisability, and properties of a global Hamiltonian structure on a bundle of rational surfaces which guarantees it, will be presented and illustrated in a subsequent manuscript.
The ability to associate rational surfaces to equations in the quasi-Painlev\'e case is particularly exciting, since the geometry of the associated surfaces would no longer just boil down to Painlev\'e, and geometry would once again have the potential to tell one everything about the equations.

\section*{Acknowledgements} 
GF acknowledges the support of the National Science Center (Poland) via grant OPUS2017/25 /B/BST1/00931. 
GF is also partially supported by the Ministry of Science and Innovation of Spain and the European Regional Development Fund (ERDF) [grant number XXX].
Part of this work was done while AS was supported by London Mathematical Society Early Career Fellowship ECF-2021-24, and AS gratefully acknowledges the support of the London Mathematical Society.
AS was supported during the preparation of this manuscript by a Japan Society for the Promotion of Science (JSPS) Postdoctoral Fellowship for Research in Japan and also acknowledges the support of JSPS KAKENHI Grant Number 21F21775.

\begin{appendices}

\section{Surfaces associated with the Okamoto system} \label{appendixA}

\subsection{Blowups} \label{appendixA1}

We begin with $(f,g)$ as coordinates on $\C^2$, which we compactify to $\p^1 \times \p^1$ by introducing $F = 1/f, G= 1/g$, so $\p^1 \times \p^1$ is covered by the four affine charts $(f,g), (F,g), (f,G), (F,G)$.
In performing the sequence of eight blowups to obtain $\X^{\Ok}$ we use the following convention for introducing charts.
After blowing up a point $p_i $ given in some affine chart $(x,y)$ by
 $(x,y) = (x^*, y^*)$, the exceptional divisor $E_i \cong \mathbb{P}^1$ replacing $p_i$ can be covered by two local affine coordinate charts $(u_i,v_i)$ and $(U_i,V_i)$ given by
\begin{equation} \label{blowupchartconvention}
\begin{aligned}
x- x^{*} &= u_i v_i ,&\quad &y-y^{*} = v_i,\\
x- x^{*} &= V_i ,&\quad &y-y^{*} = U_i V_i.
\end{aligned}
\end{equation}
In particular the exceptional divisor $E_i$ has in these charts local equations $v_i=0$, respectively $V_i=0$. Below we present the locations in coordinates of the points to be blown up to obtain $\X^{\Ok}$ as in \autoref{fig:surfaceokamoto}, with arrows $p_{i-1} \leftarrow p_i$ indicating that $p_i$ lies on the exceptional divisor arising from the blowup of $p_{i-1}$. We use the convention \eqref{blowupchartconvention} for charts, so e.g. $p_1 : (F,g) = (0,0)  \leftarrow p_{2} : (U_1, V_1) = ( -a_2,0)$ indicates that the coordinates $(U_1,V_1)$ in which we find $p_2$ are defined by $F=V_1$, $g = U_1 V_1$. 
\begin{equation*}
\begin{aligned}
&p_1 : (F,g) = (0,0)  &\leftarrow     &\quad p_{2} : (U_1, V_1) = ( -a_2,0) \\
&p_3: (f,G) = (0, 0)  &\leftarrow     &\quad p_{4} : (u_3, v_3) = ( a_1 ,0)\\
&p_5 : (F,G) = (0,0) &\leftarrow     &\quad p_{6} : (U_5, V_5) = (2,0) \\
&   & &\quad   \uparrow  & \\
&   & &\quad  p_{7} : (u_{6},v_{6})  = (-4 t, 0)  \\
&   & &\quad   \uparrow  & \\
&   & &\quad  p_{8} : (u_{7},v_{7}) = \left( 4(1 - a_1 - a_2 + 2 t^2) , 0\right)  \\
\end{aligned}
\end{equation*}
The inaccessible divisors whose support $D^{\Ok}$ is removed from $\X^{\Ok}$ as part of the construction of Okamoto's space are given by
\begin{equation}
\begin{aligned}
D_0 &= E_3 - E_4,\\
D_1 &= E_1 - E_2, \\
D_2 &= H_f - E_1 - E_5, \\
D_3 &= E_5 - E_6,
\end{aligned}
\qquad 
\begin{aligned}
D_4 &= E_6 - E_7, \\
D_5 &= E_7 - E_8, \\
D_6 &= H_g - E_3 - E_5, 
\end{aligned}
\end{equation}
where $H_f$, $H_g$ denote the total transforms of lines of constant $f$, $g$, respectively on $\p^1 \times\p^1$, so for example $D_2 = H_f - E_1 - E_5$ is the proper transform of the line $F = 0$. 
In particular 
\begin{equation}
D^{\Ok} = \bigcup_{i=0}^6 D_i,
\end{equation}
Note that we have used a slightly different enumeration to that in \cite{KNY}, which we have chosen to make the comparison of intersection graphs of inaccessible divisors on $\X^{\Tak}_m$, $\X^{\Ok}$ neater.

\subsection{Symplectic atlas}

While the charts introduced in the canonical way in the previous section are sufficient to provide an atlas for $E^{\Ok}$ to which the system \eqref{hamOkP4} extends to be everywhere regular, to obtain the atlas due to Matano, Matumiya and Takano \cite{MMT99} (see also \cite{ST97}) in which all Hamiltonians are polynomial, one additional change of coordinates is required.

The coordinates introduced after the blowups of $p_1, \dots, p_4$ remain unchanged from above, but after the blowup of $p_5$ one must make the local change of coordinates
\begin{equation}
U_5 \mapsto \tilde{U}_5 = \frac{1}{U_5}, \quad V_5 \mapsto \tilde{V}_5 = V_5,
\end{equation}
which corresponds to, after the blowups of $p_1,\dots, p_5$, changing the affine coordinate $U_5$ parametrising the exceptional divisor $E_5 \cong \p^1$. 
Implementing this, the rest of the charts are introduced in the canonical way as in the previous subsection, but relabeled $\tilde{u}_i, \tilde{v}_i$ to distinguish them from those above. 
These are given by the following:
\begin{equation*}
\begin{aligned}
&p_5 : (F,G) = (0,0) &\leftarrow     &\quad p_{6} : (\tilde{U}_5, \tilde{V}_5) = (1/2,0) \\
&   & &\quad   \uparrow  & \\
&   & &\quad  p_{7} : (\tilde{u}_{6},\tilde{v}_{6})  = (t, 0)  \\
&   & &\quad   \uparrow  & \\
&   & &\quad  p_{8} : (\tilde{u}_{7},\tilde{v}_{7}) = \left( a_1 + a_2 -1 , 0\right)  \\
\end{aligned}
\end{equation*}
After this, we obtain the charts from the atlas \eqref{atlasOk} by letting 
\begin{equation}
(x_1,y_1) = (u_2,v_2), \qquad (x_2,y_2) = (v_4,u_4), \qquad (x_3,y_3) = (\tilde{u}_8, \tilde{v}_8), 
\end{equation} 
which are related to the coordinates $(f,g)$ by the gluing \eqref{gluingOk}. The fact that these provide canonical coordinates for $\omega^{\Ok}$ as in \eqref{canonicalcoordsOk} is verified by direct calculation.
\subsection{Extended Okamoto surfaces}
The locations of the extra points $z_1,\dots z_8$ on $\X^{\Ok}$ to be blown up to obtain $\tilde{\X}^{\Ok}$ as indicated in \autoref{fig:extendedOkamotosurface} are given by
\begin{equation}
\begin{aligned}
z_1 : (u_3, v_3) &= (0,0), &&\quad  z_1 \in  
D_0 \cap  \left(H_f - E_3 \right),\\
z_2 : (U_4, V_4) &= (0,0), &&\quad  z_2 \in 
D_0 \cap E_4, \\
z_3 : (U_2, V_2) &= (0,0), &&\quad  z_3 \in 
D_1 \cap D_2, \\
z_4 : (u_1, v_1) &= (0,0), &&\quad  z_4 \in 
D_1 \cap E_2, \\
z_5 : (u_5, v_5) &= (0,0), &&\quad  z_5 \in 
D_3 \cap D_2, \\
z_6 : (U_6, V_6) &= (0,0), &&\quad  z_6 \in 
D_3 \cap D_4, \\
z_7 : (U_7, V_7) &= (0,0), &&\quad  z_7 \in 
D_4 \cap D_5, \\ 
z_8 : (U_8, V_8) &= (0,0), &&\quad  z_8 \in 
D_5 \cap E_8,
\end{aligned}
\end{equation}
where we the coordinates $(u_i,v_i)$, $(U_i, V_i)$ are those introduced using the point locations in \autoref{appendixA1}, so without the coordinate change on $E_5$ which was used to construct the symplectic atlas.

For computations on $\tilde{\X}^{\Ok}$ we use charts $(r_i,s_i)$ and $(R_i, S_i)$ to cover the exceptional divisor $L_i = \rho^{-1}(z_i)$, defined according to the same convention as $(u_i,v_i)$ and $(U_i,V_i)$, namely
\begin{equation}
\begin{aligned}
u_3 &= r_1 s_1, 	 &v_3 = s_1, 	&&\quad &&u_3 = S_1, 	&&v_3 = R_1 S_1, \\ 
U_4 &= r_2 s_2, 	 &V_4 = s_2, 	&&\quad &&U_4 = S_2, 	&&V_4 = R_2 S_2, \\
U_2 &= r_3 s_3, 	 &V_2 = s_3, 	&&\quad &&U_2 = S_3, 	&&V_2 = R_3 S_3, \\
u_1 &= r_4 s_4, 	 &v_1 = s_4, 	&&\quad &&u_1 = S_4, 	&&v_1 = R_4 S_4, \\
u_5 &= r_5 s_5, 	 &v_5 = s_5, 	&&\quad &&u_5 = S_5, 	&&v_5 = R_5 S_5, \\
U_6 &= r_6 s_6, 	 &V_6 = s_6, 	&&\quad &&U_6 = S_6, 	&&V_6 = R_6 S_6, \\
U_7 &= r_7 s_7, 	 &V_7 = s_7, 	&&\quad &&U_7 = S_7, 	&&V_7 = R_7 S_7, \\
U_8 &= r_8 s_8, 	 &V_8 = s_8, 	&&\quad &&U_8 = S_8, 	&&V_8 = R_8 S_8,
\end{aligned}
\end{equation}
so that $L_i$ has local equation $s_i=0$, respectively $S_i=0$.

\section{Surfaces associated with the Takasaki system}  \label{appendixB}

\subsection{Blowups}

Just as in the Okamoto case we begin with $(q,p)$ as coordinates on $\C^2$, which we compactify to $\p^1 \times \p^1$ by introducing $Q = 1/q, P= 1/p$, so $\p^1 \times \p^1$ is covered by the four affine charts $(q,p), (Q,p), (q,P), (Q,P)$.
We now present the locations of points $q_1,\dots, q_{20}$ as depicted in \autoref{fig:takasakisurface}, using precisely the same convention as in \autoref{appendixA} for introduction of charts after a blowup, recycling the notation $(u_i,v_i)$, $(U_i,V_i)$.
We first blowup up $q_1$, then find two distinct point $q_2$ and $q_4$ on the exceptional divisor $F_1$. 
Two blowups are then performed over each of these, the centres of which are given in coordinates as follows.
\begin{equation*}
\begin{gathered}
q_1 : (q,P) = (0,0) \quad \leftarrow \quad
\left\{
\begin{aligned}
q_2 : (u_1,v_1) &= (4 a_1, 0) ~&\leftarrow &\quad q_3 : (u_2,v_2) = (0,0) \\
q_4: (u_1,v_1) &= (-4a_1, 0) ~&\leftarrow &\quad q_5 : (u_4, v_4) = (0,0)
\end{aligned}
\right.
\end{gathered}
\end{equation*}
After four repeated blowups over $q_6$ we again find two distinct point on the exceptional divisor $F_8$:
\begin{equation*}
\begin{gathered}
q_6 : (Q,P) = (0,0), ~ \leftarrow ~ q_7 : (U_7,V_7)= (0,0) ~ \leftarrow ~ q_8 : (u_7,v_7) = (0,0) ~ \leftarrow ~
\left\{
\begin{aligned}
&q_9  \\
&q_{15}
\end{aligned}
\right.
\end{gathered}
\end{equation*}
We perform six blowups over each of these, of points given in coordinates over $q_9$ as given by the following:
\begin{equation*}
\begin{aligned}
&q_9 : (u_8,v_8) = (8,0) ~ \leftarrow  &~   &q_{10} : (u_9, v_9) = (0,0) \\
&  & &  \uparrow  & \\
&  & & q_{11} : (u_{10},v_{10})  = (-64 t, 0)  \\
&  & &  \uparrow  & \\
&  & & q_{12} : (u_{11},v_{11}) = (0, 0)  \\
&  & &  \uparrow  & \\
&  & & q_{13} : (u_{12},v_{12}) = ( 256(2 + 2 t^2 - a_1 -2 a_2), 0)  \\
&  & &  \uparrow  & \\
&  & & q_{14} : (u_{13},v_{13}) = (0, 0) 
\end{aligned}
\end{equation*}
Similarly over $q_{15}$ we perform six blowups:
\begin{equation*}
\begin{aligned}
&q_{15} : (u_8,v_8) = (-8,0) ~ \leftarrow  &~   &q_{16} : (u_{15}, v_{15}) = (0,0) \\
&  & &  \uparrow  & \\
&  & & q_{17} : (u_{16},v_{16})  = (64 t, 0)  \\
&  & &  \uparrow  & \\
&  & & q_{18} : (u_{17},v_{17}) = (0, 0)  \\
&  & &  \uparrow  & \\
&  & & q_{19} : (u_{18},v_{18}) = ( - 256(2 t^2 - a_1 -2 a_2), 0)  \\
&  & &  \uparrow  & \\
&  & & q_{20} : (u_{19},v_{19}) = (0, 0) 
\end{aligned}
\end{equation*}
We note that in the above point locations, there are several instances of points lying on the proper transforms of exceptional divisors from prior blowups. 
These are as follows, where we use $\operatorname{Bl}_{q_i \dots q_j}$ to denote the projection from the blowups of $q_1, \dots, q_j$.
\begin{itemize}
\item $q_7$ lies on the proper transform under $\operatorname{Bl}_{q_6}$ of the line $\{P=0\}$.
\item $q_8$ lies on the proper transform under $\operatorname{Bl}_{q_6 q_7}$ of the line $\{P=0\}$.
\end{itemize}
To describe the inaccessible divisors on the resulting surface $\X^{\Tak}$ we let $H_q$, $H_p$ be divisors giving the total transforms of lines of constant $q$, $p$ on respectively.
The inaccessible divisors are 
\begin{equation} \label{inaccdivcompsTak}
\begin{gathered}
H_q - F_1, \quad F_1 - F_2 - F_4, \quad F_2 - F_3, \quad F_4 - F_5, \\
H_p - F_1 - F_6 - F_7 - F_8, \quad F_8 - F_9 - F_{15}, \quad F_7 - F_8, \quad F_6 - F_7, \quad H_q - F_6, \\
F_9 - F_{10}, \quad F_{10} - F_{11},  \quad F_{11} - F_{12}, \quad F_{12} - F_{13}, \quad F_{13} - F_{14}, \\
F_{15} - F_{16},  \quad F_{16} - F_{17}, \quad F_{17} - F_{18}, \quad F_{18} - F_{19},  \quad F_{19} - F_{20},
\end{gathered}
\end{equation}
for which we do not introduce labels since they will change under the minimisation in the next subsection.

\subsection{Minimisation}

We note that among the components of $D^{\Tak}$ as listed in \eqref{inaccdivcompsTak}, there are two exceptional curves of the first kind, i.e. those of self-intersection $-1$. These are $H_q - F_1$, which is the proper transform of the line $\{ q = 0\}$, and $H_q - F_6$, which is the proper transform of the line $\{ Q = 0 \}$. 
Under the birational morphism contracting both of these, we note that another component $F_6 - F_7$ is sent to an exceptional curve of the first kind which we can then contract. 
Blowing this down, we again see another component becoming of self-intersection $-1$, namely $F_7 - F_8$.
Under the projection 
\begin{equation}
\sigma : \X^{\Tak} \rightarrow \X^{\Tak}_m ,
\end{equation} 
from this sequence of four blowdowns, we find no more components of $\sigma(D^{\Tak})$ which are contractable through blowdowns. 
This leads to the curves $C_i$, $i=1, \dots, 15$, as shown on \autoref{mintakasakisurface}, given explicitly as divisors on $\X^{\Tak}_m$ by
\begin{equation}
\begin{aligned}
C_1 &= F_2 - F_3, 					&& C_6 	= F_9 - F_{10}, 			&&& C_{11} 	= F_{15} - F_{16}, \\
C_2 &= F_4 - F_5,					&& C_7 	= F_{10} - F_{11},		&&& C_{12} 	= F_{16} - F_{17},  \\
C_3 &= \sigma(F_1 - F_2 - F_4), 		&& C_8 	= F_{11} - F_{12},		&&& C_{13} 	= F_{17} - F_{18}, \\
C_4 &= H_p - F_1 - F_6 - F_7 - F_8,		&& C_{9}	= F_{12} - F_{13}, 		&&& C_{14}	= F_{18} - F_{19},\\
C_5 &= \sigma(F_8 - F_9 - F_{14})		&& C_{10}	 = F_{13} - F_{14},		&&& C_{15} 	= F_{19} - F_{20},
\end{aligned}
\end{equation} 
where we write the image $\sigma(D)$ simply as $D$ if the divisor $D$ is disjoint from the curves that are blown down by $\sigma$, since in this case $D$ and $\sigma(D)$ are isomorphic.

\end{appendices}


\begin{thebibliography}{99}

\bibitem{BCR18}
M. Bertola, M. Cafasso, and V. Rubtsov, \emph{Noncommutative {P}ainlev\'e equations and systems of Calogero type}, Comm. Math. Phys. 363 (2018), no.~2, 503--530. 

\bibitem{CALOGERO} 
F. Calogero, \emph{Solution of the one-dimensional N-body problem with quadratic and/or inversely quadratic pair potentials}, J. Math. Phys. 12 (1971), 419 -- 436.

\bibitem{chiba}
H. Chiba, 
\emph{The first, second and fourth Painlev\'e equations on weighted projective spaces},
J. Differential Equations 260 (2016), no. 2, 1263--1313.

\bibitem{FUCHS05} R. Fuchs, \emph{Sur quelques \'equations diff\'erentielles lin\'eaires du second ordre}, C. R. Acad. Sci. (Paris) 141 (1905), 555--588.

\bibitem{GLS} V.I. Gromak, I. Laine and S. Shimomura,  \emph{Painlev\'e Differential Equations in the Complex Plane}, De Gruyter Studies in Mathematics, 28, Walter de Gruyter \& Co., Berlin, 2002.

\bibitem{HK} R. Halburd and T. Kecker, \emph{Local and global finite branching of solutions of ordinary differential equations}, Proceedings of the Workshop on Complex Aalysis and its Applications to Differential and Functional Equations, 57 -- 58, Publ. Univ. East. Finl. Rep. Stud. For. Nat. Sci., 14, Univ. East. Finl., Fac. Sci. For. Joensuu (2014).

\bibitem{HL1} A. Hinkkanen and I. Laine, \emph{Solutions of the first and second Painlev\'e equations are meromorphic}, J. Anal. Math. 79 (1999), 345--377.

\bibitem{HL2} A. Hinkkanen and I. Laine, \emph{Solutions of a modified third Painlev\'e equation are meromorphic}, J. Anal. Math. 85 (2001), 323--337.

\bibitem{HL3} A. Hinkkanen and I. Laine, \emph{Solutions of a modified fifth Painlev\'e equation are meromorphic}, 
Papers on analysis, 133--146, Rep. Univ. Jyv\"askyl\"a Dep. Math. Stat., 83, Univ. Jyv\"askyl\"a, Jyv\"askyl\"a, 2001.

\bibitem{HL4} A. Hinkkanen and I. Laine, \emph{The meromorphic nature of the sixth Painlev\'e transcendents}, J. Anal. Math. 94 (2004), 319--342.

\bibitem{INOZM} V. I. Inozemtsev and D.V. Meshcheryakov, \emph{Extension of the class of integrable dynamical systems connected with semisimple Lie algebras}, Lett. Math. Phys. 9 (1985), no. 1, 13--18.

\bibitem{INOZ} V. I. Inozemtsev, \emph{Lax representation with spectral parameter on a torus for integrable particle systems}, Lett. Math. Phys. 17 (1989), no. 1, 11--17.

\bibitem{orbifold} K. Iwasaki and S. Okada, 
\emph{On an orbifold Hamiltonian structure for the first Painlev\'e equation}, 
J. Math. Soc. Japan 68 (2016), no. 3, 961--974.

\bibitem{KNY} K. Kajiwara, M. Noumi, and Y. Yamada, \emph{Geometric aspects    of {P}ainlev\'e equations}, J. Phys. A 50 (2017), no.~7, 073001,  164 pp.

\bibitem{FK} T. Kecker and  G. Filipuk, \emph{Regularising transformations for complex differential equations with movable algebraic singularities}, Math. Phys. Anal. Geom. 25 (2022), no. 1, Paper No. 9, 43 pp.

\bibitem{LO}
A. M. Levin and M. A. Olshanetsky, \emph{Painlev\'e-Calogero correspondence}, Calogero-Moser-Sutherland models (Montr\'eal, QC, 1997), 313--332, CRM Ser. Math. Phys., Springer, New York, 2000.

\bibitem{MANIN} Yu. I. Manin, \emph{Sixth Painlev\'e equation, universal elliptic curve, and mirror of }$\mathbb{P}^2$, Amer. Math. Soc. Transl. Ser. 2, 186, Adv. Math. Sci., 39, Amer. Math. Soc., Providence, RI, 1998.

\bibitem{MMT99} T. Matano, A. Matumiya, and K. Takano, \emph{On some Hamiltonian structures of Painlev\'e systems. II}, J. Math. Soc. Japan 51 (1999), no. 4, 843--866.

\bibitem{M97}  A. Matumiya, \emph{On some Hamiltonian structures of Painlev\'e systems III}, Kumamoto J. Math.  10 (1997), 45--73.

\bibitem{Okamoto french} K. Okamoto, \emph{Sur les feuilletages associ\'es aux \'equations du second ordre \`a points critiques fixes de P. Painlev\'e}, (French) [On foliations associated with second-order Painlev\'e equations with fixed critical points], Japan. J. Math. (N.S.) 5 (1979), no. 1, 1--79. 

\bibitem{OkI} K. Okamoto, \emph{Polynomial Hamiltonians associated with Painlev\'e equations, I}, Proc. Japan Acad. Ser. A Math. Sci. 56 (1980), 264--268.

\bibitem{HUKUHARA} K. Okamoto and K. Takano, \emph{The proof of the Painlev\'e property by Masao Hukuhara}, Funkcial. Ekvac. 44 (2001), no. 2, 201--217.

\bibitem{PAINLEVE} P. Painlev\'e, \emph{Sur les \'equations diff\'erentielles du second ordre \`a points critique fix\'es}, C.R. Acad. Sci. (Paris) 143 (1906), 1111--1117.

\bibitem{SAKAI2001} H. Sakai, \emph{Rational surfaces associated with affine root systems and geometry of the Painlev\'e equations}, Comm. Math. Phys. 220 (2001), no. 1, 165--229.

\bibitem{ST97} T. Shioda and  K. Takano, \emph{On some Hamiltonian structures of Painlev\'e systems. I}, Funkcial. Ekvac. 40 (1997), no. 2, 271--291.

\bibitem{SHIMOMURA} S. Shimomura, \emph{Proofs of the Painlev\'e property for all Painlev\'e equations}, Japan. J. Math. (N.S.) 29 (2003), no. 2, 159--180.

\bibitem{STEINMETZ} N. Steinmetz, \emph{On Painlev\'e's equations I, II and IV}, J. Anal. Math. 82 (2000), 363--377. 

\bibitem{Takanodefiningmanifolds} K. Takano, \emph{Defining manifolds for Painlev\'e equations}, Toward the exact WKB analysis of differential equations, linear or non-linear (Kyoto, 1998), 204, 261--269, Kyoto Univ. Press, Kyoto, 2000.

\bibitem{TAKASAKIKYOTO} K. Takasaki, \emph{Painlev\'e--Calogero correspondence}, Analysis of Painlev\'e equations (Kyoto, 2000) S\=urikaisekikenky\=usho K\=oky\=uroku No. 1203 (2001), 71--88. 

\bibitem{Takasaki} K. Takasaki, \emph{Painlev\'e--Calogero correspondence revisited}, J. Math. Phys. 42 (2001), no. 3, 1443--1473. 

\end{thebibliography}
\end{document}